\numberwithin{equation}{section}
\newtheorem{theorem}{Theorem}
\newtheorem{lemma}[theorem]{Lemma}
\newtheorem{proposition}[theorem]{Proposition}
\newtheorem{remark}[theorem]{Remark}
\newtheorem{definition}[theorem]{Definition}
\numberwithin{theorem}{section}
\newcommand\proof{\emph{Proof. }}
\newcommand\bE{{\mathbb E}}
\newcommand\bF{{\mathbb F}}
\newcommand\bM{{\mathbb M}}
\newcommand\bP{{\mathbb P}}
\newcommand\bQ{{\mathbb Q}}
\newcommand\bR{{\mathbb R}}
\newcommand\lam{\lambda}
\newcommand\cL{{\cal L}}
\newcommand\cF{{\cal F}}
\newcommand\cB{{\cal B}}
\newcommand\cQ{{\cal Q}}
\newcommand\cT{{\cal T}}
\newcommand\cM{{\cal M}}
\newcommand\cW{{\cal W}}
\newcommand\Ltwo{{\mathrm{L}^2}}
\newcommand\Ltwoloc{{\mathrm{L}^2_{\mathrm{loc}}}}
\newcommand\p{^\prime}
\newcommand\pp{^{\prime\prime}}
\newcommand\sig{\sigma}
\newcommand\bstar{\begin{eqnarray*}}
\newcommand\estar{\end{eqnarray*}}
\newcommand\be{\begin{equation}}
\newcommand\ee{\end{equation}}
\newcommand\bea{\begin{eqnarray}}
\newcommand\eea{\end{eqnarray}}
\newcommand\1{{\bf 1}}
\renewcommand\ni{\noindent}
\newcommand\clc{,\ldots ,}
\newcommand\prv{P^{\text{\tiny RV}(w)}_T}
\newcommand{\R}{\mathbb{R}}
\newcommand{\id}{\mathrm{d}}
\newcommand{\func}{\lambda}
\newcommand{\nmax}{\bar{n}}
\newcommand{\nmin}{\underline{n}}
\newcommand\ba{\mathbf{a}}
\newcommand\bb{\mathbf{b}}
\newcommand{\bx}{\mathbf{x}}
\newcommand{\by}{\mathbf{y}}
\newcommand{\cy}{\check{\by}}
\newcommand{\bK}{\mathbf{K}}
\renewcommand{\t}{\tilde}
\newcommand\tr{^{\mbox{\tiny\textit{T}}}}
\newcommand\da{^\dagger}
\newcommand\pathspc{\mathcal{P}}
\title{\textbf{\Large Arbitrage bounds for prices of weighted variance swaps}
 \footnote{This paper was previously circulated under the title ``Arbitrage Bounds for Prices of Options on Realized Variance."}
}
\author{{\sc Mark Davis}\thanks{Department of Mathematics, Imperial College London, London SW72AZ, UK ({\tt mark.davis@imperial.ac.uk}).}
        \and {\sc Jan Ob\l\'oj}\thanks{Mathematical Institute, Oxford-Man Institute of Quantitative Finance and St John's College, University of Oxford, Oxford OX1 3LB, UK ({\tt obloj@maths.ox.ac.uk})}
 \and {\sc Vimal Raval}\thanks{Imperial College London. Work supported by EPSRC under a Doctoral Training Award.}}
\date{}
\begin{document}
\maketitle

\begin{abstract}
We develop a theory of robust pricing and hedging of a weighted variance swap given market prices for a finite number of co--maturing put options. We assume the put option prices do not admit arbitrage and deduce no-arbitrage bounds on the weighted variance swap along with super- and sub- replicating strategies that enforce them. We find that market quotes for variance swaps are surprisingly close to the model-free lower bounds we determine. We solve the problem by transforming it into an analogous question for a European option with a convex payoff. The lower bound becomes a problem in semi-infinite linear programming which we solve in detail. The upper bound is explicit. 

We work in a model-independent and probability-free setup. In particular we use and extend F\"ollmer's pathwise stochastic calculus. Appropriate notions of arbitrage and  admissibility are introduced. This allows us to establish the usual hedging relation between the variance swap and the `log contract' and similar connections for weighted variance swaps. Our results take form of a FTAP: we show that the absence of (weak) arbitrage is equivalent to the existence of a classical model which reproduces the observed prices via risk--neutral expectations of discounted payoffs.

\medskip\textsc{Key Words:} Weighted variance swap, weak arbitrage, arbitrage conditions, model-independent bounds, pathwise It\^o calculus, semi-infinite linear programming, fundamental theorem of asset pricing, model error.

\end{abstract}



\pagestyle{myheadings}
\thispagestyle{plain}
\markboth{M Davis, J. Ob\l\'oj and V. Raval}{Arbitrage bounds for prices of options on realised variance}

\section{Introduction}\label{sec:intro}
In the practice of quantitative finance, the risks of `model error' are by now universally appreciated. Different models, each perfectly calibrated to market prices of a set of liquidly traded instruments, may give widely different prices for contracts outside the calibration set. The implied hedging strategies, even for contracts within the calibration set, can vary from accurate to useless, depending on how well the model captures the sample path behaviour of the hedging instruments.
 This fundamental problem has motivated a large body of research ranging from asset pricing through portfolio optimisation, risk management to macroeconomics (see for example\ \cite{Cont:06}, \cite{FollmerSchiedWeber:09}, \cite{AcciaioFollmerPenner:11}, \cite{HansenSargent:10} and the references therein).  Another stream of literature, to which this paper is a contribution, develops a robust approach to mathematical finance, see \cite{hobson_lookback}, \cite{dh05}, \cite{CoxObloj:11}. In contrast to the classical approach, no probabilistic setup is assumed. Instead we suppose we are given current market quotes for the underlying assets and some liquidly traded options. We are interested in no-arbitrage bounds on a price of an option implied by this market information. Further, we want to understand if, and how, such bounds may be enforced, in a model-independent way, through hedging. 

In the present paper we consider robust pricing and hedging of a weighted variance swap when prices of a finite number of co--maturing put options are given. 
If $(S_t, t\in[0,T])$ denotes the price of a financial asset, the \emph{weighted realized variance} is defined as
\be 
RV_T=\sum_{i=1}^nh(S_{t_i})\left(\log\frac{S_{t_i}}{S_{t_{i-1}}}\right)^2,\label{rv}
\ee
where $t_i$ is a pre-specified sequence of times $0=t_0<t_1<\cdots<t_n=T$, in practice often daily sampling, and $h$ is a given weight function. A \emph{weighted variance swap} is a forward contract in which cash amounts equal to $A\times RV_T$ and $A\times P^{\text{\tiny RV}}_T$ are exchanged at time $T$, where $A$ is the dollar value of one variance point and $P^{\text{\tiny RV}}_T$ is the variance swap `price', agreed at time 0. Three representative cases considered in this paper are the \emph{plain vanilla variance swap} $h(s)\equiv 1$, the \emph{corridor variance swap} $h(s)=\1_I(s)$ where $I$ is a possibly semi-infinite interval in $\bR^+$, and the \emph{gamma swap} in which $h(s)=s$. The reader can consult  \citet{gatheral} for information about variance swaps and more extended discussion of the basic facts presented below. In this section we restrict the discussion to the vanilla variance swap; we return to the other contracts in Section \ref{sec-wvs}.

In the classical approach, if we model $S_t$ under a risk-neutral measure $\bQ$ as\footnote{For example, $S_t=(e^{(r-q)t}S_0)(e^{\sig W_t-\frac{1}{2}\sig^2t})$ in the Black-Scholes model, with the conventional notation.} $S_t=F_t\exp(X_t-\frac{1}{2}\langle X\rangle_t)$, where $F_t$ is the forward price (assumed to be continuous and of bounded variation) and $X_t$ is a continuous martingale with quadratic variation process $\langle X\rangle_t$, then $S_t$ is a continuous semimartingale and
\[\log\frac{S_{t_i}}{S_{t_{i-1}}}=(g(t_i)-g(t_{i-1}))+(X_{t_i}-X_{t_{i-1}}),\]
with $g(t)=\log(F_{t_i})-\frac{1}{2}\langle X\rangle_{t_i}$.
For $m=1,2\ldots$ let $\{s^m_i, i=0\clc k_m\}$ be the ordered set of stopping times in $[0,T]$ containing the times $t_i$ together with times 
$\tau^m_0=0,\tau_k^m=\inf\{t>\tau^m_{k-1}:|X_t-X_{\tau_{k-1}^m}|>2^{-m}\}$ wherever these are smaller than $T$. If $RV^m_T$ denotes the realized variance computed as in \eqref{rv} but using the times $s^m_i$, then $RV^m_T\to\langle X\rangle_T=\langle\log S\rangle_T$ almost surely, see \citet{rogwil}, Theorem IV.30.1. For this reason, most of the pricing literature on realized variance studies the continuous-time limit $\langle X\rangle_T$ rather than the finite sum \eqref{rv} and we continue the tradition in this paper.

The key insight into the analysis of variance derivatives in the continuous limit was provided by \citet{neuberger_94}. 
We outline the arguments here, see Section \ref{sec-wvs} for all the details. With $S_t$ as above and $f$ a $C^2$ function, it follows from the general It\^o formula that $Y_t=f(S_t)$ is a continuous semimartingale with $\id\langle Y\rangle_t=(f'(S_t))^2\id\langle S\rangle_t$. In particular, with the above notation
\bstar \id(\log S_t)&=& \frac{1}{S_t}\id S_t-\frac{1}{2S_t^2}\id\langle S\rangle_t
= \frac{1}{S_t}\id S_t -\frac{1}{2}\id\langle \log S\rangle_t,\estar
so that
\be \langle \log S\rangle_T=2\int_0^T\frac{1}{S_t}\id S_t-2\log(S_T/S_0)
\label{log}\ee
This shows that the realized variation is replicated by a portfolio consisting of a self-financing trading strategy that invests a constant \$2$D_T$ in the underlying asset $S$ together with a European option whose exercise value at time $T$ is $\lam(S_T)=-2\log(S_T/F_T)$. Here $D_T$ is the time-$T$ discount factor.
Assuming that the stochastic integral is a martingale, \eqref{log} shows that the risk-neutral value of the variance swap rate $P^{\text{\tiny RV}}_T$ is
\be P^{\text{\tiny RV}}_T  =\bE\Big[\langle\log S\rangle_T\Big]=-2\bE[\log(S_T/F_T)]=2\frac{1}{D_T}P_{\log},\label{ptiny}\ee 
that is, the variance swap rate is equal to the forward  value $P_{\log}/D_T$  of two `log contracts'---European options with exercise value $-\log(S_T/F_T)$, a convex function of $S_T$. \eqref{ptiny} gives us a way to evaluate $P^{\text{\tiny RV}}_T$ in any given model. The next step is to rephrase $P_{\log}$ in terms of call and put prices only.

Recall that for a convex function $f:\bR^+\to\bR$ the recipe $f''(a,b]=f'_+(b)-f'_+(a)$ defines a positive measure $f''(\id x)$ on $\cB(\bR^+)$, equal to $f''(x)\id x$ if $f$ is $C^2$. We then have the Taylor formula
\be f(x)=f(x_0)+f'_+(x_0)(x-x_0)+\int_0^{x_0}(y-x)^+\nu(\id y)+\int_{x_0}^\infty(x-y)^+f''(\id y).\label{tf}\ee
 Applying this formula with $x=S_T, x_0=F_T$ and $f(s)=\log(s/S_0)$, and combining with \eqref{log} gives
\be \frac{1}{2}\langle\log S\rangle_T=\int_0^T\frac{1}{S_t}\id S_t -\log(F_T/S_0) +1-S_T/F_T+\int_0^{F_T}\frac{(K-S_T)^+}{K^2}\id K
+\int_{F_T}^\infty\frac{(S_T-K)^+}{K^2}\id K.\label{ph}\ee
Assuming that puts and calls are available for all strikes $K\in\bR^+$ at traded prices $P_K,C_K$, \eqref{ph} provides a perfect hedge for the realized variance in terms of self-financing dynamic trading in the underlying (the first four terms) and a static portfolio of puts and calls and hence uniquely specifies the variance swap rate
$P^{\text{\tiny RV}}_T $ as
 \be P^{\text{\tiny RV}}_T =\frac{2}{D_T}\int_0^\infty\frac{1}{K^2}(P_K\1_{(K\leq F_T)}+C_K\1_{(K>F_T)})\id K.\label{prv_val}\ee

Our objective in this paper is to investigate the situation where, as in reality, puts and calls are available at only a finite number of strikes. In this case we cannot expect to get a unique value for $ P^{\text{\tiny RV}}_T$ as in \eqref{prv_val}. However, from \eqref{log} we expect to  obtain arbitrage bounds on $P^{\text{\tiny RV}}_T$ from bounds on the value of the log contract---a European option whose exercise value is the convex function $-\log(S_T/F_T)$. It will turn out that the weighted variance swaps we consider are associated in a similar way with other convex functions. In  \citet{dh05} conditions were stated under which a given set of prices $(P_1\clc P_n)$ for put options with strikes $(K_1\clc K_n)$ all maturing at the same time $T$ is consistent with absence of arbitrage.
Thus the first essential question we have to answer is: given a set of put prices consistent with absence of arbitrage and a convex function $\lam_T$, what is the range of prices $P_\lam$ for a European option with exercise value $\lam_T(S_T)$ such that this option does not introduce arbitrage when traded in addition to the existing puts? We answer this question in Section \ref{sec-cp} following a statement of the standing assumptions and a summary of the \cite{dh05} results in Section \ref{sec:probform}. 

The basic technique in Section \ref{sec-cp} is to apply the duality theory of semi-infinite linear programming to obtain the values of the most expensive sub-replicating portfolio and the cheapest super-replicating portfolio, where the portfolios in question contain static positions in the put options and dynamic trading in the underlying asset. The results for the lower and upper bounds are given in Propositions \ref{lb} and \ref{ub2} respectively. (Often, in particular in the case of plain vanilla variance swaps, the upper bound is infinite.) For completeness, we state and prove the fundamental Karlin-Isii duality theorem in an appendix, Appendix \ref{sec-app}.
With these results in hand, the arbitrage conditions are stated and proved in Theorem \ref{convex_payoff}.

In this first part of the paper---Sections \ref{sec:probform} and \ref{sec-cp}---we are concerned exclusively with European options whose exercise value depends only on the asset price $S_T$ at the common exercise time $T$. In this case the sample path properties of $\{S_t, t\in[0,T]\}$ play no role and the only relevant feature of a `model' is the marginal distribution of $S_T$. For this reason we make no assumptions about the sample paths. When we come to the second part of the paper, Sections \ref{sec-wvs} and \ref{comp}, analysing weighted variance swaps, then of course the sample path properties are of fundamental importance. Our minimal assumption is that $\{S_t, t\in[0,T]\}$
is continuous in $t$. However, this is not enough to make sense out of the problem. The connection between the variance swap and the `log contract', given at \eqref{log}, is based on stochastic calculus, assuming that $S_t$ is a continuous semimartingale. In our case the starting point is simply a set of prices. No model is provided, but we have to define what we mean by the continuous-time limit of the realized variance and by continuous time trading. An appropriate framework is the pathwise stochastic calculus of \citet{Fol81} where the quadratic variation and an It\^o-like integral are defined for a certain subset $\cQ$ of the space of continuous functions $C[0,T]$. Then \eqref{log} holds for $S(\cdot)\in\cQ$ and we have the connection we sought between variance swap and log contract. For weighted variance swaps the same connection exists, replacing `$-\log$' by some other convex function $\lam_T$, as long as the latter is $C^2$. In the case of the corridor variance swap, however, $\lam''_T$ is discontinuous and we need to restrict ourselves to a smaller class of paths $\cL$ for which the It\^o formula is valid for functions in the Sobolev space $\cW_2$. All of this is consistent with models in which $S_t$ is a continuous semimartingale, since then $\bP\{\omega:S(\cdot,\omega)\in\cL\}=1$. We discuss these matters in a separate appendix, Appendix \ref{sec-pathwise}.

With these preliminaries in hand we formulate and prove, in Section \ref{sec-wvs} the main result of the paper, Theorem \ref{thm:wVS_arb}, giving conditions under which the quoted price of a weighted variance swap is consistent with the prices of existing calls and puts in the market, assuming the latter are arbitrage-free. When the conditions fail there is a weak arbitrage, and we identify strategies that realize it.

In mathematical finance it is generally the case that option bounds based on super- and sub-replication are of limited value because the gap is too wide. However, here we know that as the number of put options increases then in the limit there is only one price, namely, for a vanilla variance swap, the number $P^{\text{\tiny RV}}_T$ of \eqref{prv_val}, so we may expect that in a market containing a reasonable number of liquidly-traded puts the bounds will be tight. In Section \ref{comp} we present data from the S\&P500 index options market which shows that our computed lower bounds are surprisingly close to the quoted variance swap prices.

This paper leaves many related questions unanswered. In particular we would like to know what happens when there is more than one exercise time $T$, and how the results are affected if we allow jumps in the price path. The latter question is considered in the small time-to-expiry limit in the recent paper  \citet{KRMK10}. A paper more in the spirit of ours, but taking a complementary approach, is \citet{HobsonKlimmek:11}[HK] . In our paper the variance swap payoff is defined by the continuous-time limit, and we assume that prices of a finite number of traded put and/or call prices are known. By contrast, HK deal with the real variance swap contract (i.e., discrete sampling) but assume that put and call prices are known for all strikes $K\in\bR^+$, so in practice the results will depend on how the traded prices are interpolated and extrapolated to create the whole volatility surface. Importantly, HK allow for jumps in the price process; this is indeed the main focus of their work. The price bounds they obtain are sharp in the continuous-sampling limit. Combining our methods with those of HK would be an interesting, and possibly quite challenging, direction for future research, see Remark \ref{rk:Hobson_Klimmek} below.

Exact pricing formulas for a wide variety of contracts on discretely-sampled realized variance, in a general L\'evy process setting, are provided in \cite{CroDav11}. 
\section{Problem formulation}\label{sec:probform}
Let $(S_t, t\in[0,T])$ be the price of a traded financial asset, which is assumed non-negative: $S_t\in\bR^+=[0,\infty)$. In addition to this underlying asset, various derivative securities as detailed below are also traded. All of these derivatives are European contracts maturing at the same time $T$. The present time is 0. We make the following standing assumptions as in  \citet{dh05}:

\ni(i) The market is frictionless: assets can be traded in arbitrary amounts, short or long, without transaction costs, and the interest rates for borrowing and lending are the same.

\ni(ii) There is no interest rate volatility. We denote by $D_t$ the market discount factor for time $t$, i.e. the price at time $0$ of a zero-coupon bond maturing at $t$.

\ni(iii) There is a uniquely-defined forward price $F_t$ for delivery of one unit of the asset at time $t$.  This will be the case if the asset pays no dividends or if, for example, it has a deterministic dividend yield. We let $\Gamma_t$ denote the number of shares that will be owned at time $t$ if dividend income is re-invested in shares; then $F_0=S_0$ and $F_t=S_0/(D_t\Gamma_t)$ for $t>0$.

We suppose that $n$ put options with strikes $0<K_1<\ldots<K_n$ and maturity $T$ are traded, at time-0 prices $P_1\clc P_n$. In \citep{dh05}, the arbitrage relations among these prices were investigated. The facts are as follows. A \emph{static portfolio} $X$ is simply a linear combination of the traded assets, with weights $\pi_1\clc \pi_n,\phi,\psi$ on the options, on the underlying asset and on cash respectively, it being assumed that dividend income is re-invested in shares. The value of the portfolio at maturity is
\be X_T=\sum_{i=1}^n\pi_i[K_i-S_T]^++\phi \Gamma_T S_T+\psi D_T^{-1}\label{XT}\ee
and the set-up cost at time 0 is
\be X_0=\sum_{i=1}^n\pi_iP_i+\phi S_0+\psi.\label{X0}\ee
Note that it does not make any sense a priori to speak about $X_t$ -- the value of the portfolio at any intermediate time -- as this is not determined by the market input and we do not assume the options are quoted in the market at intermediate dates $t\in (0,T)$. To make sense to statements like ``$X_T$ is non-negative" we need to specify the universe of scenarios we consider. Later, in Section \ref{sec-wvs}, this will mean the space of of possible paths $(S_t:t\leq T)$. However here, since we only allow static trading as in \eqref{XT}, we essentially look at a one--period model and we just need to specify the possibly range of values for $S_T$ and we suppose that $S_T$ may take any value in $[0,\infty)$. 

A \emph{model} $\cM$ is a filtered probability space $(\Omega,\bF=(\cF_t)_{t\in[0,T]},\bP)$  together with a positive $\bF$-adapted process $(S_t)_{t\in[0,T]}$ such that $S_0$ coincides with the given time-0 asset price. Given market prices of options, we say that $\cM$  is a \emph{market model} for these options if $M_t=S_t/F_t$ is an $\bF$-martingale (in particular, $S_t$ is integrable) and the market prices equal to the $\bP$--expectations of discounted payoffs. In particular, $\cM$ is a market model for put options if $P_i=D_T \bE[K_i-S_T]^+$ for $i=1\clc n$. 
We simply say that $\cM$ is a market model if the set of market options with given prices is clear from the context.
It follows that in a market model we have joint dynamics on $[0,T]$ of all assets' prices such that the initial prices agree with the market quotes and the discounted prices are martingales. By the (easy part of the) First Fundamental Theorem of Asset Pricing \citep{DelbaenSchachermayer:04} the dynamics do not admit arbitrage. 

Our main interest in the existence of a market model and we want to characterise it in terms of the given market quoted prices. This requires notions of arbitrage in absence of a model. We say that there is \emph{model-independent arbitrage} if 
one can form a portfolio with a negative setup cost and a non-negative payoff. There is \emph{weak arbitrage} if for any model there is a static portfolio $(\pi_1\clc \pi_n,\phi,\psi)$ such that $X_0\leq 0$ but $\bP(X_T\geq 0)=1$ and $\bP[X_T>0]>0$. In particular, a model independent arbitrage is a special case of weak arbitrage, as in \citet{co09dnt}. Throughout the paper, we say that prices are \emph{consistent with absence of arbitrage} when they do not admit a weak arbitrage.

It is convenient at this point to move to normalized units. If for $i=1\clc n$ we define
\be p_i=\frac{P_i}{D_T F_T },\qquad k_i=\frac{K_i}{F_T },\label{nun}\ee
then, in a market model, we have $P_i=D_T F_T \bE[K_i/F_T -M_T ]^+$ and hence
\[ p_i=\bE[k_i-M_T ]^+.\]

Also, we can harmlessly introduce another put option with strike $k_0=0$ and price $p_0=0$. Let $\underline{n}=\max\{i:p_i=0\}$,\,$\overline{n}=\inf\{i:p_i=k_i-1\}(=+\infty \mbox{ if } \{\cdots\}=\emptyset)$ and $\bK=[k_{\nmin},k_{\nmax}]$, where it is hereafter understood as $\bK=[k_{\nmin},\infty)$ if $\nmax=\infty$.

The main result of \citet[Thm.~3.1]{dh05}, which we now recall, says that there is no weak arbitrage if and only if there exists a market model. The conditions are illustrated in Figure \ref{puts}.

\begin{proposition}\label{prop:DH}
Let $r:[0,k_n]\to\bR^+$ be the linear interpolant of the points $(k_i,p_i), i=0\clc n$. Then there exists a market model if and only if $r(\cdot)$ is a non-negative, convex, increasing function such that $r(0)=0, r(k)\geq [k-1]^+$ and $r'_-(k_{\overline{n}\wedge n})<1$, where $r'_-$ denotes the left-hand derivative. If these conditions hold except that $\nmax=\infty$ and $r'_-(k_n)=1$ then there is weak arbitrage. Otherwise there is a model-independent arbitrage. Furthermore, if a market model exists then one may choose it so that the distribution of $S_T$ has finite support.
\end{proposition}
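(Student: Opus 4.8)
The plan is to reduce everything to the convex ``put--price function'' $k\mapsto\bE_\mu[(k-M_T)^+]$ attached to a (candidate) model and its law $\mu$ of $M_T$, and to meet the non--existence cases with explicit static portfolios. For the ``only if'' direction I would argue as follows. If a market model exists, let $\mu$ be the law of $M_T$; then $M_T\ge 0$ and $\bE_\mu M_T=M_0=1$, so $P(k):=\bE_\mu[(k-M_T)^+]$ is non--negative, non--decreasing and convex, with $P(0)=0$ and, by Jensen, $P(k)\ge(\bE_\mu[k-M_T])^+=(k-1)^+$. Since $r$ is the chordwise interpolant of the convex function $P$ at the nodes $k_i$, it is itself convex and increasing, $r(0)=p_0=0$, and $r\ge P\ge(k-1)^+$ on $[0,k_n]$. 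The left--slope condition follows from $P'_-(k)=\mu([0,k))$ once one separates the two cases in the definition of $\overline n$: if $\overline n=\infty$ then $p_n>k_n-1$ forces $\mu((k_n,\infty))>0$, hence $r'_-(k_n)\le P'_-(k_n)<1$; if $\overline n\le n$ then $p_{\overline n}=k_{\overline n}-1$ forces $M_T\le k_{\overline n}$ a.s., and minimality of $\overline n$ together with non--negativity of prices excludes $r'_-(k_{\overline n})=1$.

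For the converse the plan is to realise $r$ as the restriction to $[0,k_n]$ of a genuine put--price function. First I would extend $r$ to a convex $\widehat r$ on $[0,\infty)$ with $\widehat r(0)=0$, $\widehat r\ge(k-1)^+$ and $\widehat r(k)-(k-1)\to0$ as $k\to\infty$: when $\overline n\le n$, by continuing $r$ along the intrinsic line $y=k-1$ from $k_{\overline n}$ onwards (the hypotheses, via the bookkeeping with $\underline n,\overline n,\bK$, ensure $r$ already runs along this line on $[k_{\overline n},k_n]$); when $\overline n=\infty$, by continuing $r$ past $k_n$ with its final slope $r'_-(k_n)<1$ and then placing the residual mass at the single point $x^{*}>k_n$ fixed by the requirement that the mean be $1$, where finiteness and $x^{*}>k_n$ use $r'_-(k_n)<1$ and $p_n>k_n-1$. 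Then $\mu:=\widehat r''$ (distributional second derivative) is a finitely supported probability measure on $[0,\infty)$ with $\int x\,\mu(\id x)=1$ and $\int(k_i-x)^+\mu(\id x)=\widehat r(k_i)=p_i$; realising $M_T\sim\mu$ on any stochastic basis, setting $S_t=F_t$ for $t<T$ and $S_T=F_TM_T$ with the filtration that only reveals $M_T$ at $T$, makes $S_\cdot/F_\cdot$ a martingale reproducing the put prices, a market model with $S_T$ of finite support.

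In the borderline case $\overline n=\infty$, $r'_-(k_n)=1$, the necessity argument already forbids a market model, and I would show weak but not model--independent arbitrage. No model--independent arbitrage: the measures $\mu_\varepsilon=(1-\varepsilon)\mu+\varepsilon\delta_{x_\varepsilon}$ (with $\mu$ obtained by extending $r$ past $k_n$ at slope $1$, and $x_\varepsilon$ chosen so the mean is $1$) price the puts at $(1-\varepsilon)p_i$, so integrating any candidate arbitrage against $\mu_\varepsilon$ and letting $\varepsilon\downarrow0$ forces its set--up cost to be $\ge0$. For weak arbitrage, fix a model with $\nu=\mathrm{law}(M_T)$: if $\nu$ charges $(k_{n-1},\infty)$, then ``one unit of cash minus $(k_n-k_{n-1})^{-1}$ times the put--spread $(k_n-\cdot)^+-(k_{n-1}-\cdot)^+$'' has set--up cost $1-r'_-(k_n)=0$ and a payoff that is $\ge0$ throughout and $>0$ on $(k_{n-1},\infty)$; if $\nu$ lives on $[0,k_{n-1}]$, then $(k_n-M_T)^+=k_n-M_T$ $\nu$--a.s., so ``short one put struck at $k_n$, short one unit of the underlying, long $k_n$ units of cash'' has $\nu$--a.s.\ payoff $0$ and set--up cost $k_n-1-p_n<0$, which a small added long put (struck at $k_{n-1}$, or at $k_n$ in the degenerate case $\nu=\delta_{k_{n-1}}$) upgrades to a strictly positive payoff on a positive--probability set while keeping the cost negative.

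If the structural conditions fail in any other way, the offending inequality yields a model--independent arbitrage directly: a put bought below its intrinsic value (optionally combined with a forward to match $(k-1)^+$); a butterfly across the three strikes where convexity fails, which has non--negative payoff and, by that failure, negative cost; a put--spread where monotonicity fails; and, when $r'_-(k_{\overline n\wedge n})\ge1$ outside the borderline case (necessarily $\overline n=\infty$ and $r'_-(k_n)>1$), the same cash--minus--put--spread portfolio, now with cost $1-r'_-(k_n)<0$. I expect the real work to lie in two places: arranging the extension $\widehat r$ in the converse to be at once convex, bounded below by $(k-1)^+$ and of mean $1$ when several constraints are simultaneously tight (prices on the intrinsic line near $\overline n$, null prices near $\underline n$), which is exactly where the reduction via $\underline n,\overline n,\bK$ earns its keep; and, in the borderline case, producing a payoff that is strictly positive on a set of positive probability for \emph{every} model, including degenerate ones.
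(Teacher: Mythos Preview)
The paper does not prove this proposition at all: it is recalled from Davis and Hobson (2007, Thm.~3.1), originally stated for calls and translated here via put--call parity. So there is no in--paper proof to compare your proposal against.

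Your outline is the standard route to the result and is essentially correct. The necessity direction via the convex put--price function $P(k)=\bE_\mu[(k-M_T)^+]$ is right (the strict inequality $r'_-(k_n)<1$ when $\overline n=\infty$ indeed follows from $p_n>k_n-1\Rightarrow\mu((k_n,\infty))>0$). The sufficiency construction---extending $r$ to a convex $\widehat r$ with the right asymptotics and reading off a finitely supported $\mu=\widehat r''$---is the natural one and gives the finite--support clause for free. Your treatment of the borderline case is careful and correct: the $\mu_\varepsilon$ argument excludes model--independent arbitrage, and the dichotomy on whether the given model charges $(k_{n-1},\infty)$ produces a zero--cost put--spread versus a synthetic--forward arbitrage, with the $\varepsilon$--put perturbation handling the degenerate supports. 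The list of explicit portfolios in the ``otherwise'' case (butterfly, put--spread, intrinsic--value violation, slope${}>1$) is the right one.

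One genuine gap. In the ``if'' direction with $\overline n\le n$ you assert that ``the hypotheses\ldots ensure $r$ already runs along $y=k-1$ on $[k_{\overline n},k_n]$''. The conditions as literally stated do \emph{not} force this: take $n=2$, $k_1=1.5$, $k_2=2$, $p_1=0.5$, $p_2=1.5$. Then $\overline n=1$, $r$ is convex, non--negative, increasing with $r(0)=0$, $r(k)\ge(k-1)^+$ on $[0,2]$, and $r'_-(k_{\overline n})=\tfrac13<1$; yet $p_2>k_2-1$, so no market model exists (the free call at $k_1$ against the positively--priced call at $k_2$ is a model--independent arbitrage). This is really an imprecision in the proposition as quoted rather than a flaw in your strategy; the Davis--Hobson conditions implicitly include $p_i=k_i-1$ for all $i\ge\overline n$ (equivalently $r'_-(k_n)\le 1$), and you should make that explicit and verify it when you come to write the proof out.
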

\begin{figure}[t]
\begin{center}
\includegraphics[scale=0.5] {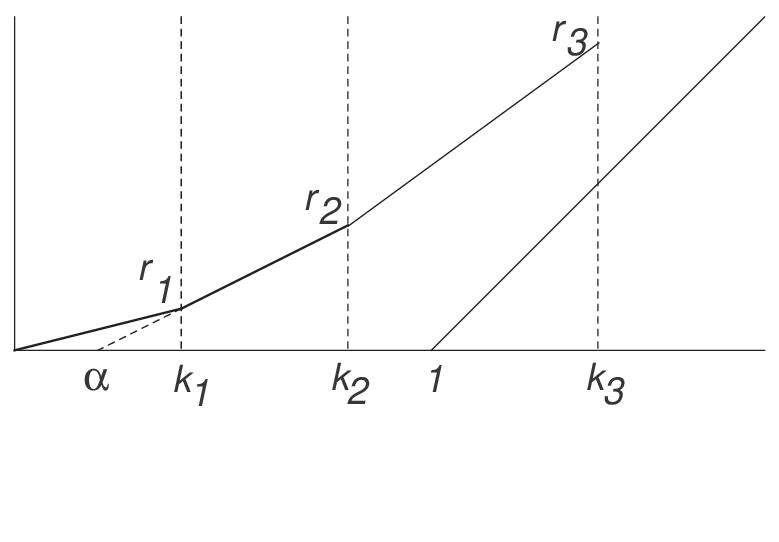}
\vspace{-12mm}\caption{\label{puts} Normalized put prices $(k_i,p_i)$ consistent with absence of arbitrage. An additional put with price $r=0$ and strike $k\in[0,\alpha]$ does not introduce arbitrage.}
\end{center}
\end{figure}

\textsc{Remarks.} (i) In \citet{dh05}, the case of call options was studied. The result stated here follows by put-call parity, valid in view of our frictionless markets assumption. The normalised call price is $c_i=p_i+1-k_i$.

(ii) Of course, no put option would be quoted at zero price, so in applications $\underline{n}=0$ always. As will be seen below, it is useful for analysis to include the artificial case $\underline{n}>0$.

Throughout the rest of the paper, we assume that \emph{the put option prices $(P_1\clc P_n)$ do not admit a weak arbitrage and hence there exists a market model consistent with the put prices}.   Let $\cM$ be a market model and let $\mu$ be the distribution of $M_T$ in this model. Then $\mu$ satisfies

\begin{subequations}\label{mucond}
\begin{align}
 \int_{\bR^+}1\,\mu(\id x)&=1\label{mucond_a}\\
 \int_{\bR^+}x\,\mu(\id x)&=1\label{mucond_b}\\
 \int_{\bR^+}[k_i-x]^+\mu(\id x)&=p_i,\quad i=1\clc n.\label{mucond_c}
\end{align}
\end{subequations}

 Conversely, given a probability measure $\mu$ on $\bR^+$ which satisfies the above we can construct a market model $\cM$ such that $\mu$ is the distribution of $M_T$. For example, let $(\Omega,\bF,\bP, (W_t)_{t\in\bR^+})$ be the Wiener space. By the Skorokhod embedding theorem (cf. \citep{genealogia}), there is a stopping time $\tau$ such that $W_\tau\sim\mu$ and $(W_{\tau\land t})$ is a uniformly integrable martingale. It follows that we can put $M_t=1+W_{\tau\wedge (t/(T-t))}$ for $t\in[0,T)$. This argument shows that the search for a market model reduces to a search for a measure $\mu$ satisfying \eqref{mucond}. 
We will denote by $\bM_P$ the set of measures $\mu$ satisfying the conditions \eqref{mucond}.

\begin{lemma}\label{support} For any $\mu\in \bM_P$, $\mu(\bR^+\backslash \bK)=0$.
\end{lemma}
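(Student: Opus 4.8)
The plan is to derive the two one-sided statements $\mu\big([0,k_{\nmin})\big)=0$ and $\mu\big((k_{\nmax},\infty)\big)=0$ from the moment conditions \eqref{mucond} together with the definitions of $\nmin$ and $\nmax$, and then to intersect the resulting carriers; here $\bK$ is read as the interval $[k_{\nmin},k_{\nmax}]\subseteq\bR^+$ (with the convention $k_\infty=\infty$).

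First I would treat the lower bound. By construction $p_{\nmin}=0$ — this holds even in the degenerate case $\nmin=0$, since we appended the artificial put with $k_0=0$, $p_0=0$ — so condition \eqref{mucond_c} reads $\int_{\bR^+}[k_{\nmin}-x]^+\,\mu(\id x)=0$. The integrand is non-negative, hence it vanishes for $\mu$-almost every $x$, which forces $\mu\big([0,k_{\nmin})\big)=0$; that is, $\mu$ is carried by $[k_{\nmin},\infty)$.

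Next the upper bound. If $\nmax=\infty$ then $\bK=[k_{\nmin},\infty)$ and there is nothing further to prove, so assume $\nmax<\infty$. Then $p_{\nmax}=k_{\nmax}-1$, equivalently the normalised call price $c_{\nmax}=p_{\nmax}+1-k_{\nmax}$ vanishes. Using the pointwise identity $[x-k_{\nmax}]^+=[k_{\nmax}-x]^++(x-k_{\nmax})$, integrating against $\mu$ and invoking \eqref{mucond_a} and \eqref{mucond_b}, I get $\int_{\bR^+}[x-k_{\nmax}]^+\,\mu(\id x)=p_{\nmax}+1-k_{\nmax}=0$. Once more a non-negative integrand integrating to zero vanishes $\mu$-a.e., so $\mu\big((k_{\nmax},\infty)\big)=0$ and $\mu$ is carried by $[0,k_{\nmax}]$. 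Intersecting the two carriers gives $\mu(\bR^+\setminus\bK)=0$.

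I do not expect any real obstacle: the content is simply that the put struck at $k_{\nmin}$ and the call struck at $k_{\nmax}$ are both worthless under $\mu$, so $\mu$ cannot charge the regions where these options would pay out. The only points requiring a little care are the two degenerate cases $\nmin=0$ and $\nmax=\infty$, and the observation that the upper-bound identity is nothing but put--call parity written in normalised units.
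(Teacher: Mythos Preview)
Your argument is correct and follows essentially the same route as the paper's own proof: use $p_{\nmin}=0$ in \eqref{mucond_c} to kill mass on $[0,k_{\nmin})$, and use $c_{\nmax}=p_{\nmax}+1-k_{\nmax}=0$ (via put--call parity) to kill mass on $(k_{\nmax},\infty)$. You have simply been more explicit about the put--call parity computation and the degenerate cases $\nmin=0$, $\nmax=\infty$, which the paper leaves implicit.
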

\proof That $\mu[0,k_{\nmin})=0$ when $\nmin>0$ follows from \eqref{mucond_c} with $i=\nmin$. When $\nmax\leq n$ we have $c_{\nmax}=0$, i.e. there is a free call option with strike $k_{\nmax}$ and we conclude that $\mu(k_{\nmax},\infty)=0$.\hfill$\square$

The question we wish to address is whether, when prices of additional options are quoted, consistency with absence of arbitrage is maintained. As discussed in Section \ref{sec:intro}, we start by considering the case where one extra option is included, a European option maturing at $T$ with convex payoff.

\section{Hedging convex payoffs}\label{sec-cp}
Suppose that, in addition to the $n$ put options, a European option is offered at price $P_\lam$ at time 0, with exercise value $\lam_T(S_T)$ at time $T$, where $\lam_T$ is a convex function. We can obtain lower and upper bounds on the price of $\lam_T$ by constructing sub-replicating and super-replicating static portfolios in the other traded assets. These bounds are given in Sections \ref{sec-lower} and \ref{sec-upper} respectively and are combined in Section \ref{sec-arb} to obtain the arbitrage conditions on the price $P_\lam$.

We work in normalised units throughout, that is, the static portfolios have time-$T$ values that are linear combinations of cash, underlying $M_T$ and option exercise values $[k_i-M_T]^+$. The prices of units of these components at time 0 are $D_T, \,D_T$ and $D_Tp_i$ respectively, where a unit of cash is \$1. Indeed, to price $M_T$ observe that \$1 invested in the underlying at time 0 yields $\Gamma_TS_T/S_0=S_T/F_TD_T=M_T/D_T$ at time $T$.
To achieve a consistent normalization for $\lam_T$ we define the convex function $\lam$ as 
\begin{equation}\label{eq:def_lam}
\lam(x)=\frac{1}{F_T}\lam_T(F_Tx).
\end{equation}
In a market model $\cM$ we have $P_\lam=D_T\bE[\lam_T(S_T)]=D_TF_T\bE[\lam(M_T)]$, so the normalized price is
\[ p_\lam=\frac{P_\lam}{D_TF_T}=\bE[\lam(M_T)]\]
and the cost for delivering a payoff $\lam(M_T)$ is $D_Tp_\lam$.

\subsection{Lower bound}\label{sec-lower} 
A sub-replicating portfolio  is a static portfolio formed at time 0 such that its value at time $T$ is majorized by $\lam(M_T)$ for all values of $M_T$. Obviously, a necessary condition for absence of model independent arbitrage is that $D_Tp_\lam$ be not less than the set-up cost of any sub-replicating portfolio. It turns out that the options $k_i$ with $i\leq\nmin$ or $i\geq\nmax$ are redundant, so the assets in the portfolio are indexed by $k=1\clc m$ where
\[ m=(n+1)\wedge\nmax-\nmin+1\]
and the time-$T$ values of these assets, as functions of $x=M_T$ are
\bea a_1(x)&=&1\quad\mbox{(Cash)}\nonumber\\
a_2(x)&=& x\quad\mbox{(Underlying)}\label{exvals}\\
a_{i+2}(x)&=&[k_{\nmin+i}-x]^+,\quad i=1\clc m-2\quad\mbox{(Options)}.\nonumber
\eea
We let $\ba(x)$ be the $m$-vector with components $a_k(x)$.
Note that $a_m(x)$ is equal to $[k_{\nmax-1}-x]^+$ if $\nmax\leq n$ and to $[k_n-x]^+$ otherwise. The set-up costs for the components in \eqref{exvals}, as observed above, are $D_T,D_T$ and $D_T p_{\nmin+i}$ respectively. The corresponding \emph{forward} prices are as in \eqref{mucond}:
\bea b_1&=&1\nonumber\\
b_2&=& 1\\
b_{i+2}&=& p_{\nmin+i},\quad i=1\clc m-2.\nonumber
\label{setup}\eea
We let $\bb$ denote the $m$-vector of the forward prices. A static portfolio is defined by a vector $\by$ whose $k$th component is the number of units of the $k$th asset in the portfolio. The forward set-up cost is $\by\tr \bb$ and the value at $T$ is $\by\tr \ba(M_T)$.

With this notation, the problem of determining the most expensive sub-replicating portfolio is equivalent to solving the (primal) semi-infinite linear program
\[ P_\mathrm{LB}:\quad \sup_{\by\in\bR^m}\by\tr \bb\quad\mathrm{ subject\, to}\quad\by\tr \ba(x)\leq\lam(x)\,\,\forall x\in\bK.\]
The constraints are enforced only for $x\in\bK$. If $\nmin>0$ [$\nmax\leq n$] we have a free put with strike $k_{\nmin}$ [call with strike  $k_{\nmax}$] and, since $\lam$ is convex, we can extend the sub-replicating portfolio to all of $\bR^+$ at no cost. 

The key result here is the basic duality theorem of semi-infinite linear programming, due to \citet{Isii_1960} and Karlin, see \citet{karlin_studden}. This theorem, stated as Theorem \ref{chconvthm:isii}, and its proof are given in Appendix \ref{sec-app}. The dual program corresponding to $P_{\mathrm{LB}}$ is
\[ D_\mathrm{LB}:\quad \inf_{\mu\in\bM}\int_\bK\lam(x)\mu(\id x)\quad\mathrm{ subject\, to}\quad \int_\bK\ba(x)\mu(\id x)=\bb^0,     \]
where $\bM$ is the set of Borel measures such that each $a_i$ is integrable. The constraints in $D_\mathrm{LB}$ can be expressed as $\mu$ satisfying \eqref{mucond} for $\nmin<i<\nmax$. This is simply equivalent to $\mu\in \bM_P$ since, as shown in Lemma \ref{support}, any $\mu\in\bM_P$ has support in $\bK$. Let $V^L_P$ and $V^L_D$ be the values of the primal and dual problems respectively. It is a general and easily proved fact that $V^L_P\leq V^L_D$. The `duality gap' is $V^L_D-V^L_P$. The Karlin-Isii theorem gives conditions under which there is no duality gap and we have existence in $P_{\mathrm{LB}}$.
\begin{proposition}\label{lb}
We suppose as above that $\lam(x)$ is a convex function on $\bR^+$, finite for all $x>0$, and that $(k_i,p_i)$ is a set of normalised put option strike and price pairs which do not admit a weak arbitrage. If $\lam(x)$ is unbounded as $x\to 0$ and $\nmin=0$ then we further assume that $p_1/k_1<p_2/k_2$. Then 
$V^L_D=V^L_P$ and there exists a maximising vector $\hat{\by}$. The most expensive sub-replicating portfolio of a European option with payoff $\lam_T(S_T)$ at maturity $T$ is the static portfolio $X\da$ as in \eqref{XT} with weights $\psi\da=F_TD_T\hat{\by}_1, \phi\da=\hat{\by}_2/\Gamma_T$, $\pi_{\nmin+i}\da=\hat{\by}_{2+i}$ for $i=1\clc m-2$ and $\pi_i\da=0$ otherwise. For this portfolio, $X\da_0=D_TF_TV^L_D$.

If there is existence in the dual problem $D_{\mathrm{LB}}$ then there is an optimal measure $\mu\da$ which is a finite linear combination of Dirac measures $\mu\da=\sum_{j=1}^mw_j\delta_{x_j}(\id x)$ such that each interval $[k_j,k_{j+1})$ contains at most one point $x_j$. For this measure
\be 
\mu\da(\{x\})>0\quad\Rightarrow\quad \left.{X\da_T}\right|_{S_T=F_Tx}= \sum_{i=1}^n\pi\da_i[K_i-F_Tx]^++\phi\da\Gamma_T F_Tx+\psi\da D_T^{-1}=\lam_T(F_Tx).
\label{impl}\ee
\end{proposition}
\proof The first part of the proposition is an application of Theorem \ref{chconvthm:isii}. The primal problem $P_\mathrm{LB}$ is feasible because any support line corresponds to a portfolio (containing no options). The functions $a_1\clc a_m$ are linearly independent. Recall from Proposition \ref{prop:DH} that if $\{(k_i,p_i)\}$ do not admit weak arbitrage then there is a measure $\mu$ satisfying the conditions \eqref{mucond} and such that $\mu$ is a finite weighted sum of Dirac measures. It follows that $\int_{\bR^+}|\lam(x)|\mu(\id x)<\infty$ unless one of the Dirac measures is placed at $x=0$ and $\lam$ is unbounded at zero. If $\nmin>0$ then there is no mass on the interval $[0,\nmin)$, hence none at zero. When $\nmin=0$ we always have $p_1/k_1\leq p_2/k_2$. If $p_1/k_1=p_2/k_2$ then the payoff $[k_2-M_T]^+-k_2[k_1-M_T]^+/k_1$ has null cost and is strictly positive on $(0,k_2)$. Since $p_1>0$ there must be some mass to the left of $k_1$, and this mass must be placed at 0, else there is an arbitrage opportunity. But then $\int\lam d\mu=+\infty$ and $V^L_D=+\infty$. The condition in the proposition excludes this case. In every other case there is a realizing measure $\mu$ such that $\mu(\{0\})=0$. 
Indeed, if $p_1/k_1<p_2/k_2$ then the extended set of put prices $\{(k,0),(k_1,p_1)\clc(k_n,p_n)\}$ is consistent with absence of arbitrage if $k\in[0,\alpha]$, where $\alpha=(k_1p_2-k_2p_1)/(p_2-p_1)$ (see Figure \ref{puts}). Any model realizing these prices puts weight 0 on the interval $[0,k)$. Thus $V^L_D$ is finite under the conditions we have stated. It remains to verify that the vector $\bb$ belongs to the interior of the moment cone $M_m$ defined at \eqref{mc}. For this, it suffices to note that for all $i$ such that $k_i\in(k_{\nmin},k_{\nmax})$ it holds that $[k_i-1]^+<p_i<k_i$, and so the condition is satisfied. We now conclude from Theorem \ref{chconvthm:isii} that $V^L_P=V^L_D$ and that we have existence in the primal problem. The expressions for $\psi\da$ etc. follow from the relationships \eqref{nun} between normalized and un-normalized prices.

Assume now that the dual problem has a solution. Any optimal measure $\mu^{\dag}\in \bM_P$ satisfies
$$ \int_{\bK} \func(x) \mu^{\dag}(\id x) = \inf_{\mu \in \bM_P} \left\{
  \int_{\bK}\func(x) \mu(\id x) \right\}.$$
Recall $\bK = [k_{\nmin}, k_{\nmax}]$ and partition $\bK$ into intervals $I_{\nmin +1}, \ldots, I_{\nmax \wedge n +1}$ defined by
$$ I_i = [k_{i-1}, k_i)\ \textrm{ for $i = \nmin +1, \ldots \nmax
      \wedge n$ and }
  I_{\nmax \wedge n + 1} = [k_{\nmax \wedge n}, k_{\nmax}),$$
so $I_{\nmax \wedge n + 1} = \emptyset$ if $\nmax \leq n$.  Lemma \ref{l1} below asserts that we may take $\mu\da$ atomic with at most one atom in each of the intervals $I_i$. By definition the optimal subhedging portfolio $X\da$ satisfies $X\da_T\leq\lam_T(S_T)$ while our duality result shows that the $\mu\da$ expectations of these random variables coincide. Hence $X\da_T=\lam_T(S_T)$ a.s.\ for $M_T=S_T/F_T$ distributed according to $\mu\da$, and \eqref{impl} follows. \hfill$\square$
\begin{lemma}\label{l1}
Let $\mu
\in \bM_P$ and suppose $\int_{\bK} |\func(x)| \mu( \id x) < \infty$. Define
\be \mathcal{I}_{\mu} = \{i \leq \nmax \wedge n+1|
    \mu(I_i)>0\}. \label{Iset} \ee
Now let $\mu'$ be the measure
$$ \mu' = \sum_{i \in \mathcal{I}_{\mu}} \mu(I_i) \delta_{x_i}, $$
in which $\delta_x$ denotes
  the Dirac measure at $x$, and for an index $i \in \mathcal{I}$, 
      $x_i = \frac{\int_{I_i}x d\mu(x)}{\mu(I_i)}$. 
Then $\mu'\in\bM_P$ and
\be \int_\bK\lam(x)\mu'(\id x)\leq \int_\bK\lam(x)\mu(\id x).\label{inf_dom}\ee
\end{lemma}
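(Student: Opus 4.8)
The plan is to verify the three defining properties \eqref{mucond_a}--\eqref{mucond_c} for $\mu'$ and then the inequality \eqref{inf_dom}, exploiting the fact that the replacement of $\mu$ by $\mu'$ amounts, on each interval $I_i$, to moving all the mass to its barycentre. First I would observe that by construction $\mu'(I_i) = \mu(I_i)$ for every $i$ and $\mu'$ is supported on $\bK$, so that $\mu'(\bR^+) = \sum_i \mu(I_i) = \mu(\bK) = 1$ by Lemma \ref{support}; this gives \eqref{mucond_a}. For \eqref{mucond_b} I would note that $\int x\,\mu'(\id x) = \sum_{i\in\mathcal{I}_\mu} \mu(I_i)\, x_i = \sum_{i\in\mathcal{I}_\mu} \int_{I_i} x\,\mu(\id x) = \int_\bK x\,\mu(\id x) = 1$, which is exactly the definition of $x_i$ as the conditional barycentre summed over the partition.

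The only property requiring an argument beyond bookkeeping is \eqref{mucond_c}, since $x\mapsto [k_j - x]^+$ is not affine. Here the key point is that each strike $k_j$ with $\nmin < j < \nmax$ is an \emph{endpoint} of the partition intervals $I_i$, not an interior point of any of them: indeed $I_i = [k_{i-1}, k_i)$, so on each $I_i$ the function $x \mapsto [k_j - x]^+$ is affine (either identically $k_j - x$ if $k_i \le k_j$, or identically $0$ if $k_{i-1} \ge k_j$). Since moving mass within an interval to its barycentre preserves the integral of any affine function on that interval, we get $\int_{I_i} [k_j - x]^+ \mu'(\id x) = \mu(I_i)[k_j - x_i]^+ = \int_{I_i}[k_j - x]^+\mu(\id x)$ for each $i$, and summing over $i$ yields $\int_\bK [k_j - x]^+ \mu'(\id x) = p_j$. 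Hence $\mu' \in \bM_P$. (I would also remark that $x_i \in I_i \subseteq \bK$ and in particular $x_i > 0$ whenever $\nmin > 0$, so no atom is created at $0$ and $\int|\lam|\,\mu'(\id x) < \infty$ follows from finiteness on $(0,\infty)$ together with the hypothesis $\int|\lam|\,\mu(\id x)<\infty$ controlling behaviour near $0$ in the remaining case.)

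For the inequality \eqref{inf_dom} I would invoke convexity of $\lam$ via Jensen's inequality, applied conditionally on each interval: for $i \in \mathcal{I}_\mu$,
\[
\lam(x_i) = \lam\!\left(\frac{1}{\mu(I_i)}\int_{I_i} x\,\mu(\id x)\right) \le \frac{1}{\mu(I_i)}\int_{I_i} \lam(x)\,\mu(\id x),
\]
so that $\mu(I_i)\lam(x_i) \le \int_{I_i}\lam(x)\,\mu(\id x)$; summing over $i\in\mathcal{I}_\mu$ gives $\int_\bK \lam\,\mu'(\id x) \le \int_\bK \lam\,\mu(\id x)$. A small technical caveat: Jensen requires the conditional integral $\int_{I_i}\lam(x)\,\mu(\id x)$ to be well defined, which is why the hypothesis $\int_\bK|\lam|\,\mu(\id x)<\infty$ is assumed; on each $I_i$ with compact closure inside $(0,\infty)$ this is automatic, and the only delicate interval is the one adjacent to $0$ when $\nmin = 0$, handled by the same hypothesis.

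I expect the main obstacle to be purely expository rather than mathematical: making precise and transparent the claim that every relevant strike $k_j$ sits at a partition endpoint so that all the moment functionals $a_k$ are affine on each $I_i$ — this is the structural fact that makes the atomisation preserve membership in $\bM_P$, and it is worth stating cleanly before the computation. The convexity step and the measure-theoretic bookkeeping are routine once that observation is in place.
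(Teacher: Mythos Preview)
Your proposal is correct and follows precisely the approach sketched in the paper's own proof, which consists of exactly two sentences: the inequality \eqref{inf_dom} is the conditional Jensen inequality, and $\mu'\in\bM_P$ is a direct computation. Your write-up simply unpacks that computation, making explicit the structural observation that each strike $k_j$ is a partition endpoint so that every moment functional $a_k$ is affine on every $I_i$; this is exactly the content of the paper's ``direct computation'', and your Jensen step is identical to theirs.
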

 
\textit{Proof of lemma:} The inequality \eqref{inf_dom} follows from the conditional Jensen inequality. A direct computation shows that $\mu'$ satisfies \eqref{mucond}, so that  $\mu'\in\bM_P$.\hfill$\square$

It remains now to understand when there is existence in the dual problem. We exclude the case when $\lam_T$ is affine on some $[z,\infty)$ which is tedious. We characterise the existence of a dual minimiser in terms of properties of the solution to the primal problem and also present a set of sufficient conditions. Of the conditions given, (i) would never be encountered in practice (it implies the existence of free call options) and (iii),(iv) depend only on the function $\lam_T$ and not on the put prices $P_i$. The examples presented in Section \ref{sec-ex} show that if these conditions fail there may still be existence, but this will now depend on the $P_i$. Condition (ii) is closer to being necessary and sufficient, but is not stated in terms of the basic data of the problem. 
\begin{proposition}\label{prop:LBdualexist} Assume $\lam_T$ is not affine on some half-line $[z,\infty)$. Then, 
in the setup of Proposition \ref{lb}, the existence of a minimiser in the dual problem $D_{\mathrm{LB}}$ fails if and only if 
\begin{equation}\label{eq:dual_equiv}
\nmax=\infty\quad \textrm{and}\quad \left.X\da_T\right|_{S_T=s}=\phi\da\Gamma_T s+\psi\da D_T^{-1}<\lam_T(s), \textrm{ for all }s\geq K_n.
\end{equation}
In particular, each of the following is a sufficient condition for existence of  a minimiser in $D_{\mathrm{LB}}$:
\begin{enumerate}
\item[(i)] $\nmax<\infty$;
\item[(ii)] we have
\begin{equation}\label{eq:dual_iff}
\left.X\da_T\right|_{S_T=K_n}=\phi\da\Gamma_TK_n+\psi\da D_T^{-1}<\lam_T(K_n)\quad\textrm{and}\quad\lim_{s\to\infty}\lam_T(s)-\phi\da\Gamma_Ts=\infty;
\end{equation}
\item[(iii)] for any $y<\lam_T(K_n)$ there is some $x>K_n$ such that the point $(K_n,y)$ lies on a support line to $\lam_T$ at $x$;
\item[(iv)] $\lam_T$ satisfies
\be\int_0^\infty x\lam_T''(\id x)=+\infty\ .\label{eq:lambda_dual_cond}\ee
\end{enumerate}
\end{proposition}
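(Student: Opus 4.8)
The plan is to recast dual attainment as a complementary-slackness statement about the contact set of the optimal primal portfolio, and then to translate \eqref{eq:dual_equiv} into a statement about where that set lies. Fix an optimal primal vector $\hat\by$ from Proposition~\ref{lb}, write $g^*(x)=\hat\by\tr\ba(x)$ and let $C=\{x\in\bK:g^*(x)=\lam(x)\}$ be its contact set with $\lam$. By \eqref{exvals}, for $x\ge k_n$ every put term vanishes, so $g^*(x)=\ell(x):=\hat\by_1+\hat\by_2 x$, and this equals $F_T^{-1}X\da_T|_{S_T=F_Tx}$; thus \eqref{eq:dual_equiv} says, in normalised units, ``$\nmax=\infty$ and $\ell<\lam$ on $[k_n,\infty)$''. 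The first step is the complementary-slackness equivalence: since $V^L_P=V^L_D$ (Proposition~\ref{lb}) and $g^*\le\lam$ on $\bK$, a measure $\mu\in\bM_P$ is optimal for $D_{\mathrm{LB}}$ if and only if $\operatorname{supp}\mu\subseteq C$. Indeed, if $\mu\da$ is optimal then $\int(\lam-g^*)\,\id\mu\da=\int\lam\,\id\mu\da-\hat\by\tr\bb=0$ with $\lam-g^*\ge0$; conversely, if $\operatorname{supp}\mu\subseteq C$ then $\int\lam\,\id\mu=\int g^*\,\id\mu=\hat\by\tr\bb=V^L_D$. (If $\lam$ blows up at $0$, the hypothesis $p_1/k_1<p_2/k_2$ of Proposition~\ref{lb} forces optimal measures to charge no mass at $0$, so that $C$ is relatively closed and this argument applies.)

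\emph{Non-existence under \eqref{eq:dual_equiv}.} If $\ell<\lam$ on $[k_n,\infty)$ then $C\subseteq[k_{\nmin},k_n)$; $C$ is compact and does not contain $k_n$, so $\sup C<k_n$. Hence any optimal $\mu\da$ would be supported in $[k_{\nmin},\sup C]$, where $[k_n-x]^+=k_n-x$, giving $p_n=\int[k_n-x]^+\,\id\mu\da=k_n-1$. But $\nmax=\infty$ means the normalised call price $c_n=p_n-k_n+1$ is $>0$, a contradiction; so no optimal $\mu\da$ exists.

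\emph{Existence otherwise.} If $\nmax<\infty$ then $\bK=[k_{\nmin},k_{\nmax}]$ is compact, $1,x$ and the $[k_i-x]^+$ are bounded continuous on $\bK$, and $\lam$ is bounded below there; so a minimising sequence has a weakly convergent subsequence whose limit lies in $\bM_P$ and, by lower semicontinuity of $\mu\mapsto\int\lam\,\id\mu$, is optimal --- this is (i). Now suppose $\nmax=\infty$ and \eqref{eq:dual_equiv} fails; pick $x_0\ge k_n$ with $\ell(x_0)=\lam(x_0)$. As $\lam$ is not affine on any half-line, $\ell<\lam$ on $(x_0,\infty)$; since there $\ell\le\lam$ with $\ell'=\hat\by_2$ while $\lam'_+\uparrow L:=\lim_{x\to\infty}\lam'_+(x)$, this strict inequality past $x_0$ forces $\hat\by_2<L$ (otherwise $\lam-\ell$ would be non-negative, non-increasing and zero at $x_0$, i.e.\ $\lam$ affine on $[x_0,\infty)$). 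Take any minimising sequence; by Lemma~\ref{l1} reduce it to atomic measures $\mu_j$ with at most one atom in each interval $I_i$, so only the atom in $I_{n+1}=[k_n,\infty)$, say of mass $\epsilon_j$ at $R_j$, can be unbounded; along a subsequence the bounded atoms converge to some $\bar\mu$ supported in $[k_{\nmin},k_n]$ and $\epsilon_j R_j\to c\ge0$. If $R_j$ stays bounded, the limit of $\mu_j$ is a feasible minimiser; otherwise $R_j\to\infty$ and $\epsilon_j\to0$, so $\bar\mu$ is a probability measure meeting all the put constraints with $\int x\,\id\bar\mu=1-c$, and (because $\int(\lam-g^*)\,\id\mu_j\to0$ forces $\bar\mu$ onto $C$) $\int\lam\,\id\bar\mu=\int g^*\,\id\bar\mu=\hat\by\tr\bb-c\hat\by_2$. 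Combining this with $\int\lam\,\id\mu_j\to\int\lam\,\id\bar\mu+cL$ and $V^L_D=V^L_P=\hat\by\tr\bb$ gives $c(L-\hat\by_2)=0$ (and $c=0$ outright if $L=\infty$), so $c=0$; hence $\bar\mu\in\bM_P$ is supported on $C$ and is optimal. This yields the ``only if'' half of the characterisation, and in passing shows that $\hat\by_2<L$ alone already guarantees dual attainment.

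\emph{The sufficient conditions and the main obstacle.} Part (i) is covered above. For (ii)--(iv), when $\nmax=\infty$ it suffices to establish $\hat\by_2<L$. Condition (ii): $\lim_s(\lam_T(s)-\phi\da\Gamma_Ts)=\infty$ reads $\lam(x)-\hat\by_2 x\to\infty$, which is impossible if $\hat\by_2=L$ (then $\lam(x)-Lx$ is non-increasing), so $\hat\by_2<L$. Condition (iii): were \eqref{eq:dual_equiv} to hold we would have $\ell(k_n)<\lam(k_n)$, and applying (iii) with $y=\ell(k_n)$ produces a support line to $\lam$ at a point $>k_n$ through $(k_n,\ell(k_n))$, necessarily of slope $>\hat\by_2$, whence $\hat\by_2<L$. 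Condition (iv): the Fubini identity $\lam(x)-Lx=\mathrm{const}-\int_z^x\lam''((t,\infty))\,\id t$ shows that $\int_0^\infty x\,\lam_T''(\id x)=\infty$ forces $\lam(x)-Lx\to-\infty$, so no affine function of slope $L$ lies below $\lam$ near infinity, i.e.\ again $\hat\by_2<L$. The main obstacle is precisely the existence half for $\nmax=\infty$: one must control a minimising sequence whose mass may escape to infinity, and recognise that the duality equality $V^L_P=V^L_D$, together with the no-half-line-affinity hypothesis, forces the escaping first-moment mass $c$ to vanish exactly when the contact set reaches $k_n$. Secondary care is needed around the endpoint $x=0$ (where $\lam$ may be infinite) and in the precise reduction of (ii)--(iv) to the single inequality $\hat\by_2<L$.
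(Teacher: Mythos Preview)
Your overall strategy is sound and in many ways cleaner than the paper's: recasting dual attainment as the existence of $\mu\in\bM_P$ supported on the contact set $C$, and reducing the sufficient conditions (ii)--(iv) to the single slope inequality $\hat\by_2<L$, is more transparent than the paper's route via $K\da$ and the auxiliary function $\tilde\lam_T$. The paper proceeds similarly---it too takes an atomic minimising sequence and analyses whether the last atom escapes to infinity---but the contradiction it derives in the escaping case (its $\iota_1=1$ argument) is more laboured than your $c(L-\hat\by_2)=0$ identity.

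There is, however, a genuine slip in your treatment of the case $R_j\to\infty$. You correctly observe that $\bar\mu$ inherits all the put constraints (since the atom at $R_j\ge k_n$ contributes nothing to any $[k_i-x]^+$). But precisely because $\bar\mu$ satisfies the constraint at $k_n$ and is supported in $[k_{\nmin},k_n]$, you have
\[
p_n=\int[k_n-x]^+\,\id\bar\mu=k_n\cdot 1-\int x\,\id\bar\mu=k_n-(1-c),
\]
which forces $c=p_n-k_n+1=c_n$. Since $\nmax=\infty$ means $c_n>0$, this is incompatible with your conclusion $c=0$. So the sentence ``hence $\bar\mu\in\bM_P$ is supported on $C$ and is optimal'' is not the right ending: what you have actually shown is that the hypothesis $R_j\to\infty$ is contradictory (when $\hat\by_2<L$), so the escaping case is \emph{void} and you are automatically thrown back to the bounded case, where the limit measure is the minimiser. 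Once you rewrite that paragraph as a proof by contradiction---``$R_j\to\infty$ would give $c=c_n>0$ from the $k_n$-put constraint and $c=0$ from $c(L-\hat\by_2)=0$, impossible''---the argument is complete, and your claim that $\hat\by_2<L$ alone guarantees attainment then stands. Your treatment of (ii)--(iv) is correct given this fix; note that for (iii) the detour through ``were \eqref{eq:dual_equiv} to hold'' is harmless but unnecessary, as the paper simply shows (iii)$\Leftrightarrow$(iv) by integration by parts.
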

\textit{Proof:} We consider two cases. 
\medskip

\emph{Case 1:} $\nmax\leq n$, i.e. condition (i) holds. In this case the support of any measure $\mu\in\bM_P$ is contained in the finite union $I_{\nmin+1}\cup\cdots\cup I_{\nmax}$ of bounded intervals, and $\sum_{i=\nmin+1}^{\nmax}\mu(I_i)=1$. Further, $p_{\nmax-1}>k_{\nmax}-1$ and \eqref{mucond_b} together imply that $\mu(I_{\nmax})>0$.
    Let $\mu_j, j=1,2,\ldots$ be a sequence of measures such that
\[ \int_{\bK} \func(x) \mu_j(\id x) \to \inf_{\mu \in \bM_P} \left\{
  \int_{\bK}\func(x) \mu(\id x) \right\}\quad\mathrm{as\,\,} j\to\infty.\]
By Lemma \ref{l1}, we may and do assume that each $\mu_j$ is atomic with at most one atom per interval. We denote $w^i_j=\mu_j(I_i)$ and let $x^i_j$ denote the location of the atom in $I_i$. For definiteness, let $x^i_j=\Delta$, where $\Delta$ is some isolated point, if $\mu_j$ has no atom in $I_i$, i.e., if $w^i_j=0$. Let $A$ be the set of indices $i$ such that $x^i_j$ converges to $\Delta$ as $j\to\infty$, i.e. $x^i_j\not=\Delta$ for only finitely many $j$, and let $B$ be the complementary set of indices. Then there exists $j^*$ such that $\sum_{i\in B}w^i_j=1$ for $j>j^*$. Because the $w^i_j$ and $x^i_j$ are contained in compact intervals there exists a subsequence $j_k,\, k=1,2\ldots$ and points $w^i_*, x^i_*$ such that $w^i_{j_k}\to w^i_*$ and $x^i_{j_k}\to x^i_*$ as $k\to\infty$. It is clear that $\sum_{i\in B}w^i_*=1$ and that
\[V^L_D= \lim_{k\to\infty}\int_{\bK}\lam(x)\mu_{j_k}(\id x)=\int_\bK\lam(x)\mu\da(\id x)\]
where $\mu\da(\id x)=\sum_{i\in B}w^i_*\delta_{x^i_*}(\id x)$. Similarly, the integrals of $1,x$ and $[k_i-x]^+$ converge, so $\mu\da\in\bM_P$. Finally, since the intervals $I_i$ are open on the right, it is possible that $x^i_*\in I_{i+1}$. If also $x^{i+1}_*\in I_{i+1}$ we can invoke Lemma \ref{l1} to conclude that this 2-point distribution in $I_{i+1}$ can be replaced by a 1-point distribution without increasing the integral. Thus $\mu\da$ retains the property of being an atomic measure with at most one atom per interval. We have existence in the dual problem and, by the arguments above, $X\da_T=\lam_T(S_T)$ $\id \mu\da$-a.s. 

\emph{Case 2:} $\nmax=\infty$. 
Let $X\da_T(s)$ denote the payoff of the portfolio $X\da$ at time $T$ when $S_T=s$. Suppose that a minimiser $\mu\da$ exists in $D_{\mathrm{LB}}$, which we may take as atomic. Since $\nmax =\infty$ there exists $x\in (k_n,\infty)$ such that $\mu\da(\{x\})>0$ and hence, by \eqref{impl}, $X\da_T(F_Tx)=\lam(F_Tx)$. This shows that \eqref{eq:dual_equiv} fails since $F_Tx>K_n$. 

It remains to show the converse: that if $\nmax=\infty$ but \eqref{eq:dual_equiv} fails then a minimiser $\mu\da$ exists. Given our assumption on $\lam_T$, if \eqref{eq:dual_equiv} fails then $K\da:=\sup\{s\geq 0: X\da_T(s)=\lam_T(s)\}\in[K_n,\infty)$. 
We continue with analysis similar to Case 1 above. Here we have intervals $I_{\nmin}\clc I_n$ covering $\bK_n=[k_{\nmin}, k_n)$, but also a further interval $I_{n+1}=[k_n,\infty)$ which does not necessarily have zero mass. Since $I_{n+1}$ is unbounded, the argument above needs some modification. We take a minimizing sequence of point mass measures $\mu_j$ as in Case 1. If $x_j^{n+1}$ converge (on a subsequence) to a finite $x^{n+1}_*$ then we can restrict our attention to a compact $[0,x^{n+1}_*+1]$ and everything works as in Case 1 above. Suppose to the contrary that $\liminf x^{n+1}_j=\infty$, as $j\to\infty$. We first apply the arguments in Case 1 to the sequence $\t{\mu}_j$ of restrictions of $\mu_j$ to $\bK_n$. Everything is the same as above except that now $\sum_iw^i_j\leq 1$. A subsequence converges to a sub-probability measure $\t{\mu}\da$ on  $\bK_n$, equal to a weighted sum of Dirac measures as above. 
 Define $\mu\da=\t{\mu}\da$ if $\iota_1 \equiv\t{\mu}\da(\bK_n)=1$ and otherwise $\mu\da=\t{\mu}\da+(1-\iota_1)\delta_x$, where $x\in I_{n+1}$ is to be determined. Whatever the value of $x$, $\mu\da$ satisfies conditions \eqref{mucond_a} and \eqref{mucond_c} (the put values depend only on $\t{\mu}\da$). 

Since $\mu_j\in\bM_P$, $\sum_{i=1}^{n+1}w^i_jx^i_j=1$ and in particular
\[ 1-\sum_{i=1}^nw^i_jx^i_j\geq w_j^{n+1}k_n.\]
Taking the limit along the subsequence we conclude that
\be 1-\sum_{i=1}^nw^i_*x^i_*\geq k_n\left(1-\sum_{i=1}^nw^i_*\right)=k_n(1-\iota_1).\label{w*}\ee
By the convergence argument of Case 1, $\iota_2\equiv \int_0^{k_n}x\mu\da(\id x)\leq1$. If $\iota_1<1$ we have only to choose $x=(1-\iota_2)/(1-\iota_1)$ to ensure that the `forward' condition \eqref{mucond_b} is also satisfied. The inequality \eqref{w*} guarantees that $x\geq k_n$. Thus $\mu\da\in\bM_P$ and, since $\nmax=\infty$, $x>k_n$ and hence $K\da>K_n$.

We now show that the complementary case $\iota_1=1$ contradicts $K\da\in [K_n,\infty)$. Indeed, if $\iota_1=1$ we have, since $\nmax=\infty$,
\[ k_n-1< p_n=\int_0^\infty [k_n-x]^+\mu\da(\id x)=\int_0^{k_n}(k_n-x)\mu\da(\id x)=k_n-\sum_{i=1}^nw^i_*x^i_*\ .
\]
 Observe that then
\begin{equation}\label{eq:cnv_lastatom}
w^{n+1}_jx^{n+1}_j =1 - \sum_{i=1}^nw^i_jx^i_j \xrightarrow[j\to\infty]{} 1-\sum_{i=1}^nw^i_*x^i_*=p_n-k_n+1>0.
\end{equation}
Since $K\da\in [K_n,\infty)$, taking $\gamma=\lam_T'(K\da+1)-{X^{\dag}_T}'(K\da+1)=\lam_T'(K\da+1)-\phi\da\Gamma_T>0$, we have
\begin{equation}\label{eq:righttail}
 \lam_T(s) - X\da_T(s)\geq \gamma(s-K\da+1), \quad \forall s\geq K\da+1.
\end{equation}
Define now a new function $\tilde\lam_T$ by
$$\tilde \lam_T(s)=\lam_T(s)\mathbf{1}_{s\leq K\da}+(\phi\da\Gamma_T s + \psi\da D_T^{-1})\mathbf{1}_{s>K\da}$$
so that we have $X\da\leq \tilde\lam_T(S_T)\leq \lam_T(S_T)$ and, by definition, for $S_T> K\da$ we have $X\da= \tilde\lam_T(S_T)< \lam_T(S_T)$. It follows that $X\da$ is also the most expensive subreplicating portfolio for $\tilde\lam_T(s)$ and hence the primal and dual problems for $\lam$ and for $\tilde\lam(x)=\frac{1}{F_T}\tilde\lam_T(F_Tx)$ have all the same value. Writing this explicitly and using \eqref{eq:righttail} and\eqref{eq:cnv_lastatom} gives:
\begin{equation*}
\begin{split}
0=\lim_{j\to\infty}\int_0^\infty (\lam(x)-\tilde\lam(x))\mu_j(\id x)&=\lim_{j\to\infty} (\lam(x^{n+1}_j)-\tilde\lam(x^{n+1}_j))w^{n+1}_j\\
&\geq \lim_{j\to\infty}\gamma(x^{n+1}_j-K\da-1)w^{n+1}_j=\gamma(p_n-k_n+1)>0,
\end{split}
\end{equation*} a contradiction. 

We turn to showing that each of $(i)$--$(iv)$ is sufficient for existence of a dual minimiser $\mu\dag$. \footnote{In fact for this part of the Proposition we do not need to impose any additional conditions on $\lam_T$ apart from convexity.}
Obviously $\nmax<\infty$ is sufficient as observed above. We now show that either of (ii) or (iii) implies that $K\da\in [K_n,\infty)$ and that \eqref{eq:righttail} holds. In the light of the arguments above this will be sufficient. 
$X\da_T(s)$ is linear on $[K_n,\infty)$ and $\lam_T$ is convex hence the difference of the two either converges to a constant or diverges to infinity. In the latter case the difference grows quicker than a linear function, more precisely \eqref{eq:righttail} holds. The former case is explicitly excluded in (ii) and is contradictory with (iii) as the line $\ell(s):=\phi\da\Gamma_T s + \psi\da D_T^{-1}+\lim_{u\to\infty}(\lam_T(u)-X\da_T(u))$ is asymptotically tangential to $\lam_T(s)$ as $s\to\infty$ and hence the condition in (iii) is violated for any $y<\ell(K_n)$. In particular, it follows that $K\da<K_n$ then we can add to $X\da$ a positive number of call options with strike $K_n$ and obtain a subreplicating portfolio with an initial cost strictly larger (recall that $\nmax=\infty$) than $X\da$ which contradicts the optimality of $X\da$. This completes the proof that either (ii) or (iii) is sufficient for existence.

Finally, we argue that (iii) and (iv) are equivalent. In (iii) the point $x$ satisfies
\[ \lam_T(x)+\xi(K_n-x)=y, \quad \textrm{ for some }\xi\in [\lam_T'(x-),\lam_T'(x+)].\]
This equation has a solution $x$ for all $y<\lam_T(K_n)$ if and only if $\lim_{x\to\infty} \lam_T(x)-x\lam_T'(x)=-\infty$. The equivalence with (iv) follows integrating by parts
\[ \int_{K_n}^\infty x\lam_T''(\id x) = \Big(x\lam_T'(x)-\lam_T(x)\Big)\Big|_{K_n}^\infty.\]
\hfill$\square$


\subsubsection{Examples with one put option}\label{sec-ex} 
We consider now examples in which just one put option price is specified. We illustrate different cases when existence in the dual problem holds or fails. For simplicity assume all prices are normalised, i.e. $D_T=F_T=1$, and we are given only a single put option with strike $k=1.2$. The convex function takes the form $\lam(x)=1/x+ax^b$. Computation of the most expensive sub-hedging portfolio can be done by a simple search procedure. 

Consider first the case $a=0$, so that $\lam(x)=1/x$. The results are shown in Figure \ref{fig} with data shown in Table \ref{table}: $p$ is the put price, $x_0,x_1$ the points of tangency, $w_0,w_1$ the implied probability weights on  $x_0,x_1$  and $\psi, \phi$ the units of, respectively, cash and forward in the portfolio.

$p=0.4$ is a `regular' case: we have tangent lines at $x_0,x_1$ and the solution to the dual problem puts weights 8/9, 1/9 respectively on these points. As $p$ increases it is advantageous to include more puts in the portfolio, so $x_1$ increases. At $p=0.6$ we reach a boundary case where $\psi=\phi=0$, and the put is correctly priced by the Dirac measure with weight 1 at $x_0=0.6=k/2$. Obviously, this measure does not correctly price the forward, but it does correctly price the portfolio since $\phi=0$. When $p>0.6$, the only way to increase the put component further is to take $\psi<0$ (and then clearly the optimal value of $\phi$ is 0.) When $p=0.7$ the optimal value is $\psi=-0.8$ and we find that in this and every other such case the implied weight is $w_0=1$, as the general theory predicts.
\begin{table}[ht]
\begin{center}
\begin{tabular}{|c|c|c|c|c|c|c|c|}\hline
    $p$   &    $x_0$     & $x_1$ &     value   &$\psi$ &$\phi$      &$w_0$      &$w_1$\\ \hline
   0.4 &    0.75& 3  &     1.2222 &0.6667&-0.1111& 0.8889 &0.1111\\
   0.6 &    0.6& -  &     1.6667 &0     &0  &1.00 &-\\
   0.7 &    0.5& -  &     2.00 &-0.8&0   & 1.00 &-\\ \hline
\end{tabular}
\vspace{3mm}
\caption{\label{table} Data for Figure \ref{fig}}
\end{center}
\end{table}

\begin{figure}[h]
\begin{center}
\includegraphics[scale=0.4]{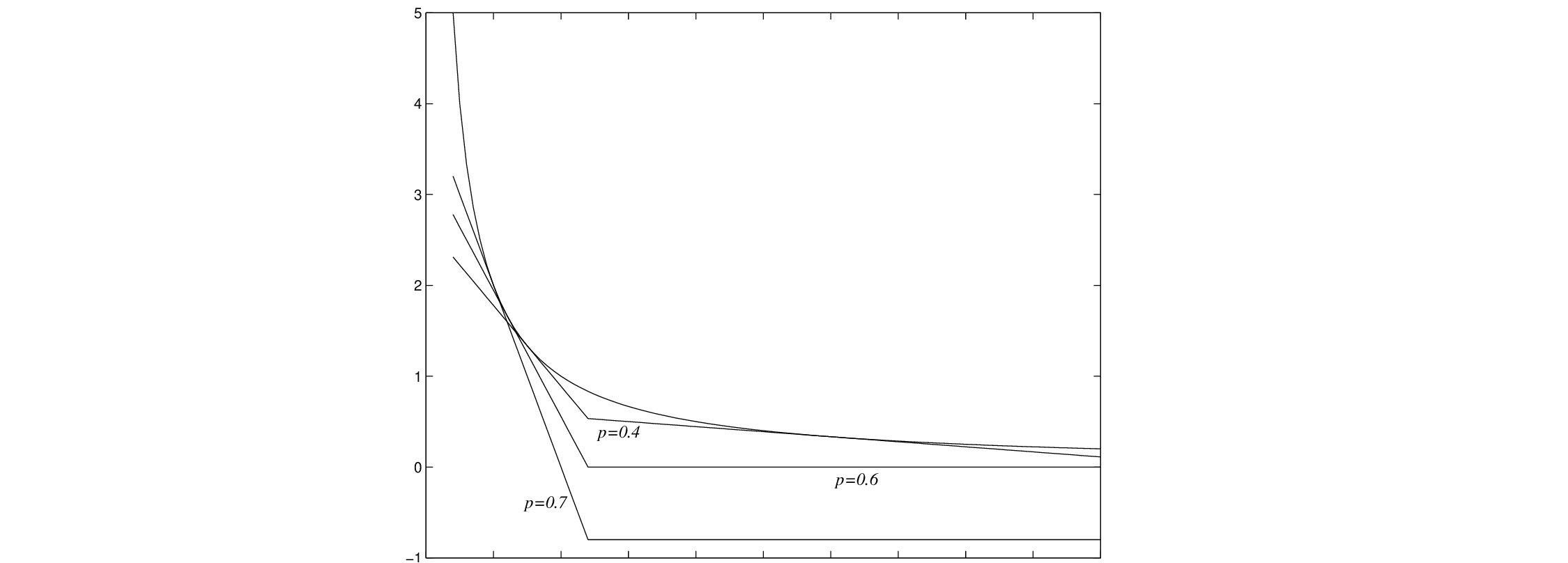}
\caption{\label{fig}Most expensive subhedging portfolios for $\lam(x)=1/x$ given a single put option with strike 1.2 and prices $p=0.4, 0.6, 0.7$}
\end{center}
\end{figure}

Next, take $a=0.25, b=1$, so that $\lam(x)=1/x+0.25x$, has its minimum at $x=2$ and is asymptotically linear. We take $p=0.7$. Taking the expectation with respect to the limiting measure $\mu\da$ gives the value of the lower sub-hedge in Figure \ref{fig-2}, but this is not optimal: we can add a maximum number 0.25 of call options, which have positive value, giving the upper sub-hedge in Figure \ref{fig-2}. This is optimal, but does not correspond to any dual measure.
\begin{figure}[h]
\begin{center}
\includegraphics[scale=0.58]{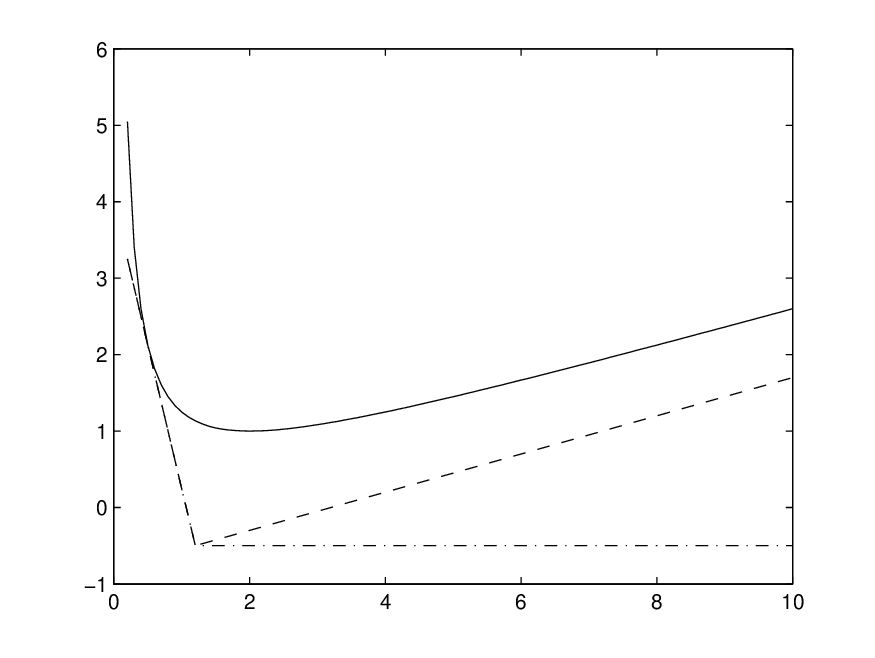}
\vspace{-5mm}\caption{\label{fig-2}The most expensive subhedging portfolio (upper dashed line) for $\lam(x)=1/x+0.25x$ given a single put option with strike 1.2 and price $p=0.7$. The lower dashed line is the portfolio priced at $\int \lam(x)\mu\da(\id x)$ which is suboptimal.}
\end{center}
\end{figure}

Finally, let $a=0.0625,\, b=2$, giving $\lam(x)=1/x+0.0625x^2$. The minimum is still at $x=2$ but $\lam$ is asymptotically quadratic. This function satisfies condition (iii) of Proposition \ref{prop:LBdualexist} and there is dual existence for every arbitrage-free value of $p$. The optimal portfolio for $p=0.7$ is shown in Figure \ref{fig-3}.
\begin{figure}[h]
\begin{center}
\includegraphics[scale=0.58]{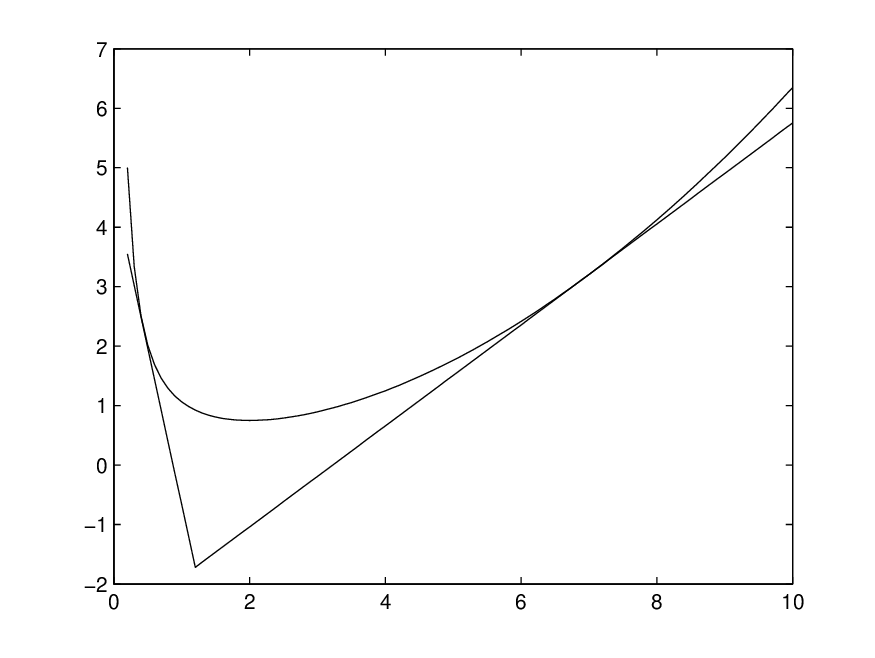}
\vspace{-5mm}\caption{\label{fig-3}Most expensive subhedging portfolios for $\lam(x)=1/x+0.0625x^2$ given a single put option with strike 1.2 and price $p=0.7$.}
\end{center}
\end{figure}

\subsection{Upper bound}\label{sec-upper}
 To compute the cheapest super-replicating portfolio we have to solve the linear program
\[ P_\mathrm{UB}:\, \inf_{\by \in \bR^m}
\by\tr\bb \quad \textrm{subject to} \quad
\by\tr\ba(x) \geq \func(x)\  \forall x \in \bK.\]
The corresponding dual program is 
\[ D_{\mathrm{UB}}:\ \sup_{\mu\in\bM} \int_{\bK} \func(x) \mu( \id x)\quad\mathrm{subject\, to}\quad\int_{\bK}\ba(x)\mu(\id x)=\bb,\] 
where, by Lemma \ref{support}, we may replace $\bK$ by $\bR^+$.
By \eqref{tf} we have for $x\in\bR^+$
\[\lam(x)=\lam(1)+\lam'(1+)(x-1)+\int_{(0,1]}[k-x]^+\lam''(\id k)+\int_{(1,\infty)} [x-k]^+\lam''(\id k).\]
Consider $\mu\in \bM_P$ and let $p_\mu(k)=\int [k-x]^+\mu(\id x)$, $c_\mu(k)=\int [x-k]^+\mu(\id k)$ be the (normalised) prices of puts and calls. Integrating the above against $\mu$ gives
\begin{equation}\label{eq:lam_ub}
\int \lam(x)\mu(\id x)=\lam(1)+\int_{(0,1]}p_\mu(k)\lam''(\id k)+\int_{(1,\infty)} c_\mu(k)\lam''(\id k).
\end{equation}
Recall that $c_\mu(k)=p_\mu(k)+1-k$ and hence maximising in $c_\mu(k)$ or in $p_\mu(k)$ is the same. $p_\mu(k)$ is a convex function dominated by the linear interpolation of points $(k_i,p_i)$, $i=0,1\clc n$, extended to the right of $(k_n,p_n)$ with slope $1$. Since we assume the given put prices do not admit weak arbitrage, it follows from \citep{dh05} (see also Proposition \ref{prop:DH} above) that this upper bound is attained either exactly or in the limit. More precisely, if  $\nmax =\infty$ one can take $\mu_z\in \bM_P$ supported on $\{k_{\nmin},\clc,k_n,z\}$, for $z$ large enough, which attain the upper bound on $[0,k_n]$ and asymptotically induce the upper bound on $(k_n,\infty)$ as $z\to \infty$. It follows from \eqref{eq:lam_ub} that the value of the dual problem is $V^U_D=\lim_{z\to\infty}\int_{\bK}\lam(x)\mu_z(\id x)$. If $\nmax\leq n$ one can take $z=k_{\nmax}$ and $\mu_{k_{\nmax}}$ attains the upper bound. It follows from \eqref{eq:lam_ub} that then $V^U_D=\int_{\bK}\lam(x)\mu_{k_{\nmax}}(\id x)$.

From this observations, one expects that $\check{\by}$ -- the solution to the primal problem -- will correspond to a (normalised) portfolio $\check{\by}\tr\ba(x)$ which linearly interpolates $(k_i,\lam(k_i))$, $i=k_{\nmin}\clc n\land \nmax$ and (if $\nmax=\infty$) extends linearly to the right as to dominate $\lam(k)$. 
The function $x\mapsto\by\tr\ba(x)$ is piecewise linear with a finite number of pieces and no such function can majorize the convex function $\lam$ over $\bR^+$ unless $\lam(0)<\infty$ and $\lam'(\infty)=\gamma<\infty$. In general we impose:
\be \begin{split}(a)&\quad\nmin>0 \textrm{ or } \Big(\nmin=0 \textrm{ and }\lam(0)<\infty\Big)\qquad and\\
(b)&\quad\nmax\leq n \textrm{ or }\Big(\nmax=\infty\textrm{ and }  \lam'(\infty)<\infty\Big)\ .
\end{split} \label{ub_cond} \ee
We have the following result. 
\begin{proposition}\label{prop:UB}
If condition \eqref{ub_cond} holds then there exists a solution $\check{\by}$ to the linear program $P_{\mathrm UB}$. The function $\check{\by}\tr\ba(x)$ is the linear interpolation of the points $(k_{\nmin},\lambda(k_{\nmin})),\clc(k_{n\land \nmax} ,\lam(k_{n\land \nmax}))$ together with, if $\nmax =\infty$, the line $l(x)=\check{\by}\tr\ba(k_n)+(x-k_n)\lam'(\infty)$ for $x\geq k_n$. Primal and dual problem have the same value $V^P_U=V^D_U$ and the existence of a maximiser in the dual problem fails if and only if $\nmax=\infty$ and $\lam$ is not affine on $[k_n,\infty)$.

If the condition \eqref{ub_cond} is not satisfied, there is no feasible solution and $V^D_U=\infty$. 
\end{proposition}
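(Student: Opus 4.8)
\emph{Proof plan.} The plan is to settle the feasibility dichotomy first, then, assuming \eqref{ub_cond}, to check that the announced piecewise-affine function is an optimal super-replicating portfolio, to close the duality gap using the maximising family $(\mu_z)_{z}$ already exhibited before the statement, and finally to characterise dual attainment. For the infeasible case: any admissible function $x\mapsto\by\tr\ba(x)$ is continuous on $\bR^+$, finite at $x=0$, and affine on $[k_n,\infty)$ (all put payoffs vanish there); hence it cannot majorise $\lam$ near $0$ when $\nmin=0$ and $\lam(0)=\infty$, nor majorise the superlinear $\lam$ at infinity when $\nmax=\infty$ and $\lam'(\infty)=\infty$. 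So if \eqref{ub_cond} fails then $P_{\mathrm{UB}}$ is infeasible, i.e.\ $V^P_U=+\infty$. To obtain $V^D_U=+\infty$ as well: in the first situation I would take any atomic $\mu\in\bM_P$ (which exists by Proposition \ref{prop:DH}); it has an atom in $[0,k_1)$ of positive mass because $p_1>0$, and if that atom is not already at $0$ I split it into an atom at $0$ and one at a nearby point of $(x_0,k_1)$ with masses preserving the first moment — this leaves every put value unchanged, so the new measure lies in $\bM_P$ and integrates $\lam$ to $+\infty$. In the second situation the $\mu_z$ satisfy $\mu_z(\{z\})=c_n/(z-k_n)$, whence $\int\lam\,\id\mu_z\ge \frac{c_n}{z-k_n}\lam(z)\to\infty$ since $\lam(z)/z\to\infty$.

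Now assume \eqref{ub_cond} and let $g$ be the function in the statement. Condition \eqref{ub_cond} is precisely what makes $g$ a finite continuous piecewise-affine function whose kinks lie only at strikes corresponding to available put components (the interior nodes $k_{\nmin+1},\dots,k_{n\wedge\nmax}$, plus $k_n$ when $\nmax=\infty$); since the $a_i$ are linearly independent, $g=\check\by\tr\ba$ for a unique $\check\by$, and the put weights of $\check\by$ are consecutive increments of the chord slopes of $\lam$ (together with $\lam'(\infty)-$ last chord slope when $\nmax=\infty$), hence non-negative. Feasibility $g\ge\lam$ on $\bK$ is immediate from convexity of $\lam$: between consecutive interpolation nodes $g$ is a chord of $\lam$, and on $[k_n,\infty)$ one has $\lam(x)-\lam(k_n)=\int_{k_n}^x\lam'(t)\,\id t\le(x-k_n)\lam'(\infty)=g(x)-\lam(k_n)$. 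So $\check\by$ is primal feasible.

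Next I would close the gap. Since $g$ equals $\lam$ at every node $k_{\nmin},\dots,k_n$ and $\int\ba\,\id\mu_z=\bb$, one gets $\check\by\tr\bb=\int g\,\id\mu_z=\int\lam\,\id\mu_z+\mu_z(\{z\})\bigl(g(z)-\lam(z)\bigr)$, where $g(z)-\lam(z)=\int_{k_n}^z(\lam'(\infty)-\lam'(t))\,\id t\ge0$ and $\mu_z(\{z\})=c_n/(z-k_n)$; the error term is $c_n$ times an average of $\lam'(\infty)-\lam'(t)$ over $[k_n,z]$, so it tends to $0$ as $z\to\infty$ (and it is identically $0$ if $\nmax\le n$, taking $z=k_{\nmax}$, or if $\nmax=\infty$ and $\lam$ is affine on $[k_n,\infty)$, since then $g=\lam$ on $[k_n,\infty)$). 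Combined with the chain $\int\lam\,\id\mu_z\le V^D_U\le V^P_U\le\check\by\tr\bb$ this forces $V^P_U=V^D_U=\check\by\tr\bb$, so $\check\by$ solves $P_{\mathrm{UB}}$; and the same computation shows the dual is attained when $\nmax\le n$, or when $\nmax=\infty$ and $\lam$ is affine on $[k_n,\infty)$.

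It remains to prove that when $\nmax=\infty$ and $\lam$ is not affine on $[k_n,\infty)$ no dual maximiser exists. Suppose $\mu\da\in\bM_P$ were one. Using \eqref{eq:lam_ub} and monotone convergence along $z\to\infty$, $V^D_U=\lam(1)+\int_{(0,1]}\hat p\,\lam''(\id k)+\int_{(1,\infty)}\hat c\,\lam''(\id k)$, where $\hat p$ is the put-price interpolant $r$ of Proposition \ref{prop:DH} extended past $k_n$ with slope $1$ and $\hat c(k)=\hat p(k)+1-k$, so that $\hat c\equiv c_n$ on $[k_n,\infty)$. Convexity of $p_{\mu\da}$ through the nodes $(k_i,p_i)$, together with the elementary bound $[k-x]^+-[k_n-x]^+\le k-k_n$, gives $p_{\mu\da}\le\hat p$ and $c_{\mu\da}\le\hat c$ pointwise; since $V^D_U-\int\lam\,\id\mu\da=\int_{(0,1]}(\hat p-p_{\mu\da})\lam''+\int_{(1,\infty)}(\hat c-c_{\mu\da})\lam''=0$ with non-negative integrands, $c_{\mu\da}=c_n$ holds $\lam''$-a.e.\ on $(k_n,\infty)$. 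As $\lam''((k_n,\infty))>0$ (this is exactly ``$\lam$ not affine on $[k_n,\infty)$''), there is $k_*>k_n$ with $c_{\mu\da}(k_*)=c_n$; combined with $c_{\mu\da}(k_n)=c_n$ and convexity of $k\mapsto c_{\mu\da}(k)$, this makes the right-derivative of $c_{\mu\da}$ at $k_*$ non-negative, so $c_{\mu\da}(k)\ge c_n>0$ for every $k\ge k_*$, contradicting $c_{\mu\da}(k)=\int(x-k)^+\,\id\mu\da\to0$ (dominated convergence, using $\int x\,\id\mu\da=1$). Hence no maximiser exists, completing the characterisation. The first three paragraphs are essentially bookkeeping once the form of $g$ is guessed; the genuine obstacle is this last implication — the point being that a dual maximiser would have to reproduce, at a \emph{finite} strike beyond $k_n$, the limiting constant call price $c_n>0$ of the extended interpolant, which convexity and the decay $c_\mu(k)\to0$ rule out precisely when $\lam$ carries curvature past $k_n$. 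Making the ``$c_{\mu\da}=\hat c$ $\lam''$-a.e.'' step and the monotone-convergence identification of $V^D_U$ rigorous for a general (not necessarily atomic) $\mu\da$ is where care is needed.
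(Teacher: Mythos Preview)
Your argument is correct and follows the paper's overall architecture: construct the piecewise-affine superhedge $g=\check\by\tr\ba$, check feasibility via convexity, and close the duality gap by integrating against the measures $\mu_z$. Two differences are worth noting. First, your handling of the residual $\mu_z(\{z\})\bigl(g(z)-\lam(z)\bigr)$ is actually sharper than the paper's: the paper asserts a uniform bound $g-\lam\le\delta$ on $\bR^+$, but on $[k_n,\infty)$ this difference equals $\int_{k_n}^x\bigl(\lam'(\infty)-\lam'(t)\bigr)\,\id t$ and need not be bounded (e.g.\ $\lam(x)=\gamma x-\log x$); your Ces\`aro-type observation---the residual is $c_n$ times the running average of a vanishing integrand---is the correct repair and gives the same conclusion. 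Second, for non-attainment of the dual when $\nmax=\infty$ and $\lam$ is not affine on $[k_n,\infty)$, the paper's one-line argument is (implicit) complementary slackness: any maximiser $\mu\da$ must be supported on $\{g=\lam\}=\{k_{\nmin},\dots,k_n\}$, and then $p_n=\int(k_n-x)\,\id\mu\da=k_n-1$ contradicts $\nmax=\infty$. Your route via \eqref{eq:lam_ub}---forcing $c_{\mu\da}\equiv c_n$ on a half-line and contradicting the decay $c_{\mu\da}(k)\to0$---is longer but more self-contained; the complementary-slackness route is the faster one once $V^P_U=V^D_U$ has been established with an explicit primal optimiser.
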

\proof
As argued above, \eqref{ub_cond} is a necessary condition for existence of a feasible solution. Suppose, for example, that $\nmax=\infty=\lam'(\infty)$. Then $(z-k_n)\mu_z(\{z\})=c_n$ and hence $z\mu_z(\{z\})\to c_n$ as $z\to\infty$. Together with $\lam'(\infty)=\infty$ this implies that $V^D_U=\infty$. Other cases are similar.

Suppose \eqref{ub_cond} holds and first consider the case when $\nmax=\infty$. $\lam$ is bounded on $[k_{\nmin},k_n]$ and the linear interpolation is well defined as is the extension beyond $k_n$. Further, there exists some constant $\delta$ such that $\cy\tr\ba(x)-\lam(x)\leq \delta$ for all $x\in \bR^+$. The weight $\cy_{2+j}$ on the $j$th put option is the change in slope at $k_{\nmin + j}$, the `underlying' weight $\by_2$ is equal to $\gamma$, and at $x=k_n$ we have $\lam_n=\cy_1+\cy_2k_n$, so the `cash' weight is $\cy_1=\lam_n-k_n\gamma$. The value of the objective function is, by definition,
\[ \cy\tr\bb=\int_{\bR^+}\cy\tr\ba(x)\mu(\id x),\]
for any $\mu\in \bM_P$. In particular, since $\cy\tr\ba(k_i)=\lam(k_i)$, taking $z$ large enough, we have
\begin{equation}
\begin{split}
\cy\tr\bb^0&=\int_{\bR^+}\cy\tr\ba(x)\mu_z(\id x)=\sum_{i=\nmin}^{\nmax}\lam(k_i)\mu_z(\{k_i\})+\cy\tr\ba(z)\mu_z(\{z\})\\&=\int_{\bR^+}\lam(x)\mu_z(\id x)+(\cy\tr\ba(z)-\lam(z))\mu_z(\{z\})\leq \int_{\bR^+}\lam(x)\mu_z(\id x) + \delta\mu(\{z\}).
\end{split}
\end{equation}
Recall that $\int x \mu_z(\id x)=1$ and in particular $\mu(\{z\})\to 0$ as $z\to \infty$. Taking the limit as $z\to\infty$ in the above, we conclude that $\cy\tr\bb^0\leq V^U_D$. The basic inequality $V^U_D\leq V^U_P$ between the primal and dual values then implies that  $\check{\by}$ is optimal for $P_{\mathrm{UB}}$ and $V^U_D = V^U_P$. The existence of the solution to $D_{\mathrm{UB}}$ fails unless there exists $z\geq k_n$ with $\cy\tr\ba(z)=\lam(z)$, which happens if and only if $\lam(z)$ is affine on $[k_n,\infty)$.

When $\nmax\leq n$ the arguments are analogous, except that now we need to ensure $\by\tr\ba(x) \geq \func(x)$ only for $x \in \bK= [k_{\nmin},k_{\nmax}]$. The dual problem has a maximiser as observed in the remarks above the Proposition. The primal problem also has a solution $\cy$ but it is not unique. Indeed, let $\by^0=(-k_n,1,0\clc 0,1)$ and observe that ${\by^0}\tr\bb=0$ and ${\by^0}\tr\ba(x)\equiv 0$ for $x\in \bK$. In consequence, we can add to $\cy$ multiples of $\by^0$ without affecting its performance for $P_{\mathrm{UB}}$.
\hfill$\square$

We can now summarize the results for the cheapest super-replicating portfolio  as in \eqref{XT},\eqref{X0}. The difference with the above is that we need to ensure super-replication for all possible values of $S_T$ and not only for $S_T\in [K_{\nmin},K_{\nmax}]$. Since the payoff $X_T$, as a function of $S_T$, is piecewise linear with a finite number of pieces it is necessary that $\lam_T$ satisfies
\begin{equation}\label{ub_cond2}
\lam_T(0)<\infty\quad \textrm{and} \quad \lam'_T(\infty)=\gamma<\infty.
\end{equation}
Under this condition, the above results show that the cheapest super-replicating portfolio has a payoff which linearly interpolates $(K_i,\lam_T(K_i))$, $i=0\clc n$ and extends to the right of $K_n$ with slope $\lam_T'(\infty)$.
\begin{proposition}\label{ub2} If \eqref{ub_cond2} holds then there is a cheapest super-replicating portfolio $(\psi^*,\phi^*,\pi_i^*)$ of the European option with payoff $\lam_T(S_T)$ whose initial price is
\[ X^*_0= \sup_zD_T\int_{\bR^+}\lam_T(F_Tx)\mu_z(\id x).\]
The underlying component is $\phi^*=\gamma\Gamma_T$, the cash component is $\psi^*=D_T(\lam_T(K_n)-\gamma K_n)$ and the option components are 
$$\pi_i^*=\frac{\lam_T(K_{j+1})-\lam_T(K_{j})}{K_{j+1}-K_j}-\frac{\lam_T(K_{j})-\lam_T(K_{j-1})}{K_{j}-K_{j-1}}.$$
If condition  \eqref{ub_cond2} is not satisfied, there is no super-replicating portfolio.
\end{proposition}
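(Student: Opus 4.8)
The plan is to read Proposition~\ref{ub2} as the un-normalised counterpart of Proposition~\ref{prop:UB}, the only real differences being that (a)~super-replication must now hold for every $S_T\in\bR^+$, not merely for $S_T\in[K_{\nmin},K_{\nmax}]$, and (b)~the portfolio is written in terms of all $n$ traded puts. I would first settle the necessity claim, which also gives the last sentence. Any payoff of the form \eqref{XT} takes the finite value $\sum_{i=1}^n\pi_iK_i+\psi D_T^{-1}$ at $S_T=0$ and, because the puts vanish for $S_T>K_n$, is affine with slope $\phi\Gamma_T$ on $[K_n,\infty)$; hence it can dominate $\lam_T$ only if $\lam_T(0)<\infty$ and $\lam_T'(\infty)\le\phi\Gamma_T<\infty$. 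So \eqref{ub_cond2} is necessary, and when it fails no super-replicating portfolio exists.

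Assuming \eqref{ub_cond2}, I would exhibit the candidate portfolio directly. Let $\hat X_T(\cdot)$ be the continuous function that interpolates $(K_i,\lam_T(K_i))$, $i=0\clc n$, linearly on $[0,K_n]$ and continues as the line of slope $\gamma=\lam_T'(\infty)$ on $[K_n,\infty)$. Since $\lam_T$ is convex, the successive chordal slopes of $\hat X_T$ are nondecreasing and bounded above by $\gamma$, so $\hat X_T$ is convex and hence admits a representation of the form \eqref{XT} with cash weight $\psi^*=D_T(\lam_T(K_n)-\gamma K_n)$ (value-matching on the last linear piece), underlying weight $\phi^*$ the one dictated by \eqref{nun} so that the portfolio is asymptotically affine with slope $\gamma$, and option weights $\pi_i^*$ equal to the jump in the slope of $\hat X_T$ at $K_i$, i.e. the stated second-difference of $\lam_T$ (with the convention that the slope beyond $K_n$ is $\gamma$), which is $\ge0$ by convexity. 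That $\hat X_T\ge\lam_T$ on $\bR^+$ is then immediate: on each $[K_{i-1},K_i]$ a convex function lies below its chord, and on $[K_n,\infty)$ the line of slope $\gamma$ agrees with $\lam_T$ at $K_n$ and has slope $\ge\lam_T'$.

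For optimality I would pass to normalised units via \eqref{nun}, exactly as in Proposition~\ref{lb}, so that a super-replicating portfolio for $\lam_T(S_T)$ over $\bR^+$ corresponds to a feasible $\by$ for $P_{\mathrm{UB}}$ together with its behaviour off $\bK$. The normalised image of $\hat X_T$ restricted to $\bK$ is precisely the interpolant $\cy\tr\ba(\cdot)$ identified in Proposition~\ref{prop:UB}; off $\bK$ it differs from $\cy\tr\ba(\cdot)$ only by (i)~multiples of the puts with strikes $K_1\clc K_{\nmin}$ (needed when $\nmin>0$ to dominate $\lam_T$ on $[0,K_{\nmin}]$), all of which have price $0$ since $p_i=0$ for $i\le\nmin$, and (ii)~a positive multiple of the call with strike $K_{\nmax}$ (needed when $\nmax\le n$), which has price $0$ since $c_{\nmax}=0$. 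Both adjustments are costless, so the set-up cost of $\hat X_T$ equals $D_TF_TV^U_P=D_TF_TV^U_D$. Conversely any portfolio super-replicating $\lam_T(S_T)$ on all of $\bR^+$ in particular does so on $\bK$ and hence costs at least $D_TF_TV^U_P$; thus $\hat X_T$ is cheapest. Finally, using $F_T\lam(x)=\lam_T(F_Tx)$ together with the description $V^U_D=\sup_z\int_{\bK}\lam(x)\mu_z(\id x)$ from Proposition~\ref{prop:UB}, one gets $X^*_0=D_TF_TV^U_D=\sup_z D_T\int_{\bR^+}\lam_T(F_Tx)\mu_z(\id x)$.

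The step needing the most care is the off-$\bK$ extension in the optimality argument: one has to verify that whenever $\bK\subsetneq\bR^+$ there are indeed free instruments available exactly at the missing end (free puts $K_1\clc K_{\nmin}$ on the left when $\nmin>0$, a free call $K_{\nmax}$ on the right when $\nmax\le n$), that adjoining appropriate positive amounts of them to $\cy$ restores domination of $\lam_T$ on the missing intervals, and that this leaves the cost unchanged; keeping the two degenerate cases $\nmin>0$ and $\nmax\le n$ separate, as in Proposition~\ref{prop:UB}, is the only slightly delicate bookkeeping.
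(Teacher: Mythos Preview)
Your proposal is correct and follows the same route as the paper, which treats Proposition~\ref{ub2} as the un-normalised restatement of Proposition~\ref{prop:UB} without giving a separate proof. Your necessity argument, explicit construction of the interpolant, and reduction of optimality to $P_{\mathrm{UB}}$ are all sound.

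One simplification worth noting: the off-$\bK$ bookkeeping in your final paragraph, while correct, is heavier than needed. Since every $\mu\in\bM_P$ prices cash, the forward, and all $n$ puts correctly, the set-up cost of \emph{any} static portfolio $X$ equals $D_T\int_{\bR^+}X_T(F_Tx)\,\mu(\id x)$ for every $\mu\in\bM_P$. Hence any super-replicating portfolio automatically costs at least $D_T\sup_{\mu\in\bM_P}\int\lam_T(F_Tx)\,\mu(\id x)=D_TF_TV^U_D$, and for your candidate $\hat X_T$ the cost is exactly this (evaluate against $\mu_z$ and use $\hat X_T(K_i)=\lam_T(K_i)$ at the atoms, with the remainder at $z$ vanishing as in the proof of Proposition~\ref{prop:UB}). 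This avoids having to match $\hat X_T$ against $\check{\by}\tr\ba$ piece by piece and dispenses with the case analysis on $\nmin,\nmax$.
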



\subsection{Arbitrage conditions}\label{sec-arb} 
With the above results in hand we can state the arbitrage relationships when a European option whose exercise value at $T$ is a convex function $\lam_T(S_T)$ can be traded at time $0$ at price $P_\lam$ in a market where there already exist traded put options, whose prices $P_i$ are in themselves consistent with absence of arbitrage. Recalling the notation of Propositions \ref{lb} and \ref{ub2}, $X_0\da$ and $X^*_0$ are respectively the setup costs of the most expensive sub-replicating and cheapest super-replicating portfolios, with $X^*_0=+\infty$ when no super-replicating portfolio exists.

  \begin{theorem} \label{convex_payoff} 
  Assume the put prices do not admit a weak arbitrage. Consider a convex function $\lam_T$ and suppose that if $\lam_T$ is affine on some half--line $[z,\infty)$ then it is strictly convex on $[0,z)$. The following are equivalent:
  \begin{enumerate}
\item The prices $P_\lam,P_1\clc P_n$ do not admit a weak arbitrage.
\item There exists a market model for put options in which $P_\lam= D_T\bE[\lam_T(S_T)]$,
\item The following condition \eqref{c1} holds and either $P_\lam\in(X_0\da,X_0^*)$, or $P_\lam=X_0\da$ and existence holds in $D_{\mathrm{LB}}$, or $P_\lam=X_0^*<\infty$ and existence holds in $D_{\mathrm{UB}}$.
\end{enumerate}
\be P_2 > \frac{K_2}{K_1}P_1\ \textrm{if $\nmin=0$ and $\func$ is
  unbounded at the origin.}\label{c1}\ee
If \eqref{c1} holds and $P_\lam\notin[X_0^{\da},X_0^*]$ then there is a model-independent arbitrage. If \eqref{c1} holds and $P_\lam=X\da_0$ or $X_0^*$ and existence fails in $D_{\mathrm{LB}}$ or in $D_{\mathrm{UB}}$ respectively, or if \eqref{c1} fails, then there is a weak arbitrage.
\end{theorem}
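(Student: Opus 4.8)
The plan is to prove the chain of implications (2)$\Rightarrow$(1)$\Rightarrow$(3)$\Rightarrow$(2), together with the supplementary statements distinguishing model-independent from weak arbitrage. The implication (2)$\Rightarrow$(1) is the easy direction of the FTAP already invoked in Section~\ref{sec:probform}: if a market model $\cM$ exists in which $P_\lam=D_T\bE[\lam_T(S_T)]$ and $P_i=D_T\bE[K_i-S_T]^+$, then for any static portfolio $(\pi,\phi,\psi)$ with $X_0\le 0$ we have $\bE[D_T^{-1}X_T]=D_T^{-1}X_0\le 0$ since $M_t=S_t/F_t$ is a $\bP$-martingale and all payoffs are integrable; hence $\bP(X_T\ge 0)=1$ forces $X_T=0$ a.s., so no weak arbitrage exists. (One should note that $\lam_T(S_T)$ is integrable under the hypotheses because $\lam_T$ is dominated by an affine function of $[k_i-x]^+$ and $x$ near the relevant support, or directly because $\mu\in\bM_P$ puts mass only on $\bK$ by Lemma~\ref{support} and $\lam$ is finite on $(0,\infty)$ — the only subtlety is an atom at $0$ when $\lam$ is unbounded there, which is exactly what \eqref{c1} rules out.)

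Next I would prove (1)$\Rightarrow$(3) by contraposition, i.e.\ show that if (3) fails then there is a weak (sometimes model-independent) arbitrage. Split into cases. If \eqref{c1} fails, then $P_1>0$ yet $P_2\le (K_2/K_1)P_1$, so the portfolio long one put of strike $K_2$ and short $K_2/K_1$ puts of strike $K_1$ has non-positive setup cost and a payoff that is $\ge 0$ everywhere and strictly positive on $(0,K_2)$; in any model $S_T$ must put positive mass somewhere (or be constant, handled separately), and if that mass avoids $(0,K_2)$ one obtains a different arbitrage from $P_1>0$ — this is exactly the argument already run inside the proof of Proposition~\ref{lb}, so I would quote it. Now assume \eqref{c1} holds. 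If $P_\lam<X_0^\da$, take the sub-replicating portfolio $X^\da$ of Proposition~\ref{lb}: buying $X^\da$ and selling the $\lam_T$-option yields setup cost $P_\lam-X_0^\da<0$ and payoff $X_T^\da-\lam_T(S_T)\le 0$... — rather, sell $X^\da$, i.e.\ the portfolio with payoff $\lam_T(S_T)-X_T^\da\ge0$ bought for $P_\lam-X_0^\da<0$, a model-independent arbitrage. Symmetrically if $P_\lam>X_0^*$ use the super-replicating portfolio of Proposition~\ref{ub2}. If $P_\lam=X_0^\da$ but existence fails in $D_{\mathrm{LB}}$: the same portfolio $\lam_T(S_T)-X_T^\da$ has zero cost and non-negative payoff; it is a weak arbitrage unless some model makes $\lam_T(S_T)=X_T^\da$ a.s., i.e.\ the distribution of $M_T$ is an optimal dual measure — but Proposition~\ref{prop:LBdualexist} says no such measure exists, so it is a weak arbitrage. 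The case $P_\lam=X_0^*$ with $D_{\mathrm{UB}}$-existence failing is handled by the mirror argument using Proposition~\ref{prop:UB}. (One must also treat the boundary possibility that $S_T$ is deterministic, which the hypotheses and Proposition~\ref{prop:DH} constrain.)

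The substantive direction is (3)$\Rightarrow$(2): given $P_\lam$ in the stated range (plus \eqref{c1}), construct a market model for the put options with $P_\lam=D_T\bE[\lam_T(S_T)]$, equivalently a measure $\mu\in\bM_P$ with $\int\lam\,d\mu=p_\lam$. By Proposition~\ref{lb} the infimum of $\int\lam\,d\mu$ over $\bM_P$ is $V_D^L=V_P^L=X_0^\da/(D_TF_T)$ and by Propositions~\ref{prop:UB}–\ref{ub2} the supremum is $V_D^U=X_0^*/(D_TF_T)$ (with $+\infty$ allowed). If $p_\lam$ equals one of these endpoints, use the dual existence hypothesis to get the extremal $\mu^\da$ (resp.\ $\mu_{k_{\nmax}}$ or the limiting maximiser) directly. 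For $p_\lam$ strictly between, I would use a convexity/connectedness argument: the set $\bM_P$ is convex and the functional $\mu\mapsto\int\lam\,d\mu$ is affine, so its range over $\bM_P$ is an interval; it therefore suffices to exhibit two measures in $\bM_P$ whose $\lam$-integrals bracket $p_\lam$ and take a convex combination. Near the lower end one can take $\mu_\epsilon$ close to $\mu^\da$ of Proposition~\ref{lb} (or, when dual existence fails, a minimising sequence $\mu_j$ from the proof of Proposition~\ref{prop:LBdualexist} — here $p_\lam>X_0^\da/(D_TF_T)$ strictly, so some $\mu_j$ overshoots); near the upper end take $\mu_z$ of Proposition~\ref{prop:UB} for suitable $z$ (or $\mu_{k_{\nmax}}$). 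The measure $\mu$ so produced is a finite combination of Diracs, hence has finite support, and \eqref{c1} guarantees we can keep mass off $0$ so that $\int\lam\,d\mu<\infty$. Finally convert $\mu$ into a genuine market model via the Skorokhod-embedding recipe already spelled out after Proposition~\ref{prop:DH} (Wiener space, $M_t=1+W_{\tau\wedge(t/(T-t))}$).

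The main obstacle I anticipate is the careful bookkeeping at the boundary cases $P_\lam=X_0^\da$ and $P_\lam=X_0^*$: one must show that \emph{existence in the dual} is exactly the dividing line between "(2) holds" and "only a weak, not model-independent, arbitrage exists," and that when existence fails the candidate arbitrage is genuinely only weak — i.e.\ for \emph{every} model there is a profitable static portfolio, but no single portfolio works model-independently. This is where Propositions~\ref{prop:LBdualexist} and~\ref{prop:UB} do the real work, and stitching their "iff" characterisations of dual existence to the definition of weak arbitrage (quantifier order: for all models, exists a portfolio) is the delicate point. A secondary nuisance is the hypothesis "if $\lam_T$ is affine on $[z,\infty)$ then strictly convex on $[0,z)$," which is needed so that the sub-hedge can touch $\lam_T$ only at finitely many points (ensuring the arbitrage portfolios above really are non-trivial and that $\mu^\da$ is atomic with one atom per interval); I would invoke it exactly where the proof of Proposition~\ref{prop:LBdualexist} uses "$\lam_T$ not affine on a half-line," extended to allow the one permitted affine tail.
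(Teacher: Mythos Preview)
Your overall architecture (the chain (2)$\Rightarrow$(1)$\Rightarrow$(3)$\Rightarrow$(2), the convex-combination argument for (3)$\Rightarrow$(2), and the sub/super-hedge portfolios for the strict inequalities) matches the paper's. The gap is exactly where you flagged it might be: the weak-arbitrage constructions at the boundary and when \eqref{c1} fails.

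\emph{Boundary case $P_\lam=X_0^\da$ with no dual minimiser.} Your key step is: if under a model $\cM$ one has $\lam_T(S_T)=X_T^\da$ a.s., then ``the distribution of $M_T$ is an optimal dual measure.'' This inference is invalid. A \emph{model} in the sense of Section~\ref{sec:probform} is an arbitrary filtered probability space with $S_0$ matching; it need not price the puts or the forward correctly, so the law $\hat\mu$ of $M_T$ need not lie in $\bM_P$ and Proposition~\ref{prop:LBdualexist} says nothing about it. For such $\cM$ the zero-cost portfolio $\lam_T(S_T)-X_T^\da$ is identically zero and you must exhibit a \emph{different} arbitrage. The paper does this by a further case split: when $\lam_T$ has no affine tail, \eqref{eq:dual_equiv} forces $\hat\mu([K_n,\infty))=0$, and shorting the call with strike $K_n$ (price strictly positive since $\nmax=\infty$) is an arbitrage under $\cM$; when $\lam_T$ is affine on $[z,\infty)$ and strictly convex on $[0,z)$, the contact set $\{\lam_T=X^\da_T\}\cap[0,z)$ is a finite set $\{s_1,\ldots,s_m\}$, one point per put-interval, and one builds a zigzag portfolio $Y_T=\sum_i\gamma_i(K_i-S_T)^++\psi$ vanishing at each $s_j$ and equal to $\psi>0$ on $[K_n,\infty)$, whose setup cost is shown to be zero using that the limiting measure $\mu^\da$ (from the proof of Proposition~\ref{prop:LBdualexist}, with $\iota_1=1$) prices the puts and is supported in $\{s_j/F_T\}$. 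This is precisely where the hypothesis ``strictly convex on $[0,z)$'' is used, and it is not recoverable from your sketch.

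\emph{Failure of \eqref{c1}.} Here your proposed arbitrage uses only the puts: $H_1(S_T)=[K_2-S_T]^+-\tfrac{K_2}{K_1}[K_1-S_T]^+$. But the put prices alone are assumed arbitrage-free, so the weak arbitrage \emph{must} involve the $\lam$-option. Concretely, $H_1$ vanishes at $S_T=0$ and on $[K_2,\infty)$; a model with $\hat\mu(\{0\})>0$ and $\hat\mu((0,K_2))=0$ defeats your portfolio, and ``a different arbitrage from $P_1>0$'' does not exist (selling put~1 incurs obligation $K_1$ when $S_T=0$). The paper's fix is to add a second zero-cost portfolio $H_2$ built from the $\lam$-option minus its tangent at $K_2$ minus a multiple of put~1; because $\lam_T(s)\to\infty$ as $s\to0$, a suitable combination $\theta H_1+H_2$ is strictly positive on all of $(0,K_2)$ \emph{including near~$0$}, and one then branches on whether $\bP[S_T\in[0,K_2)]>0$.

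In short: your (3)$\Rightarrow$(2) and the model-independent arbitrage cases are fine and agree with the paper, but the two weak-arbitrage constructions require the additional, model-dependent secondary portfolios described above, and your appeals to Propositions~\ref{lb} and~\ref{prop:LBdualexist} do not supply them.
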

\begin{remark}\label{rk:affine_mod}
We note that the robust pricing and hedging problem solved above is essentially invariant if $\lam_T$ is modified by an affine factor. More precisely, if we consider a European option with payoff $\lam_T^1(S_T)=\lam_T(S_T)+\phi S_T + \psi$ and let $P_{\lam^1}$ denote its price then the prices $P_1\clc P_n,P_\lam$ are consistent with absence of arbitrage if and only if $P_1\clc P_n,P_{\lam^1}=P_\lam+\phi S_0/\Gamma_T+\psi D_T$ are.
\end{remark}
\proof Suppose first that condition \eqref{c1} holds. We saw in the proof of Proposition \ref{lb} that this condition (under its equivalent form $p_2 > (k_2/k_1)p_1$) guarantees the existence of a sub-replicating portfolio with value $X_0\da$. If $P_\lam\in(X_0\da,X_0^*)$ then there exists $\epsilon>0$ such that $P_\lam\in(X_0\da+\epsilon, X^*_0-\epsilon)$ and, since there is no duality gap, there are measures $\mu_1,\mu_2\in\bM_P$ such that $D_T\bE_{\mu_1}[\lam_T(S_T)]<X_0\da+\epsilon$ and $D_T\bE_{\mu_2}[\lam_T(S_T)]>X_0^*-\epsilon$.
A convex combination $\mu$ of $\mu_1$ and $\mu_2$ then satisfies $D_T\bE_{\mu}[\lam_T(S_T)]=P_\lam$ and one constructs a market model, for example by using Skorokhod embedding as explained in Section \ref{sec:probform} above. If existence holds in $D_{\mathrm{LB}}$ then it was shown in the proof of Proposition \ref{lb} that the minimizing measure $\mu\da$ satisfies
\[ X\da_0=D_T\int_\bK\lam_T(F_Tx)\mu\da(\id x),\]
so that if $P_\lam=X\da_0$ then $\mu\da$ is a martingale measure that consistently prices the convex payoff $\lam_T$ and the given set of put options. The same argument applies on the upper bound side. 

Next, suppose that condition \eqref{c1} holds and $P_\lam=X_0\da$ but no minimizing measure $\mu$ exists in $D_{\mathrm{LB}}$. 
Let $\cM$ be a model and $\hat{\mu}$ the distribution of $S_T$ under $\cM$.  We can, at zero initial cost, buy $\lam_T(S_T)$ and sell the portfolio $X\da_T$ and this strategy realizes an arbitrage under $\cM$ if $\hat{\mu}(\{\lam_T(S_T)>X_T\da\})>0$. Suppose now that $\hat{\mu}(\{\lam_T(S_T)>X_T\da\})=0$ and consider two cases. First, if $\lam_T$ is not affine on some $[z,\infty)$ then, by Proposition \ref{prop:LBdualexist}, $\nmax=\infty$ and $\lam_T(S_T)>X_T\da$ for $S_T\geq K_n$ so that in particular $\hat{\mu}([K_n,\infty))=0$. A strategy of going short a call option with strike $K_n$ (which has strictly positive price since $\nmax=\infty$) gives an arbitrage since $S_T<K_n$ a.s.\ in $\cM$. Second, suppose that $\lam_T''(x)\equiv 0$ for $x\geq z$ but $\lam_T''(x)>0$ for $x<z$.  
If $\hat{\mu}([K_n,\infty))=0$ then we construct the arbitrage as previously so suppose this does not hold.
Recall the atomic measure $\mu\da$ defined in Case 2 in the proof of Proposition \ref{prop:LBdualexist} and that $\iota_1=1$ as we do not have existence of a minimiser for the dual problem. $\lam(x)$ in \eqref{eq:def_lam} is strictly convex on $[0,\tilde z)$ and linear on $[\tilde z,\infty)$ with $\tilde z=z/F_T$. It is not hard to see that $\mu\da$ has to have an atom in some $x^*_n\in (k_{n-1},k_n)$ and that either $\tilde z=x^*_n$ or else $\tilde z\geq k_n$. Otherwise we could modify $\mu\da$ to obtain a minimiser for the dual problem. Strict convexity of $\lam_T$ on $[0,z)$ implies that there exist at most $n$ points $s_1,\ldots, s_m$ such that $s_i\in (K_{i-1},K_i)$ and $\lam_T(S_T)$ strictly dominates $X\da_T$ for other values of $S_T\leq z$ and hence $\hat\mu([0,z))=\hat\mu(\{s_1,\ldots,s_n\})$. 
It follows that support of $\mu\da$ is a subset of $\{s_1/F_T,\ldots,s_n/F_T=x^*_n\}$. Let $\psi>0$ and consider a portfolio $Y$ with 
$$Y_T=Y_T(S_T)=\sum_{i=1}^n \gamma_i (K_i-S_T)^+ +\psi,\quad \textrm{such that}\quad Y_T(s_j)=0,\ j=1,\dots n.$$
This uniquely specifies $\gamma_i\in \bR$. The payoff of $Y$ is simply a zigzag line with kinks in $K_i$, zero in each $s_i$ and equal to $\psi$ for $S_T\geq K_n$. It follows that $\hat\mu\{Y_T\geq 0\}=1$ and $\hat\mu\{Y_T>0\}>0$ as $\hat{\mu}([K_n,\infty))>0$. However $\mu\da$ prices all put options correctly so that the initial price of $Y$ is
\begin{equation}
\begin{split}
Y_0=&\sum_{i=1}^n \gamma_i P_i + D_T\psi=D_T\sum_{i=1}^n \gamma_i \int_0^{k_i}(K_i-F_Tm)\mu\da(\id m)+D_T\psi\\
=&D_T\int_0^{k_n}\left(\sum_{i=1}^n\gamma_i(K_i-F_Tm)^++\psi\right)\mu\da(\id m)=D_T\int_0^{k_n}Y_T(F_T m)\mu\da(\id m)=0,
\end{split}
\end{equation}
by construction of $Y$ since $\mu\da(\{s_1/F_T,\ldots,s_n/F_T\})=1$ as remarked above, and where we used $\iota_1=1$. It follows that $Y$ is an arbitrage strategy in $\cM$.\\
Now suppose that condition \eqref{c1} holds and $P_\lam=X_0^*<\infty$ and there is no maximising measure in the dual problem $D_{\mathrm{UB}}$.  Then, by Proposition \ref{prop:UB}, $\nmax=\infty$ and $\lam_T(S_T)<X_T^*$ for $S_T>K_n$. Straightforward arguments as in the first case above show that there is a weak arbitrage.\\
 Finally, if $P_\lam<X_0\da$ a model independent arbitrage is given by buying the European option with payoff $\lam_T(S_T)$ and selling the subheding portfolio. This initial cost is negative while the payoff, since $X\da$ subhedges $\lam_T(S_T)$, is non-negative. If $P_\lam>X_0^*$ we go short in the option and long in the superhedge.

Now suppose \eqref{c1} does not hold, so that $P_2/K_2=P_1/K_1$ (this is the only case other than \eqref{c1} consistent with absence of arbitrage among the put options). Consider portfolios with exercise values
\bstar H_1(S_T)&=&[K_2-S_T]^+-\frac{K_2}{K_1}[K_1-S_T]^+\\
H_2(S_T)&=& \lam_T(S_T)-\lam_T(K_2)-\lam'_0(K_2)(S_T-K_2)-\frac{1}{P_1}\left(P_\lam-\lam(K_2)-\lam'_{0}(F_T-K_2)\right)[K_1-S_T]^+,\estar
where $\lam'_0$ denotes the left derivative. The setup cost for each of these is zero, and $H_1(s)>0$ for $s\in(0,K_2)$ while $H_2(s)\to\infty$ as $s\to0$. There is a number $\theta\geq0$ such that $H(s)=\theta H_1(s)+H_2(s)>0$ for $s\in (0,K_2)$. Weak arbitrage is realized by a portfolio whose exercise value depends on a given model $\cM$ and is specified via
\[ X_T(S_T)=\begin{cases}-[K_2-S_T]^+&\mathrm{if\,\, }\bP[S_T\in[0,K_2)]=0\\
H(S_T)&\mathrm{if\,\, }\bP[S_T\in[0,K_2)]>0.\end{cases}\]
This completes the proof.\hfill$\square$
\section{Weighted variance swaps}\label{sec-wvs}
We come now to the second part of the paper where we consider weighted variance swaps. The main idea, as indicated in the Introduction, is to show that a weighted variance swap contract is equivalent to a European option with a convex payoff and hence their prices have to be equal. The equivalence here means that the difference of the two derivatives may be replicated through trading in a model-independent way. In order to formalise this we need to define (continuous) trading in absence of a model, i.e. in absence of a fixed probability space. This poses technical difficulties as  we need to define pathwise stochastic integrals.

One possibility is to define stochastic integrals as limits of discrete sums. The resulting object may depend on the sequence of partitions used to define the limit. This approach was used in \citet{BicWil94} who interpreted the difference resulting from different sequences of partitions as broker's method of implementing continuous time trading order. They were then interested in what happens if they apply the pricing-through-replication arguments on the set of paths with a fixed ($\sigma^2$) realised quadratic variation and wanted to recover Black-Scholes pricing and hedging. However for our purposes the ideas of \cite{BicWil94} are not suitable. We are interested in a much wider set of paths and then the replication of a weighted variance swap combining trading and a position in a European option would depend on the `broker' (i.e.\ sequence of partitions used).
Instead, as in \cite{Lyons:95}, we propose an approach inspired by the work of \citet{Fol81}. We restrict the attention to paths which admit quadratic variation or pathwise local time. For such paths we can develop pathwise stochastic calculus including It\^o and Tanaka formulae. As this subject is self-contained and of independent interest we isolate it in Appendix \ref{sec-pathwise}. Insightful discussions of this topic are found in \cite{BicWil94} and \cite{Lyons:95}.

To the standing assumptions (i)--(iii) of Section \ref{sec:probform} we add another one:\medskip\\
\ni(iv) $(S_t:t\leq T)\in \cL^+$ -- the set of \emph{strictly} positive, continuous functions on $[0,T]$ which admit a finite, non-zero, quadratic variation and a pathwise local time, as formally defined in Definitions \ref{def-qvp},\ref{L-def} and Proposition \ref{LsubQ} of Appendix \ref{sec-pathwise}.\medskip

Thus, our idea for the framework, as opposed to fixing a specific model $\cM$, is to assume we are given a set of possible paths for the price process: $(S_t: t\leq T)\in \pathspc$. This could be, for example, the space of continuous non-negative functions, the space of functions with finite non-zero quadratic variation or the space of continuous functions with a constant fixed realised volatility. The choice of $\pathspc$ is supposed to reflect our beliefs about characteristics of price dynamics as well as modelling assumptions we are willing to take. Our choice above, $\pathspc=\cL^+$, is primarily dictated by the necessity to develop a pathwise stochastic calculus. It would be interesting to understand if an appropriate notion of no-arbitrage implies (iv). A recent paper of \citet{Vov11}, based on a game-theoretic approach to probability, suggests one may exclude paths with infinite quadratic variation through a no-arbitrage-like restriction, an interesting avenue for further investigation.

We introduce now a continuous time analogue of the weighted realised variance \eqref{rv}. Namely, we consider a market in which, in addition to finite family of put options as above, a $w$-weighted variance swap is traded. It is specified by its payoff at maturity $T$:
\begin{equation}\label{eq:wRVpayoff}
RV^w_T - \prv:=\int_0^T w(S_t/F_t) \id \langle \log S\rangle_t - P^{\text{\tiny RV(w)}}_T,
\end{equation}
where $\prv$ is the swap rate, and has null entry cost at time $0$. The above simplifies \eqref{rv} in two ways. First, similarly to the classical works on variance swaps going back to \cite{neuberger_94}, we consider a continuously and not discretely sampled variance swap which is easier to analyse with tools of stochastic calculus. Secondly, the weighting in \eqref{rv} is a function of the asset price $h(S_{t_i})$ and in \eqref{eq:wRVpayoff} it is a function of the ratio of the actual and the forward prices $w(S_t/F_t)$. This departure from the market contract definition is unfortunate but apparently necessary to apply our techniques. In practice, if $\hat{w}(S_t)$ is the function appearing in the contract definition we would apply our results with $w(x)=\hat{w}(S_0x)$, so that $w(S_t/F_t)=\hat{w}((S_0/F_t)S_t)$. Since maturity times are short and, at present, interest rates are low, we have $S_0/F_t\approx 1$. See below and Section \ref{comp} for further remarks.

Our assumption (iv) and Proposition \ref{YfX} imply that $(\log S_t, t\leq T)\in\cL$. Theorem \ref{extended-ito} implies that \eqref{eq:wRVpayoff} is well defined as long as $w\in \Ltwoloc$, we can integrate with respect to $S_t$ or $M_t$ and obtain an It\^o formula. This leads to the following representation.
\begin{lemma} \label{lem:hedge_thm}
  Let $w:\R_+\to [0,\infty)$ be a locally square integrable function and consider a convex $C^1$ function $\lam_w$ with $\lam_w''(a)=\frac{w(a)}{a^2}$. The extended It\^o formula \eqref{ito} then holds and reads
\begin{equation}\label{eq:wRVhedge}
\lam_w(M_T)   = \lam_w(1) +\int_0^{T} \lam_w'(M_u) \id M_u + \frac{1}{2}\int_{[0,T]} w(M_u) \id \langle \ln M \rangle_u.
\end{equation}
\end{lemma}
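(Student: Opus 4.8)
The statement is essentially an application of the extended It\^o/Tanaka formula developed in Appendix \ref{sec-pathwise} (Theorem \ref{extended-ito}) to the process $(M_u, u\le T)$ and the convex $C^1$ function $\lam_w$. The plan is as follows. First I would verify that $\lam_w$ is well-defined: since $w$ is non-negative and locally square integrable, $w(a)/a^2$ is locally integrable on $(0,\infty)$, so $\lam_w'$ can be defined as a primitive of $a\mapsto w(a)/a^2$ plus a constant, and $\lam_w$ as a primitive of $\lam_w'$; this makes $\lam_w$ convex (its second derivative in the distributional sense is the non-negative measure $\lam_w''(\id a)=\frac{w(a)}{a^2}\id a$) and $C^1$. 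Second, I would record that by standing assumption (iv), $(S_t)\in\cL^+$, and by Proposition \ref{YfX} (applied to $x\mapsto\log x$, which is $C^2$ on $(0,\infty)$ since $S$ stays strictly positive on the compact $[0,T]$) we get $(\log S_t)\in\cL$; equivalently, normalising by the continuous bounded-variation forward $F_t$, $(M_t)=(S_t/F_t)\in\cL^+$ and $(\log M_t)\in\cL$, with $\langle\log M\rangle=\langle\log S\rangle$ since $F$ has zero quadratic variation and zero cross-variation with $S$.

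Third, the heart of the argument: apply the extended It\^o formula \eqref{ito} of Theorem \ref{extended-ito} to $\lam_w$ and the path $(M_u)$. Because $\lam_w\in C^1$ with $\lam_w''\in\Ltwoloc$ (indeed $\lam_w''(a)=w(a)/a^2$ and $w\in\Ltwoloc$, so on any compact subinterval of $(0,\infty)$ this is square integrable), $\lam_w$ lies in the Sobolev-type class $\cW_2$ for which the pathwise It\^o–Tanaka formula holds. This yields
\begin{equation*}
\lam_w(M_T)=\lam_w(M_0)+\int_0^T\lam_w'(M_u)\,\id M_u+\frac12\int_{[0,T]}\lam_w''(M_u)\,\id\langle M\rangle_u,
\end{equation*}
where the stochastic integral is the pathwise (F\"ollmer) integral and the last term is expressed either via occupation-time/local-time or directly against $\id\langle M\rangle_u$. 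Using $M_0=1$, $\lam_w''(a)=w(a)/a^2$, and the change-of-variable identity $\id\langle M\rangle_u = M_u^2\,\id\langle\log M\rangle_u$ (itself a consequence of Proposition \ref{YfX}/the It\^o formula applied to $\log$), the final term becomes $\frac12\int_{[0,T]}\frac{w(M_u)}{M_u^2}M_u^2\,\id\langle\log M\rangle_u=\frac12\int_{[0,T]}w(M_u)\,\id\langle\log M\rangle_u$, giving exactly \eqref{eq:wRVhedge}.

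\textbf{Main obstacle.} The only real subtlety — and where I would spend the most care — is the regularity bookkeeping at the two boundaries. On the lower side, $\lam_w''(a)=w(a)/a^2$ may blow up as $a\downarrow 0$, so I must check that the relevant integrals make sense; this is handled by the fact that $M$ is \emph{strictly} positive and continuous on $[0,T]$, hence bounded away from $0$ and $\infty$ along its (compact) path, so only the behaviour of $w$ and $\lam_w$ on a fixed compact subinterval $[\epsilon,1/\epsilon]\subset(0,\infty)$ matters, where local square integrability of $w$ is exactly what Theorem \ref{extended-ito} requires. On the upper side one checks similarly that $\lam_w$, though possibly unbounded at infinity, is only ever evaluated on that same compact range. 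The remaining steps — that the pathwise integral $\int_0^T\lam_w'(M_u)\id M_u$ is well defined (Theorem \ref{extended-ito}, since $\lam_w'$ is continuous), and the passage between $\id\langle\log S\rangle$ and $\id\langle\log M\rangle$ — are routine given the appendix, so I would state them briefly and refer back.
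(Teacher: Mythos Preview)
Your proposal is correct and follows the same route as the paper: the paper does not give a separate proof of this lemma but presents it as an immediate consequence of the extended It\^o formula (Theorem \ref{extended-ito}), having noted just before the statement that assumption (iv) together with Proposition \ref{YfX} gives $(\log S_t)\in\cL$ and that one may integrate with respect to $S_t$ or $M_t$. Your write-up simply makes explicit the regularity checks ($\lam_w\in\cW_2$ on the compact range of $M$, strict positivity of $M$, and the relation $\id\langle M\rangle_u=M_u^2\,\id\langle\log M\rangle_u$) that the paper leaves implicit.
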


The function $\lam_w$ is specified up to an addition of an affine component which does not affect pricing or hedging problems for a European option with payoff $\lam_w$, see Remark \ref{rk:affine_mod} above. In what follows we assume that $w$ and $\lam_w$ are fixed.
Three motivating choices of $w$, as discussed in the Introduction, and the corresponding functions $\func_w$, are:
\begin{enumerate} \label{funcs}
\item Realised variance swap: $w \equiv 1$ and $\func_w(x) = -\ln(x)$. In this case there is of course no distinction between $w$ and the contract function $\hat{w}$. 
\item Corridor variance swap: $w(x)  = \mathbf{1}_{(0,a)}(x)$ or $w(x) =
  \mathbf{1}_{(a,\infty)}(x)$, where $0<a<\infty$ and 
$$ \func_w(x) =  \left( -\ln\left(\frac{x}{a}\right) + \frac{x}{a} - 1 \right) w(x).$$ Here we would take $a=b/S_0$ if the contract corridor is $(0,b)$ or $(b,\infty)$
\item Gamma swap: $w(x)=S_0x$ and $ \func_w(x) = S_0(x \ln (x) - x)$.
\end{enumerate}
Clearly, \eqref{eq:wRVhedge} suggests that we should consider portfolios which trade dynamically and this will allow us to link $w$-weighted realised variance $RV^w_T$ with a European option with a convex payoff $\lam_w$. Note however that it is sufficient to allow only for relatively simple dynamic trading where the holdings in the asset only depend on asset's current price. More precisely, we extend the definition of portfolio $X$ from static portfolios as in \eqref{XT}-\eqref{X0} to a class of dynamic portfolios. We still have a static position in traded options. These are options with given market prices at time zero and include $n$ put options but could also include another European option, a weighted variance swap or other options. At time $t$ we also hold $\Gamma_t\phi(M_t)$ assets $S_t$ and $\psi_t/D_t$ in cash. The portfolio is self-financing on $(0,T]$ so that 
\begin{equation}\label{eq:psi_def}
\psi_t:=\phi(M_0)S_0+\psi(0,S_0)+\int_0^t \phi(M_u)\id M_u - \Gamma_t\phi(M_t)S_tD_t,\quad t\in (0,T],
\end{equation}
and where $\phi$ is implicitly assumed continuous and with a locally square integrable weak derivative so that the integral above is well defined, cf.~Theorem \ref{extended-ito}. We further assume that there exist: a linear combination of options traded at time zero with total payoff $Z=Z(S_t:t\leq T)$, a convex function $G$ and constants $\tilde \phi, \tilde \psi$ such that
\begin{equation}\label{eq:admis_cond}
\Gamma_t\phi(M_t)S_t + \psi_t/D_t\geq Z - G(M_t)/D_t+ \tilde \phi\Gamma_t S_t + \tilde \psi/D_t,\quad \forall t\leq T.
\end{equation}
Such a portfolio $X$ is called \emph{admissible}. Observe that, in absence of a model, the usual \emph{integrability} of $Z$ is replaced by \emph{having finite price at time zero}. 
In the classical setting, the admissibility of a trading strategy may depend on the model. Here 
admissibility of a strategy $X$ may depend on which options are assumed to trade in the market. The presence of the term $G(M_t)$ on the RHS will become clear from the proof of Theorem \ref{thm:wVS_arb} below. It allows us to enlarge the space of admissible portfolios for which Lemma \ref{lem:mm->na} below holds.

The two notions of arbitrage introduced in Section \ref{sec:probform} are consequently extended by allowing not only static portfolios but possibly dynamic admissible portfolios as above. All the previous results remain valid with the extended notions of arbitrage. Indeed, if given prices admit no \emph{dynamic} weak arbitrage then in particular they admit no \emph{static} weak arbitrage. And for the reverse, we have the following general result.
\begin{lemma}\label{lem:mm->na}
Suppose that we are given prices for a finite family of co-maturing options\footnote{These could include European as well as exotic options.}. If a market model $\cM$ exists for these options then any admissible strategy $X$ satisfies $\bE[D_TX_T]\leq X_0$. In particular, the prices do not admit a weak arbitrage.
\end{lemma}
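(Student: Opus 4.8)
The plan is to exploit the self-financing condition and the pathwise It\^o formula of Appendix \ref{sec-pathwise} to show that the discounted value process of any admissible portfolio is a supermartingale under the measure $\bP$ of the market model $\cM$, and then conclude by optional sampling. Concretely, let $X$ be an admissible portfolio with dynamic holding $\phi(M_t)$ in the asset, cash position $\psi_t/D_t$ given by \eqref{eq:psi_def}, and a static position in the market-traded options with combined payoff, say, $V_T$ (a linear combination of the puts and possibly the weighted variance swap and other options). Write $\bar X_t := D_t(\Gamma_t\phi(M_t)S_t + \psi_t/D_t)$ for the discounted value of the \emph{dynamic} part. Using $\Gamma_t S_t D_t = M_t/(\cdot)$ — more precisely using $M_t=S_t/F_t$ and $F_t = S_0/(D_t\Gamma_t)$ so that $\Gamma_t S_t D_t = S_0 M_t$ — the self-financing relation \eqref{eq:psi_def} gives
\begin{equation*}
\bar X_t = \bar X_0 + S_0\int_0^t \phi(M_u)\,\id M_u,
\end{equation*}
where the integral is the F\"ollmer pathwise integral, which exists because $\phi$ is continuous with locally square-integrable weak derivative and $M(\cdot)\in\cL$ (Theorem \ref{extended-ito}).

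The first key step is to identify this pathwise integral with the It\^o stochastic integral under $\bP$. Since $\cM$ is a market model, $M_t=S_t/F_t$ is a $\bP$-martingale, hence a continuous semimartingale, and for a.e.\ path $M(\cdot,\omega)\in\cL$; for such paths the F\"ollmer integral $\int_0^t\phi(M_u)\,\id M_u$ coincides with the It\^o integral $\int_0^t \phi(M_u)\,\id M_u$ (this is the classical consistency result for F\"ollmer calculus, available from Appendix \ref{sec-pathwise}). Therefore $\bar X_t = \bar X_0 + S_0\int_0^t\phi(M_u)\,\id M_u$ $\bP$-a.s., which is a continuous \emph{local} martingale under $\bP$. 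The second key step is to upgrade this to a true supermartingale using the admissibility bound \eqref{eq:admis_cond}: there is a linear combination of time-zero-traded options with payoff $Z$, a convex function $G$ and constants $\tilde\phi,\tilde\psi$ with
\begin{equation*}
\Gamma_t\phi(M_t)S_t + \psi_t/D_t \;\geq\; Z - G(M_t)/D_t + \tilde\phi\Gamma_t S_t + \tilde\psi/D_t,\qquad t\le T.
\end{equation*}
Multiplying by $D_t$, the local martingale $\bar X_t$ is bounded below by $D_t Z - G(M_t) + S_0\tilde\phi M_t + \tilde\psi$. Now $D_t Z$ — being the discounted value at time $t$ of a static position in options whose time-zero prices are given and which, in a market model, have $\bP$-integrable discounted payoffs — together with $S_0\tilde\phi M_t$ (a $\bP$-martingale) and $\tilde\psi$ (constant) and $-G(M_t)$ (bounded below by an affine function of $M_t$ since $G$ is convex, hence $\bP$-integrable) shows that $\bar X_t$ is bounded below by an integrable $\bP$-martingale-plus-affine process. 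Hence $\bar X_t$ is a local martingale bounded below by a martingale, so it is a true supermartingale under $\bP$.

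With $\bar X$ a $\bP$-supermartingale and starting from $\bar X_0 = D_0(\text{dynamic set-up cost})$, optional sampling at $T$ gives $\bE[\bar X_T]\le \bar X_0$. Adding the (deterministic, model-independent) value of the static option position — which at time $T$ contributes $D_T V_T$ and at time $0$ contributes its market price, and for which $\bE[D_T V_T]$ equals that market price precisely because $\cM$ is a market model — yields $\bE[D_T X_T]\le X_0$. This is the claimed inequality; in particular $X_0\le 0$ forces $\bE[D_T X_T]\le 0$, which is incompatible with $\bP(X_T\ge 0)=1$ and $\bP(X_T>0)>0$, so no weak arbitrage exists. I expect the main obstacle to be the careful justification of the two upgrade steps: first, that the F\"ollmer pathwise integral agrees $\bP$-a.s.\ with the It\^o integral (requiring that $\bP$-a.e.\ path lies in $\cL$ and that the partition-independence in Appendix \ref{sec-pathwise} is compatible with the sequence realising the It\^o integral), and second, the integrability/lower-bound bookkeeping needed to promote the local martingale $\bar X$ to a genuine supermartingale — this is exactly where the extra $G(M_t)$ term in \eqref{eq:admis_cond} earns its keep, by ensuring a one-sided $\bP$-integrable bound without assuming integrability of $Z$ directly.
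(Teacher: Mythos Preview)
Your overall architecture matches the paper's: split $X$ into a static option position and a dynamic part $Y_t=\Gamma_t\phi(M_t)S_t+\psi_t/D_t$, use the self-financing relation to write $D_tY_t$ as $Y_0$ plus a F\"ollmer integral, invoke Theorem \ref{prob_conn} to identify this with an It\^o integral so that $D_tY_t$ is a continuous local martingale under $\bP$, and then use admissibility to upgrade to $\bE[D_TY_T]\le Y_0$.

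The gap is in your upgrade step. The lower bound you write down, $D_tZ - G(M_t) + S_0\tilde\phi M_t + \tilde\psi$, is \emph{not} a martingale, nor bounded below by one. First, $Z$ is the terminal payoff of an option portfolio---an $\cF_T$-measurable random variable, not an $\cF_t$-adapted process---so ``$D_tZ$ is the discounted value at time $t$'' is a misreading; the inequality \eqref{eq:admis_cond} is a pathwise inequality involving the fixed random variable $Z$. Second, your claim that ``$-G(M_t)$ is bounded below by an affine function of $M_t$ since $G$ is convex'' is backwards: convexity gives $G(x)\ge ax+b$, hence $-G(x)\le -ax-b$, an \emph{upper} bound. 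So the lower bound can be arbitrarily negative and the ``local martingale bounded below by a martingale $\Rightarrow$ supermartingale'' lemma does not apply.

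The paper resolves this by a direct Fatou--localization argument rather than invoking a packaged supermartingale criterion. One subtracts the genuine martingale $\tilde\phi^+S_0M_t$ to form $N_t:=D_tY_t-\tilde\phi^+S_0M_t$, takes a localizing sequence $\rho_n$, and shows separately that $\bE N_t^+\le\liminf\bE N_{t\wedge\rho_n}^+$ (Fatou) and $\bE N_t^- = \liminf\bE N_{t\wedge\rho_n}^-$. The latter is where $G$ ``earns its keep'': since $G(M)$ is a submartingale (Jensen) with $\bE G(M_T)<\infty$ in a market model, optional sampling gives $\bE G(M_{t\wedge\rho_n})\le\bE G(M_t)$, and Fatou gives the reverse inequality in the limit, so $\bE G(M_{t\wedge\rho_n})\to\bE G(M_t)$. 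This controls the negative part of $N$ (since $N_t\ge \tilde Z - G(M_t)$ with $\tilde Z$ integrable) and yields $\bE N_t\le N_0$. You correctly flagged this step as the main obstacle, but your description of how $G$ helps is inverted: $G$ does not furnish an integrable lower bound---rather, its submartingale property lets you control the limit of expectations along the localizing sequence.
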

\begin{proof}
Let $\cM$ be a market model and $X$ be an admissible strategy. We have $X_T=Z^1+Y_T$, where $Z^1$ is a linear combination of payoffs of traded options and 
$Y_t=\Gamma_t\phi(M_t)S_t + \psi_t/D_t$ satisfies \eqref{eq:admis_cond}. Using \eqref{eq:psi_def} it follows that 
$$D_tY_t=\phi(1)S_0+\psi(0,S_0)+\int_0^t \phi(M_u)\id M_u\geq D_tZ-G(M_t)+ \tilde\phi^+S_0M_t+\tilde \psi.$$
We may assume that $G\geq 0$, it suffices to replace $G$ by $G^+$.
$\cM$ is a market model and in particular $Z$ is an integrable random variable.
Since the traditional stochastic integral and our pathwise stochastic integral coincide a.s.\ in $\cM$, see Theorem \ref{prob_conn}, we conclude that $D_tY_t$ is a local martingale and so is $N_t:=D_tY_t-\tilde\phi^+S_0M_t$. We will argue that this implies $\bE N_t\leq N_0$. Let $\rho_n$ be the localising sequence for $N$ so that $\bE N_{t\land \rho_n} = N_0$. In what follows all the limits are taken as $n\to \infty$. Fatou's lemma shows that $\bE N^+_t\leq \liminf \bE N_{t\land \rho_n}^+$ and $\bE N^-_t\leq \liminf \bE N_{t\land \rho_n}^-$. By Jensen's inequality the process $G(M_t)$ is a submartingale, in particular the expectation is increasing and $\bE G(M_t)\leq \bE G(M_T)$ which is finite since $\cM$ is a market model. Using the Fatou lemma we have
\[ \lim \bE G(M_{t\wedge\rho_n})\leq\bE G(M_t)=\bE \liminf G(M_{t\wedge\rho_n})\leq \liminf \bE G(M_{t\wedge\rho_n}),\]
showing that $\lim \bE G(M_{t\land \rho_n})=\bE G(M_t)$. Observe that $N_t$ is bounded below by $\tilde Z - G(M_t)$, where $\tilde Z$ is an integrable random variable and $G$ is convex. Using Fatou lemma again we can write
$$\bE G(M_t)^++\bE \tilde Z^- - \liminf \bE N^-_{t\land \rho_n}= \liminf \bE\left[G(M_{t\land \rho_n})^++\tilde Z^--N^-_{t\land \rho_n}\right]\geq \bE\left[ G(M_t)^++ \tilde Z^- - N^-_t\right]
$$
which combined with the above gives $\bE N^-_t = \liminf \bE N_{t\land \rho_n}^-$ and in consequence $\bE N_t\leq \liminf \bE N_{t\land \rho_n} = N_0$, as required. This shows that $\bE [D_TY_T]\leq Y_0$. Since in a market model expectations of discounted payoffs of the traded options coincide with their initial prices it follows that $\bE[D_TX_T]\leq X_0\leq 0$.  In particular if $X_T\geq 0$ and $X_0\leq 0$ then $X_T=0$ a.s.\ and the prices do not admit a weak arbitrage. 

\hfill$\square$
\end{proof}

Having extended the notions of (admissible) trading strategy and arbitrage, we can now state the main theorem concerning robust pricing of weighted variance swaps.
 It is essentially a consequence of the hedging relation \eqref{eq:wRVhedge} and the results of Section \ref{sec-cp}. 
\begin{theorem}\label{thm:wVS_arb}
Suppose in the market which satisfies assumptions (i)--(iv) the following are traded at time zero: $n$ put options with prices $P_i$, a $w$--weighted variance swap with payoff \eqref{eq:wRVpayoff} and a European option with payoff $\lam_T(S_T)=F_T\lam_w(M_T)$ and price $P_{\lam_w}$. Assuming the put prices do not admit a weak arbitrage, the following are equivalent
\begin{enumerate}
\item 
The `option' prices (European options and weighted variance swap) do not admit a weak arbitrage.
\item $P_1,\ldots,P_n,P_{\lam_w}$ do not admit a weak arbitrage and   
\begin{equation}\label{eq:vs_euro}
\prv=\frac{2 P_{\lam_w}}{D_TF_T}-2\lam_w(1).
\end{equation}
\item A market model for all $n+2$ options exists.
\end{enumerate}
\end{theorem}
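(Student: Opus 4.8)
The plan is to establish the cycle $(3)\Rightarrow(1)\Rightarrow(2)\Rightarrow(3)$, working throughout in the normalised units of Section~\ref{sec-cp}. The single tool needed is the pathwise hedging identity of Lemma~\ref{lem:hedge_thm}: since $F$ is continuous and of bounded variation, $\langle\ln M\rangle=\langle\ln S\rangle$ and $w(S_t/F_t)=w(M_t)$, so \eqref{eq:wRVhedge} may be rewritten as
\begin{equation}\label{eq:pwh}
RV^w_T \;=\; 2\lam_w(M_T)-2\lam_w(1)-2\int_0^{T}\lam_w'(M_u)\,\id M_u ,
\end{equation}
an identity valid for \emph{every} price path in $\cL^+$. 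Its running version gives $\int_0^{t}\lam_w'(M_u)\,\id M_u=\lam_w(M_t)-\lam_w(1)-\tfrac12\int_{[0,t]}w(M_u)\,\id\langle\ln M\rangle_u\le\lam_w(M_t)-\lam_w(1)$ because $w\ge 0$; this one-sided bound is what keeps the portfolios below admissible. The implication $(3)\Rightarrow(1)$ is then immediate: by Lemma~\ref{lem:mm->na} applied to the family of the $n$ puts, the European option and the weighted variance swap, a market model for these $n+2$ options makes $\bE[D_TX_T]\le X_0$ for every admissible $X$, so no weak arbitrage is possible.

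For $(1)\Rightarrow(2)$: a weak arbitrage realised with static or dynamic positions in the $n$ puts and the European option only is \emph{a fortiori} one in the larger market, so (1) gives that $P_1\clc P_n,P_{\lam_w}$ admit no weak arbitrage. To obtain \eqref{eq:vs_euro}, note that by \eqref{eq:pwh} the variance-swap payoff $RV^w_T-\prv$ is replicated, path by path, by a static holding of two European options, the self-financing dynamic position $-2\lam_w'(M_u)$ in the underlying, and a fixed cash amount. By the one-sided bound above the running value of the dynamic leg exceeds $-2\lam_w(M_t)$ plus a constant, so this replicating portfolio is admissible --- the convex function $2\lam_w$, resp.\ the traded variance-swap payoff, playing the role of $G$, resp.\ $Z$, in \eqref{eq:admis_cond}; this is precisely what the $G(M_t)$ term is there for. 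Equating the time-$0$ prices of these two equal payoffs --- that of the variance swap being zero by the null-entry-cost convention --- yields exactly \eqref{eq:vs_euro}, and were \eqref{eq:vs_euro} to fail, going long the cheaper and short the dearer of the two would be a model-independent, hence weak, arbitrage, contradicting (1).

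For $(2)\Rightarrow(3)$: the payoff $\lam_T(s)=F_T\lam_w(s/F_T)$ is convex, and $P_1\clc P_n,P_{\lam_w}$ admit no weak arbitrage, so Theorem~\ref{convex_payoff} furnishes a market model for the put options in which $P_{\lam_w}=D_T\bE[\lam_T(S_T)]$. It remains to choose this model $\cM$ so that, in addition, (a)~$S\in\cL^+$ $\cM$-a.s., so that $RV^w_T$ is well defined pathwise and, by Theorem~\ref{prob_conn}, coincides $\cM$-a.s.\ with $\int_0^{T}w(M_u)\,\id\langle\ln M\rangle_u$, and (b)~$\int_0^{\cdot}\lam_w'(M_u)\,\id M_u$ is a genuine $\cM$-martingale. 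Granting (a) and (b), taking $\cM$-expectations in \eqref{eq:pwh} gives $\bE[RV^w_T]=2\bE[\lam_w(M_T)]-2\lam_w(1)=2P_{\lam_w}/(D_TF_T)-2\lam_w(1)=\prv$ by \eqref{eq:vs_euro}, so $\cM$ assigns the variance swap its correct time-$0$ value $D_T\bE[RV^w_T-\prv]=0$ and is a market model for all $n+2$ options. To secure (a)--(b), recall that the calibrating marginal $\mu$ (the law of $M_T$) may be taken finitely supported (cf.\ Proposition~\ref{prop:DH}), and that assumption~(iv) together with condition \eqref{c1} rules out mass at the origin, so $\mathrm{supp}\,\mu\subset[m_-,m_+]$ with $m_->0$; realising $M$ by a Skorokhod-type embedding --- e.g.\ a Chacon--Walsh construction, time-changed onto $[0,T)$ --- that keeps $M$ inside $[m_-,m_+]$ gives $M>0$, hence (a), and makes $\lam_w'(M)$ bounded (as $\lam_w\in C^1$), hence (b).

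The delicate step is $(2)\Rightarrow(3)$: passing from ``no arbitrage among the puts and one European option'' to a single model that \emph{simultaneously} reproduces all $n+2$ prices requires the dynamic hedge $\int_0^{\cdot}\lam_w'(M_u)\,\id M_u$ to be a true, not merely local, martingale, which forces a careful choice of the representative model --- $\cL^+$-valued and, for the variance and gamma swaps (where $\lam_w'$ is unbounded near the origin), bounded away from it. By contrast the admissibility bookkeeping in $(1)\Rightarrow(2)$ is routine once one notices that the $G(M_t)$ term in \eqref{eq:admis_cond} was designed for exactly this situation.
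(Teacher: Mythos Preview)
Your cycle $(3)\Rightarrow(1)\Rightarrow(2)\Rightarrow(3)$ matches the paper's scheme, and the treatments of $(3)\Rightarrow(1)$ and $(1)\Rightarrow(2)$ are essentially the paper's. Two points deserve comment.

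For $(2)\Rightarrow(3)$ you take a genuinely different route. The paper takes \emph{any} market model $\cM$ for the puts and the European option furnished by Theorem~\ref{convex_payoff}, applies the It\^o formula up to the exit time $\tau_n$ of $M$ from $(1/n,n)$, takes expectations, and passes to the limit using submartingality of $\lam_w(M_{\tau_n\wedge T})$ (Jensen) together with Fatou and monotone convergence to obtain $\bE[\lam_w(M_T)]=\lam_w(1)+\tfrac12\bE[RV^w_T]$. You instead \emph{choose} the model so that no limiting argument is needed: a finitely supported $\mu$ with $\mathrm{supp}\,\mu\subset[m_-,m_+]\subset(0,\infty)$, realised so that $M$ stays in $[m_-,m_+]$, makes $\lam_w'(M)$ bounded and the F\"ollmer integral a true martingale. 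This is cleaner, but your citation of Proposition~\ref{prop:DH} is not sufficient: that result only produces a finitely supported $\mu$ matching the \emph{put} prices. To also hit $P_{\lam_w}$ you need the construction inside the proof of Theorem~\ref{convex_payoff}, where the calibrating $\mu$ is a convex combination of near-extremal measures for $D_{\mathrm{LB}}$ and $D_{\mathrm{UB}}$; Lemma~\ref{l1} and the explicit form of $\mu_z$ in Section~\ref{sec-upper} are what make those---and hence their combination---finitely supported and, under \eqref{c1}, without mass at $0$.

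For $(1)\Rightarrow(2)$, the admissibility check as written covers only one sign of the discrepancy in \eqref{eq:vs_euro}. The bound $\int_0^t\lam_w'(M_u)\,\id M_u\le\lam_w(M_t)-\lam_w(1)$ handles the portfolio with dynamic leg $-2\lam_w'(M)$ via the $G$-term; for the \emph{reverse} portfolio (needed when the inequality goes the other way) the dynamic leg is $+2\lam_w'(M)$ and the lower bound comes from $\int_0^t w(M_u)\,\id\langle\ln M\rangle_u\le RV^w_T$, with the short variance swap supplying $Z$ in \eqref{eq:admis_cond}. You invoke both $G$ and $Z$, but attach them to the same portfolio; the paper makes the two-sided check explicit by showing that both $X$ and $-X$ are admissible.
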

\begin{remark}
It is true that under \eqref{eq:vs_euro} a market model for $P_1,\ldots,P_n,\prv$ exists if and only if a market model for $P_1,\ldots,P_n,P_{\lam_w}$ exists. By Theorem \ref{convex_payoff}, this is yet equivalent to $P_1,\ldots,P_n,P_{\lam_w}$ being consistent with absence of arbitrage. However it is not clear if this is equivalent to $P_1,\ldots,P_n,\prv$ being consistent with absence of arbitrage. This is because the portfolio of the variance swap and dynamic trading necessary to synthesise $-\lam_T(S_T)$ payoff may not be admissible when $\lam_T(S_T)$ is not a traded option. 
\end{remark}
\begin{remark}
The formulation of Theorem \ref{thm:wVS_arb} involves no-arbitrage prices but these are enforced via robust hedging strategies detailed in the proof. They involve the European option with payoff $\lambda_T(S_T)$ which in practice may not be traded and should be super-/sub- replicated using Propositions \ref{lb} and \ref{prop:UB}.
\end{remark}
\begin{remark}\label{rk:Hobson_Klimmek}.
 It would be interesting to combine our study with the results of \cite{HobsonKlimmek:11} [HK] already alluded to in the Introduction. We note however that this may not be straightforward since the European option constituting the static part of the hedge in HK need not be convex and the variance kernels we consider are not necessarily monotone (in the terminology of HK), for example the Gamma swap $y(\log(y/x))^2$. Finally we note that the bounds in HK are attained by models where quadratic variation is generated entirely by a single large jump which is a radical departure from the assumption of continuous paths. Whether it is possible to obtain sharper bounds which only work for ``reasonable" discontinuous paths is an interesting problem. We leave these challenges to future research.
\end{remark}

\begin{proof}
We first show that $2\Longrightarrow 3$. Suppose that $P_1,\ldots,P_n,P_{\lam_w}$ do not admit a weak arbitrage and let $\cM$ be a market model which prices correctly the $n$ puts and the additional European option with payoff $\lam_T(S_T)$. Note that, from the proof of Theorem \ref{convex_payoff}, $\cM$ exists and may be taken to satisfy (i)--(iv). Consider the It\^o formula \eqref{eq:wRVhedge} evaluated at $\tau_n\land T=\inf\{t\geq 0: M_t\notin (1/n,n)\}\land T$ instead of $T$. The continuous function $\lam_w'$ is bounded on $(1/n,n)$, the stochastic integral is a true martingale and taking expectations we obtain
$$\bE[\lam_w(M_{\tau_n\land T})]=\lam_w(1)+\frac{1}{2}\bE\left[\int_0^{\tau_n\land T}w(M_u)\id \langle M\rangle_u\right].$$
Subject to adding an affine function to $\lam_w$ we may assume that $\lam_w\geq 0$. Jensen's inequality shows that $\lam_w(M_{\tau_n\land T}),\,\,n\geq 2$, is a submartingale. Together with the Fatou lemma this shows that the LHS converges to $\bE[\lam_w(M_T)]$ as $n\to \infty$. Applying the monotone convergence theorem to the RHS we obtain
$$\bE[\lam_w(M_T)]=\lam_w(1)+ \frac{1}{2}\bE\left[RV^w_T\right],$$
where either both quantities are finite or infinite. Since $\cM$ is a market model for puts and $\lam_T(S_T)$, we have $\bE[\lam_w(M_T)]=P_{\lam_w}/(D_TF_T)$. Combining \eqref{eq:vs_euro} with the above it follows that $\bE\left[RV^w_T-\prv\right]=0$ and hence $\cM$ is a market model for puts, $\lam_T(S_T)$ and the $w$--wighted variance swap.

Lemma \eqref{lem:mm->na} implies that $3\Longrightarrow 1$. We note also that, if we have a market model $\cM$ for all the puts, $\lam_w(S_T)$ option and the $w$--wighted variance swap then by the above \eqref{eq:vs_euro} holds. Then Lemma \eqref{lem:mm->na} also implies that $3\Longrightarrow 2$.

It remains to argue that $1\Longrightarrow 2$ i.e.\ that if $P_1,\ldots,P_n,P_{\lam_w}$ are consistent with absence of arbitrage but \eqref{eq:vs_euro} fails then there is a weak arbitrage. Consider a portfolio $X$ with no put options, $\psi(0,S_0)=2D_T(\lam_w(1)-S_0\lam'_w(1))+D_T\prv$ and $\phi(m)=2D_T\lam_w'(m)$. The setup cost of $X$ is $X_0=D_T(\prv+2\lam_w(1))$. By assumption, $\lam_w$ is $C^1$ with $\lam_w''(x)=w(x)/x^2\in \Ltwoloc$ so that can apply Theorem \ref{extended-ito}. Then, using \eqref{eq:psi_def} and \eqref{eq:wRVhedge} we obtain
\begin{equation*}
\begin{split}
X_t&=\Gamma_t\phi(M_t)S_t+\psi_t/D_t=\phi(S_0)S_0/D_t+\psi(0,S_0)/D_t+\frac{1}{D_t}\int_0^t\phi(M_u)\id M_u\\
&=\frac{D_T}{D_t}\left(2\lam_w(1)+2\int_0^t \lam_w'(M_u)\id M_u+\prv\right)\\
&=\frac{2D_T}{D_t}\lam_w(M_t)-\frac{D_T}{D_t}\int_{[0,t]}w(S_u)\id\langle \ln M\rangle_u +\frac{D_T}{D_t}\prv\ .
\end{split}
\end{equation*} 
Observing that $1\geq D_T/D_t\geq D_T$ and $\int_{[0,t]}w(S_u)\id\langle \ln M\rangle_u$ is increasing in $t$ it follows that both $X$ and $-X$ are admissible. 
Suppose first that 
\begin{equation}\label{eq:1case_LamVS}
\prv<\frac{2 P_{\lam_w}}{D_TF_T}-2\lam_w(1).
\end{equation}
Consider the following portfolio $Y$: short $2/F_T$ options with payoff $\lam_T(S_T)$, long portfolio $X$ and long a $w$--weighted variance swap. $Y$ is admissible, the initial cost is
$$Y_0=-2P_{\lam_w}/F_T+X_0= -2P_{\lam_w}/F_T+D_T(\prv+2\lam_w(1))<0,$$ while $Y_T=0$ and hence we have a model independent arbitrage. If a reverse inequality holds in \eqref{eq:1case_LamVS} then the arbitrage is attained by $-Y$.

\hfill $\square$
\end{proof}

\section{Computation and comparison with market data}\label{comp}
\subsection{Solving the lower bound dual problem}
When one of the sufficient conditions given in Proposition \ref{prop:LBdualexist} is satisfied, for existence in the lower bound dual problem, then this problem can be solved by a dynamic programming algorithm. We briefly outline this here, referring the reader to \citet{Rav10} for complete details. For simplicity, we restrict attention to the practically relevant case $\nmin=0, \,\nmax=\infty$. Of course, once the dual problem is solved, the maximal sub-hedging portfolio is immediately determined.

The measure $\mu^{\dag}$ to be determined
satisfies
$$ \int \func(x) \mu^{\dag}(\id x) = \inf_{\mu \in \bM_P} \left\{
  \int_{\bR^+}\func(x) \mu(\id x) \right\}.$$
and we recall from Lemma \ref{l1} that we can restrict our search to measures of the form $\mu(\id x)=\sum_{i=1}^{n+1}w_i\delta_{\chi_i}(\id x)$ where $\chi_i\in[k_{i-1},k_i)$, $w_i\geq0, \sum_iw_i=1$. We denote $\zeta_0=0$ and, for $i\geq1$, $\zeta_i=\sum_1^iw_j$, the cumulative weight on the interval $[0,k_i)$.
For consistency of the put prices $r_1, \ldots, r_n$ with absence of arbitrage, Proposition \ref{prop:DH} dictates that
\[ 
  \zeta_i \in A_i=\left[ \frac{r_i - r_{i-1}}{k_i - k_{i-1}},
    \frac{r_{i+1} - r_{i}}{k_{i+1} - k_{i}} \right]\ \textrm{ for
    $1 \leq i < n$}\]
and
\[ \zeta_{n}\in A_{n}=  \left[ \frac{r_{ n} - r_{{ n}-1}}{k_{ n} - k_{{ n}-1}},
    1 \right].\]
Given $\zeta=(\zeta_1\clc\zeta_n)$ (the final weight is of course $w_{n+1}=1-\zeta_n$), the positions $\chi_i$ are determined by pricing the put options.
We find that when $\zeta_{i-1}< \zeta_i$
\begin{eqnarray} \label{x_val}
& &\chi_i = \chi_i(\zeta_{i-1},\zeta_i) = k_i + \frac{\zeta_{i-1}(k_i - k_{i-1}) - (r_i - r_{i-1})}{\zeta_i -
  \zeta_{i-1}}\ \textrm{ for }\ i = 1,
  \ldots,n, \nonumber\\
& & \chi_{n+1} = \chi_{ n+1}(\zeta_{n}) = k_{ n} +\frac{1 + r_{ n} -
    k_{ n}}{1 - \zeta_{ n}} \nonumber
\end{eqnarray}
The measure corresponding to policy $\zeta$ is thus \be \sum_{i=1}^n(\zeta_i -\zeta_{i-1})\delta_{\chi_i(\zeta_{i-1}, \zeta_i)} + (1 -
\zeta_n)\delta_{\chi_{n+1}(\zeta_n)}.  \label{meas_policy} \ee 
It follows that the minimisation problem $\inf_{\mu \in \bM_P}
\int_{\bK} \func(x) \mu(\id x)$ has the same value as
\be v_0= \inf_{\zeta_{1} \in A_{1}}\ldots\inf_{\zeta_{n} \in A_{n}}
  \Bigg\{ \sum_{i=1}^{n}(\zeta_i - \zeta_{i-1}) \func(\chi_i(\zeta_{i-1}, \zeta_i))  + (1 -\zeta_{{n}})\func(\chi_{{n}+1}(\zeta_{n})) \Bigg\} .
 \label{minimizer} \ee
We can solve this by backwards recursion as follows. Define
\be\label{dp} \begin{array}{rcl}V_n(\zeta_n)&=&(1-\zeta_n)\chi_{n+1}(\zeta_n)\\
V_j(\zeta_j)&=&\inf_{\zeta_{j+1}\in A_{j+1},\, \zeta_{j+1}\geq\zeta_j}
\{ (\zeta_{j+1} - \zeta_{j}) \func(\chi_i(\zeta_{j}, \zeta_{j+1}))  + V_{j+1}(\zeta_{j+1})\},\,j=n-1\clc0.\end{array}\ee
Then $V_0(0)=v_0$. For a practical implementation one has only to discretize the sets $A_j$, and then \eqref{dp} reduces to a discrete-time, discrete-state dynamic program in which the minimization at each step is just a search over a finite number of points.  

\subsection{Market data} The vanilla variance swap is actively traded in the over-the-counter
(OTC) markets.  We have collected variance swap and European option
data on the S\&P 500 index from the recent past.  Using put option
prices, the lower arbitrage-bound for the variance swap rate is
computed in each case, and summarised in Table \ref{chwrvtable:data}.
One sees that the traded price of the variance swap frequently lies
very close to the lower bound.  Nonetheless, under our standing assumptions of frictionless markets, all but one of the prices were consistent with absence of
arbitrage. The crash in October 2008 of the S\&P 500, and indeed
the financial markets in general, gave rise to significant increase in
expected variance, which can be seen in the cross-section of data studied. One data point, the 3-month contract on 20/12/2008, lies below our lower bound and, at first sight, appears to represent an arbitrage opportunity. However, this data point should probably be discarded. First, practitioners tell us that this was a day of extreme disruption in the market, and indeed the final column of Table \ref{chwrvtable:data}, giving the number of traded put prices used in the calculations, shows that December 19 and 20 were far from typical days. Second, the fact that the quotes for 19 and 20 December are exactly the same makes it almost certain that the figure for 20 December is a stale quote, not a genuine trade. It is a positive point of our method that we are able to pick up such periods of market dislocation, just from the raw price data.

A further point relates to our discussion in Section \ref{sec-wvs} about the weight function $w$ and its relation to the contract weight $\hat{w}$. This is a moot point here, since $w=\hat{w}=1$, but note from the final column in Table \ref{chwrvtable:data} that Libor rates are generally around $3\%$ (albeit with some outliers), while the S\&P500 dividend yield increased from around $2.0\%$ to around $3.1\%$ over the course of 2008. Since the rate closely matches the dividend yield we have $F_t=S_0$ to a close approximation for $t$ up to a few months. 

\begin{table}[ht]
\caption{Historical variance swap (VS) quotes for the S\&P 500 index and
    the lower bound (LB) for it, implied by the bid prices of liquid European put
    options with the same maturity.  The units for the variance price
    and LB are volatility percentage points, $100 \times \sqrt{P^{\text{\tiny VS}}_T}$, and M stands for months.  The European option price data is courtesy
    of UBS Investment Bank, and the variance-swap data was provided by Peter Carr.}\label{chwrvtable:data}
\begin{center}    
\begin{tabular}{|c|c|c|c|c|c|}
 \hline
Term & Quote date & VS quote & LB & \# puts&Libor\\
\hline
2M & 21/04/2008 & 21.24 & 20.10 & 50&2.79\\
2M & 21/07/2008 & 22.98 & 22.51 & 50&2.79\\
2M & 20/10/2008 & 48.78 & 46.58 & 93&4.06\\
2M & 20/01/2009 & 52.88 & 47.68 & 82&1.21\\
\hline                  
3M & 31/03/2008 & 25.87 & 23.59 & 42&2.78\\
3M & 20/06/2008 & 22.99 & 21.21 & 46&2.76\\
3M & 19/09/2008 & 26.78 & 25.68 & 67&3.12\\
3M & 19/12/2008 & 45.93 & 45.38 & 112&1.82\\
3M & 20/12/2008 & 45.93 & 65.81 & 137&1.82\\
\hline                          
6M & 24/03/2008 & 25.81 & 25.34 & 33&2.68\\
6M & 20/06/2008 & 23.38 & 23.20 & 38&3.10\\
\hline
\end{tabular}
\end{center}
\end{table}
\appendix
\section{The Karlin \& Isii Theorem}\label{sec-app}
Let $a_1, \ldots, a_m, f$ be real-valued, continuous functions on $\bK \subset \mathbb{R}^d$ and let $\mathbf{M}$ denote the collection of all finite Borel measures $\mu$ on $\bK$ fulfilling the integrability conditions $ \int_{\bK} |a_i(\bx)| \mu(\id \bx) < \infty$ for
  $i=1, \ldots, m$.  
  For a fixed vector $\mathbf{b}=(b_1, \ldots, b_m)\tr$, and letting $\mathbf{a}(\bx) = (a_1(\bx), \ldots, a_m(\bx))\tr $ for $\bx \in \bK$, consider the optimisation problem
$$ (P):\, \sup_{\mathbf{y} \in \mathbb{R}^{m}}
\mathbf{y}\tr \mathbf{b}\quad \textrm{s.t.} \quad \mathbf{y}\tr \mathbf{a}(\bx) \leq f(\bx)\, \textrm{ }  \forall \bx \in \bK.$$
Now, define 
$$ (D):\, \inf_{\mu \in \mathbf{M}}  \int_{\bK} f(\bx) \mu( \id
\bx)\quad \textrm{s.t.}\quad \int_{\bK} \mathbf{a}(\bx) \mu( \id \bx) = \mathbf{b},$$
where the constraint should be interpreted as $ \int_{\bK} a_i(\bx) \mu( \id \bx) = b_i,\, i=1, \ldots, m$.
The values of the problems $(P)$ and $(D)$  will respectively
be denoted by $V(P)$ and $V(D)$. Finally, $M_m \subset \R^m$ will denote the moment
cone defined by
\be M_m = \left\{ \tilde\bb = (\tilde b_1, \ldots, \tilde b_m)\tr|\, \tilde b_i = \int_{\bK}a_i(\bx) \mu(\id \bx),\, i=1, \ldots, m,\, \mu \in \mathbf{M} \right\} .\label{mc}\ee

\begin{theorem}\label{chconvthm:isii}
Suppose
\begin{enumerate}
\item $a_1, \ldots, a_m$ are linearly independent over $\bK$,
\item $\mathbf{b} $ is an interior point of $ M_m$ and
\item $V(D)$ is finite.
\end{enumerate}
Then
$V(P) = V(D)$ and $(P)$ has a solution.
\end{theorem}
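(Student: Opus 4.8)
The plan is to run the standard separating-hyperplane argument for conic linear programs, carried out in $\bR^{m+1}$. First I record \emph{weak duality}: if $\mathbf{y}$ is feasible for $(P)$ and $\mu$ is feasible for $(D)$ then $f-\mathbf{y}\tr\mathbf{a}\geq 0$ on $\bK$; since $\mathbf{y}\tr\mathbf{a}$ is $\mu$-integrable (each $a_i$ is), so is $f^-$, and integrating the inequality gives $\int_\bK f\,\id\mu\geq \mathbf{y}\tr\mathbf{b}$, whence $V(P)\leq V(D)$. It thus suffices to produce a feasible $\mathbf{y}_0$ with $\mathbf{y}_0\tr\mathbf{b}\geq V(D)$.

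Next I would introduce the convex cone
\[
\mathcal{K}=\Big\{\Big(\int_\bK\mathbf{a}\,\id\mu,\ \int_\bK f\,\id\mu\Big):\ \mu\in\mathbf{M},\ \int_\bK|f|\,\id\mu<\infty\Big\}\subseteq\bR^{m+1}
\]
and its upward shadow $\mathcal{C}=\mathcal{K}+\big(\{\mathbf{0}\}\times[0,\infty)\big)$, which is convex. Two elementary observations do the work. (a) For each $\bx\in\bK$ the Dirac mass $\delta_\bx$ belongs to $\mathbf{M}$ with $\int|f|\,\id\delta_\bx<\infty$ (the $a_i,f$ being finite and continuous), so $(\mathbf{a}(\bx),f(\bx))\in\mathcal{K}$. (b) Since the constraint in $(D)$ is the \emph{equality} $\int\mathbf{a}\,\id\mu=\mathbf{b}$, the definition of $V(D)$ as an infimum forces $(\mathbf{b},r)\notin\mathcal{C}$ for every $r<V(D)$, so $(\mathbf{b},V(D))\notin\operatorname{int}\mathcal{C}$ (while a minimising sequence shows it lies in $\overline{\mathcal{C}}$, using that $V(D)$ is finite). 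Supporting $\mathcal{C}$ by a hyperplane through this boundary point gives $(\mathbf{y}_0,-\eta)\in\bR^{m+1}\setminus\{0\}$ with
\[
\mathbf{y}_0\tr\mathbf{c}-\eta s\ \leq\ \mathbf{y}_0\tr\mathbf{b}-\eta\,V(D)\qquad\text{for all }(\mathbf{c},s)\in\mathcal{C}.
\]

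The crux is showing $\eta>0$. Sending $s\to\infty$ along $\{\mathbf{c}\}\times[0,\infty)\subseteq\mathcal{C}$ forces $\eta\geq0$. If $\eta=0$ then $\mathbf{y}_0\neq0$ and $\mathbf{y}_0\tr\mathbf{c}\leq\mathbf{y}_0\tr\mathbf{b}$ for every $\mathbf{c}$ in the projection of $\mathcal{C}$ onto $\bR^m$. For an arbitrary $\mu\in\mathbf{M}$ (possibly with $\int|f|\,\id\mu=\infty$) the truncations $\mu_R=\mu|_{\{|f|\leq R\}}$ lie in $\mathbf{M}$ with $\int|f|\,\id\mu_R<\infty$ (as $\mu$ is finite) and $\int\mathbf{a}\,\id\mu_R\to\int\mathbf{a}\,\id\mu$; hence that projection is dense in the moment cone $M_m$ of \eqref{mc}, and the inequality extends to all $\mathbf{c}\in M_m$. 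But $\mathbf{b}\in\operatorname{int}M_m$ by hypothesis (2) (hypothesis (1) being what makes $M_m$ full-dimensional, so that (2) is not vacuous), hence $\mathbf{b}+\varepsilon\mathbf{y}_0\in M_m$ for small $\varepsilon>0$, contradicting $\mathbf{y}_0\tr(\mathbf{b}+\varepsilon\mathbf{y}_0)\leq\mathbf{y}_0\tr\mathbf{b}$. Therefore $\eta>0$, and I rescale so that $\eta=1$.

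Then I read off the theorem. On the cone $\mathcal{K}$ the functional $\mathbf{y}_0\tr\mathbf{c}-s$ is bounded above, hence $\leq0$ there; at $(\mathbf{a}(\bx),f(\bx))$ this says $\mathbf{y}_0\tr\mathbf{a}(\bx)\leq f(\bx)$ for all $\bx\in\bK$, i.e.\ $\mathbf{y}_0$ is feasible for $(P)$. Putting $(\mathbf{c},s)=(\mathbf{0},0)\in\mathcal{C}$ in the hyperplane inequality gives $\mathbf{y}_0\tr\mathbf{b}\geq V(D)$. With weak duality this yields $V(D)\leq\mathbf{y}_0\tr\mathbf{b}\leq V(P)\leq V(D)$, so $V(P)=V(D)$ and $\mathbf{y}_0$ attains the supremum. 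I expect the only delicate step to be the non-verticality $\eta>0$: it is the sole place where hypotheses (1)--(2) are used, and it hinges on the truncation argument passing from finitely-$f$-integrable measures to all of $M_m$; hypothesis (3) enters earlier, ensuring $(\mathbf{b},V(D))$ is a genuine (finite) point of $\bR^{m+1}$ at which to separate.
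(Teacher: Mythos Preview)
Your proof is correct and follows essentially the same route as the paper's: separate the point $(\mathbf{b},V(D))$ from an enlarged moment cone in $\bR^{m+1}$ by a hyperplane, then use the interior-point hypothesis on $\mathbf{b}$ to force the last coordinate of the normal to be strictly positive, whence the normal furnishes an optimal primal solution. The only cosmetic differences are that you work with the upward shadow $\mathcal{C}=\mathcal{K}+(\{0\}\times[0,\infty))$ and obtain $\eta\geq 0$ by letting $s\to\infty$, whereas the paper works directly with $\bar{M}$ and obtains $z_{m+1}\geq 0$ by noting $(\mathbf{b},V(D)-\delta)\notin\bar{M}$; your added truncation step (passing from $f$-integrable measures to all of $M_m$) makes explicit a point the paper leaves implicit.
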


\textsc{Remarks}
(i) The beauty in the Karlin \& Isii theorem is that the proof draws upon no more than a finite dimensional separating hyperplane theorem.  Proofs can be found in  \citet{glashoff:SIP} and Karlin \& Studden \citet[Chapter XII, Section 2]{karlin_studden}.  We follow the latter below.

(ii) Condition 1 of the theorem ensures that the moment cone $M_m$ of \eqref{mc} is $m$-dimensional, i.e. not contained in some lower-dimensional hyperplane. 

\medskip\emph{Proof of Theorem \ref{chconvthm:isii}.} 
Define the enlarged moment cone
$$ M = \left\{ \tilde \bb =(\tilde b_1, \ldots, \tilde b_{m+1})\tr :\ \tilde \bb=\int_{\bK} \left( \begin{array}{c} \mathbf{a}(\bx) \\ f(\bx) \end{array} \right)\mu(\id \bx),\ \mu \in \mathbf{M} \right\},$$
and let $\bar{M}$ denote its closure. Then $\bar{M}$ is a closed convex cone and, moreover, the vector $(b_1, \ldots, b_m, V(D))$ lies on its boundary.  There exists a supporting hyperplane to $\bar{M}$ through this vector, specified by real constants $\{z_i\}_1^{m+1}$ not all zero, such that
\begin{eqnarray}
\sum_{i=1}^m z_i b_i + z_{m+1} V(D) &=& 0 \label{chappeq:sh}\\
\textrm{and} \quad \sum_{i=1}^{m+1} z_i \tilde b_i & \geq& 0, \quad \forall \tilde \bb \in \bar{M}. \label{chappeq:hs}
\end{eqnarray}
In particular, on considering Dirac measures, one has
\be \sum_{i=1}^m z_i a_i(\bx) + z_{m+1} f(\bx) \geq 0, \quad \forall \bx \in \bK. \label{chappeq:diracs} \ee
We now show that $z_{m+1} > 0$.  Indeed, for any $\delta > 0$, the vector $(b_1, \ldots, b_m, V(D)-~ \delta)$ lies in the half-space complimentary to \eqref{chappeq:hs}.  Therefore
$$ \sum_{i=1}^m z_i b_i + z_{m+1} (V(D) - \delta) < 0, \quad \forall \delta > 0.   $$
This clearly implies $z_{m+1} \geq 0$.  However, $z_{m+1}=0$ is not possible, since this would contradict the assumption that $\mathbf{b}$ lies in the interior $M_m$, which is $m$ dimensional.  Thus, it must be that $z_{m+1}>0$.  Then from \eqref{chappeq:diracs}, it follows that
$$ f(\bx) \geq \sum_{i=1}^m \hat{y}_i a_i(\bx), $$
where $\hat{y}_i := -z_i/z_{m+1}$ for $i = 1, \ldots, m$.  Then \eqref{chappeq:sh} becomes
$$ V(D) = \sum_{i=1}^m \hat{y}_i b_i,$$
and this completes the proof.
\hfill$\square$
\section{Pathwise stochastic calculus}\label{sec-pathwise}\nocite{sondermann}
This section describes a non-probabilistic approach to stochastic calculus, due to \citet{Fol81}, that will enable us to define the continuous-time limit of the finite sums \eqref{rv} defining the realized variance, without having to assume that the realized price function $t\mapsto S_t$ is a sample function of a semimartingale defined on some probability space.
 
 For $T>0$ let $\cT=[0,T]$ with Borel sets $\cB_\cT$. A \emph{partition} is a finite, ordered sequence of times $\pi=\{0= t_0<t_1<\cdots<t_k= T\}$ with \emph{mesh size} $m(\pi)=\max_{1\leq j\leq k}(t_j-t_{j-1})$. We fix a nested sequence of partitions $\{\pi_n, n=1,2,\ldots\}$ such that $\lim_{n\to\infty} m(\pi_n)=0$. All statements below relate to this specific sequence. An obvious choice would be the set of dyadic partitions $t^n_j=jT/2^n, j=0\clc 2^n, n=1,2,\ldots$. 
\begin{definition}\label{def-qvp}
A continuous function $X:~\cT\to\bR$ \emph{has the quadratic variation property} if  the sequence of measures on $(\cT,\cB_\cT)$
\[ \mu_n=\sum_{t_j\in\pi_n}(X(t_{j+1})-X(t_j))^2\delta_{t_j}\]
(where $\delta_t$ denotes the Dirac measure at $t$) 
converges weakly to a measure $\mu$, possibly along some sub-sequence. The distribution function of $\mu$ is denoted $\langle X\rangle_t$, and we denote by $\cQ$ the set of continuous functions having the quadratic variation property.
\end{definition}
\begin{theorem}[\citep{Fol81}]\label{fol} For $X\in\cQ$ and  $f\in C^2$, the limit
$$\int_0^t f'(X_s)\id X_s:=\lim_{n\to \infty} \sum_{t_j\in \pi_n}f^\prime(X_{t_j})(X_{t_{j+1}}-X_{t_{j}})$$
is well defined and satisfies the pathwise It\^o formula:
\be f(X_t)-f(X_0)=\int_0^tf^\prime(X_s)\id X_s+\frac{1}{2}\int_0^tf^{\prime\prime}(X_s)\id \langle X\rangle_s.\label{ito}\ee
\end{theorem}
\proof \emph{(Sketch)}. The proof proceeds by writing the  expansion
\be
f(X_t)-f(X_0)=\sum_jf^\prime(X_{t_j})(X_{t_{j+1}}-X_{t_{j}})+\frac{1}{2}\sum_j
f^{\prime\prime}(X_{t_j})(X_{t_{j+1}}-X_{t_{j}})^2+\sum_jR(X_{t_j},X_{t_{j+1}})\label{taylor}\ee
where $R(a,b)\leq\phi(|b-a|)(b-a)^2$ with $\phi$ an increasing function, $\phi(c)\to0$ as $c\to 0$. As $m(\pi)\to0$, the third term on the right of \eqref{taylor} converges to 0 and the second term converges to the second term in \eqref{ito}. This shows that the first term converges, so that essentially the `pathwise stochastic integral' is \emph{defined by \eqref{ito}} for arbitrary $C^1$ functions $f^\prime$.\hfill$\square$

\medskip An important remark is that $X_t$, being continuous, achieves its  minimum and maximum $X_*, \,X^*$ in $[0,T]$, so only the values of $f$ in the compact interval $[X_*,X^*]$ (depending on the path $X$) are relevant in \eqref{ito}.

For the applications in Section \ref{sec-wvs} we need an `It\^o formula' valid when $f''$ is merely locally integrable, rather than continuous as it is in \eqref{ito}. In the theory of continuous semimartingales, such an extension proceeds via local time and the Tanaka formula---see Theorem VI.1.2 of \citet{rogwil}. Here we present a pathwise version, following the diploma thesis of \citet{dipl}. For a $C^2$ function $f$ we have
$ f(b)=f(a)+\int_a^bf^\prime(y)dy$,
and applying the same formula to $f'$ we obtain
\bstar f(b)-f(a)&=& \int_a^b\left(f\p(a)+\int_a^yf\pp(u)\id u\right)\id y 
= f\p(a)(b-a)+\int_a^b(b-u)f\pp(u)\id u \\
&=& f\p(a)(b-a)+\int_{-\infty}^\infty\1_{[a\wedge b,a\vee b]}(u)|b-u|f\pp(u)\id u.
\estar
Hence for any partition $\pi=\{t_j\}$ of $[0,t]$ we have the identity
\be f(X_t)-f(X_0)=\sum_jf\p(X_{t_j})(X_{t_{j+1}}-X_{t_j})+\int_{-\infty}^\infty\sum_j\left(1_{ [X_j^{\min},X_j^{\max}]}(u)|X_{t_{j+1}} -u|\right)f''(\id u),\label{id}\ee
where
$X_j^{\min}=\min\{X_{t_j},X_{t_{j+1}}\}$ and $X_j^{\max}=\max\{X_{t_j},X_{t_{j+1}}\}$.
We define
\be L^\pi_t(u)=2\sum_j1_{[ X_j^{\min},X_j^{\max}]}(u)|X_{t_{j+1}} -u|,\label{Lpi}\ee
and note that $L^\pi_t(u)=0$ for $u\notin(X_*,X^*)$.
\begin{definition}\label{L-def} Let $\cL$ be the set of continuous paths on $[0,T]$ such that the discrete pathwise local time $L^{\pi_n}_t(u)$ converges weakly in $\Ltwo(\id u)$ to a limit $L_t(\cdot)$, for each $t\in[0,T]$. 
\end{definition}
\begin{proposition}\label{LsubQ} $\cL\subset\cQ$. For $X\in\cL$ and $t\in[0,T]$ we have the occupation density formula
\be \int_AL_t(u)\id u=\int_0^t\1_A(X_s)\id\langle X\rangle_s,\qquad A\in\cB(\bR).\label{od}\ee
\end{proposition}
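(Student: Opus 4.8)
The plan is to establish the inclusion $\cL\subset\cQ$ first, and then derive the occupation density formula \eqref{od} by passing to the limit in the discrete identity \eqref{id}. For the inclusion, I would take $X\in\cL$ and a $C^2$ function $f$ and use \eqref{id} together with the weak $\Ltwo$ convergence of $L^{\pi_n}_t(\cdot)$ to $L_t(\cdot)$. The second term on the right of \eqref{id} is precisely $\frac12\int_{\bR} L^{\pi_n}_t(u) f''(u)\,\id u$ (for $f\in C^2$ the measure $f''(\id u)$ is $f''(u)\,\id u$), which by weak convergence tends to $\frac12\int_{\bR} L_t(u) f''(u)\,\id u$; since $L_t$ is supported in the compact $[X_*,X^*]$ and $f''$ is continuous there, the pairing is legitimate even though $f''$ need not be globally $\Ltwo$ --- one truncates $f$ outside a neighbourhood of $[X_*,X^*]$ without changing either side. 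Hence the first term, $\sum_j f'(X_{t_j})(X_{t_{j+1}}-X_{t_j})$, also converges. Specializing to $f(x)=x^2$, so $f''\equiv 2$ and $f'(x)=2x$, gives that $\sum_j 2X_{t_j}(X_{t_{j+1}}-X_{t_j})$ converges, and since $\sum_j(X_{t_{j+1}}-X_{t_j})^2 = X_t^2 - X_0^2 - \sum_j 2X_{t_j}(X_{t_{j+1}}-X_{t_j})$ (the elementary algebraic identity $b^2-a^2 = 2a(b-a) + (b-a)^2$ summed telescopically), the sum of squared increments converges to a limit; a standard argument (as in \citet{Fol81}) upgrades this numerical convergence over $[0,t]$ to weak convergence of the measures $\mu_n$ to some $\mu$ with distribution function $\langle X\rangle_t$, so $X\in\cQ$.

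For the occupation density formula, the idea is that \eqref{od} is exactly the statement that, for every bounded measurable $g$,
\[
\int_{\bR} g(u) L_t(u)\,\id u = \int_0^t g(X_s)\,\id\langle X\rangle_s .
\]
I would first prove this for $g = f''$ with $f\in C^2$: combining the just-established convergences in \eqref{id} with F\"ollmer's It\^o formula \eqref{ito} applied to the same $f$ (valid since $X\in\cQ$), and cancelling the common limit $\int_0^t f'(X_s)\,\id X_s$ of the first terms, yields $\frac12\int_{\bR} f''(u) L_t(u)\,\id u = \frac12\int_0^t f''(X_s)\,\id\langle X\rangle_s$. Since functions of the form $f''$ with $f\in C^2$ (equivalently, arbitrary continuous functions) are dense enough --- any continuous $g$ on $[X_*,X^*]$ arises this way --- and both sides are, for fixed $t$, finite positive measures in the variable $g$ (the left side has total mass $\int L_t(u)\,\id u$, the right side $\langle X\rangle_t$, and they agree on continuous $g$), a monotone-class / Riesz-representation argument extends the identity to all $g=\1_A$, $A\in\cB(\bR)$, which is \eqref{od}.

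The step I expect to be the main obstacle is the rigorous handling of the limit in the second term of \eqref{id} when $f''$ is only continuous on the compact range of $X$ but not an $\Ltwo(\id u)$ function globally, matched against a truncation argument that must be compatible with the weak-$\Ltwo$ convergence hypothesis (which is stated for a fixed path, so $[X_*,X^*]$ is a fixed compact set and the truncation can be done once and for all). A secondary technical point is the passage from pointwise-in-$t$ convergence of $\sum_j(X_{t_{j+1}}-X_{t_j})^2$ to weak convergence of the full measures $\mu_n$: one needs this for all $t$ simultaneously and then invokes the fact that a sequence of (sub)distribution functions converging at every point of an interval, with a continuous limit, gives weak convergence of the associated measures --- here continuity of the limit follows from continuity of $X$ and the structure of $\mu$. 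Both are routine in spirit but deserve care; everything else is bookkeeping with the identity \eqref{id} and the already-proved Theorem \ref{fol}.
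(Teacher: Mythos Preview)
Your proposal is correct and follows essentially the same route as the paper's proof: both use the identity \eqref{id} with $f\in C^2$, exploit weak $\Ltwo$ convergence of $L^{\pi_n}_t$ to pass to the limit in the right-hand side, specialise to $f(x)=x^2$ to deduce $X\in\cQ$, then compare with F\"ollmer's formula \eqref{ito} to obtain \eqref{***} for continuous $g=f''$ and extend to indicators by approximation. The two technical points you flag (truncation of $f''$ to the compact range $[X_*,X^*]$, and the passage from pointwise-in-$t$ convergence of the sums to weak convergence of the measures $\mu_n$) are handled in the paper at the same level of brevity you propose.
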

\proof Suppose $X\in \cL$, and let $\pi_n$ be a sequence of partitions of $[0,T]$ with $m(\pi_n)\to0$. From \eqref{id} we have, for $f\in C^2$ and $t\in[0,T]$,
\be f(X_{\tilde{t}_n})-f(X_0)-\sum_jf\p(X_{t_j})(X_{t_{j+1}}-X_{t_j})=\frac{1}{2}\int_{-\infty}^{\infty}
L^{\pi_n}_t(u)f\pp(u)\id u,\label{**}\ee
where $\tilde{t}_n$ is the nearest partition point in $\pi_n$ to $t$. As $n\to\infty$, the first term on the left of \eqref{**} converges to $f(X_t)$ and, since $f\pp\in \Ltwo([X_*,X^*],\id u)$, $X\in\cL$ implies that the right-hand side converges to $(1/2)\int L_t(u)f\pp(u)du$. Hence the `integral' term on the left of \eqref{**} also converges to, say, $I_t$. If we take $f(x)=x^2$ then the left-hand side of \eqref{**} is equal to 
\[\sum_{j:t_{j+1}\leq \tilde{t}_n}X^2_{t_{j+1}}-X^2_{t_{j}}-2X_{t_{j}}(X_{t_{j+1}}-X_{t_{j}})=\sum_{j:t_{j+1}\leq \tilde{t}_n} (X_{t_{j+1}}-X_{t_{j}})^2,\]
showing that (a) $X\in\cQ$ and (b) $I_t=\int_0^tf'(X_s)dX_s$, the F\"ollmer integral of Theorem \ref{fol}. It now follows from \eqref{**} that
\be\int_{-\infty}^{\infty}L_t(u)f\pp(u)\id u=\int_0^tf\pp(X_s)\id\langle X\rangle_s,\label{***}\ee
i.e., for any continuous function $g$ we have
\[\int_{-\infty}^{\infty}L_t(u)g(u)\id u=\int_0^tg(X_s)\id\langle X\rangle_s.\]
Approximating the indicator function $1_A$ by continuous functions and using the monotone convergence theorem, we obtain \eqref{od}.\hfill$\square$

\medskip For the next result, let $\cW_2$ be the set of functions $f$ in $C^1(\bR)$ such that $f'$ is weakly differentiable with derivative $f''$ in $\Ltwoloc(\bR)$.
\begin{theorem}\label{extended-ito} If $X\in \cL$ the F\"ollmer integral extends in such a way that the pathwise Ito formula \eqref{ito} is valid for $f\in\cW_2$.
\end{theorem}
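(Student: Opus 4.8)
\emph{Proof strategy.} The plan is to repeat the discrete approximation behind Proposition~\ref{LsubQ} once more, now for $f\in\cW_2$ in place of $f\in C^2$, and to \emph{define} the extended F\"ollmer integral as the resulting limit. Fix $X\in\cL$ and $t\in[0,T]$, write $X_*=\min_{[0,T]}X$ and $X^*=\max_{[0,T]}X$, and recall that every $L^{\pi_n}_t$ as well as $L_t$ vanishes outside $(X_*,X^*)$ and that $L^{\pi_n}_t\to L_t$ weakly in $\Ltwo(\id u)$.

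The first step is to observe that the discrete identity \eqref{**} is still valid for $f\in\cW_2$. Indeed $f''\in\Ltwoloc\subset\mathrm{L}^1_{\mathrm{loc}}$, so $f'$ is locally absolutely continuous with $f'(y)=f'(a)+\int_a^yf''(u)\,\id u$, and Fubini gives the Taylor identity $f(b)-f(a)=f'(a)(b-a)+\int_a^b(b-u)f''(u)\,\id u$; this is the one ingredient beyond the $C^2$ case needed to derive \eqref{id}, and hence also \eqref{**}. Thus, for every $n$,
\[ f(X_{\tilde t_n})-f(X_0)-\sum_jf'(X_{t_j})(X_{t_{j+1}}-X_{t_j})=\frac12\int_{-\infty}^{\infty}L^{\pi_n}_t(u)f''(u)\,\id u. \]
Now I would let $n\to\infty$ term by term. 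By continuity of $X$ and $f$ the first difference converges to $f(X_t)-f(X_0)$. On the right, since all $L^{\pi_n}_t$ are supported in the fixed compact $[X_*,X^*]$ we may replace $f''$ by $f''\mathbf{1}_{[X_*,X^*]}\in\Ltwo(\bR)$ without changing the integral, and weak $\Ltwo$-convergence then gives $\int_{\bR}L^{\pi_n}_t(u)f''(u)\,\id u\to\int_{-\infty}^{\infty}L_t(u)f''(u)\,\id u$. Consequently the sums $\sum_jf'(X_{t_j})(X_{t_{j+1}}-X_{t_j})$ converge along $(\pi_n)$; we \emph{define} their limit to be $\int_0^tf'(X_s)\,\id X_s$, and for $f\in C^2$ this coincides with the F\"ollmer integral of Theorem~\ref{fol}, so the construction genuinely extends it. Finally, by the occupation density formula \eqref{od} the image under $s\mapsto X_s$, $s\in[0,t]$, of the measure $\id\langle X\rangle$ has Lebesgue density $L_t$, which makes it legitimate to \emph{define} $\int_0^tf''(X_s)\,\id\langle X\rangle_s:=\int_{-\infty}^{\infty}L_t(u)f''(u)\,\id u$, a quantity that does not depend on the Lebesgue-a.e.\ choice of representative of $f''$. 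With these two definitions the displayed identity passes to the limit and becomes precisely \eqref{ito} for $f\in\cW_2$.

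The limiting procedure itself is routine; the two points that need care both concern the fact that $f''$ lies only in $\Ltwoloc$. First, $\int_0^tf''(X_s)\,\id\langle X\rangle_s$ must be given an unambiguous, representative-independent meaning, and this is exactly what the occupation density formula of Proposition~\ref{LsubQ} supplies. Second, weak $\Ltwo(\id u)$-convergence of $L^{\pi_n}_t$ is only guaranteed against genuine $\Ltwo(\bR)$ test functions, so we must exploit that the $L^{\pi_n}_t$ share the common compact support $[X_*,X^*]$ in order to test against $f''$; without this uniform support one could not pass to the limit in the second-order term. Consistency with the It\^o formula for $f\in C^2$ is immediate, since there the displayed identity is \eqref{**} and its limit is \eqref{ito}.
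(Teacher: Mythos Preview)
Your proof is correct and takes a genuinely different route from the paper's. The paper proceeds by \emph{mollification}: it smooths $f\in\cW_2$ to $f_n\in C^2$, applies the already-established It\^o formula \eqref{ito} and the identity \eqref{***} to each $f_n$, and then passes to the limit $n\to\infty$ using $f_n''\to f''$ weakly in $\Ltwo$; the extended integral $\int_0^t f'(X_s)\,\id X_s$ is \emph{defined} as the limit of the $C^2$ integrals. You instead go back one level and work directly with the discrete-partition identity \eqref{**}, observing that its derivation only used the Taylor formula with integral remainder, which already holds for $f\in\cW_2$; you then let the partition mesh tend to zero and define the extended integral as the limit of Riemann-type sums $\sum_j f'(X_{t_j})(X_{t_{j+1}}-X_{t_j})$. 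Your approach is slightly more elementary (no auxiliary approximation of $f$) and has the conceptual advantage that the extended integral is still explicitly a limit of discrete forward sums, in keeping with the spirit of F\"ollmer's construction. The paper's approach, on the other hand, treats Theorem~\ref{fol} as a black box and uses a completely standard density argument. Both need the same key observation you single out: the uniform compact support $[X_*,X^*]$ of $L^{\pi_n}_t$ (respectively $L_t$) is what allows one to test weak $\Ltwo$-convergence against the merely $\Ltwoloc$ function $f''$.
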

\proof Let $\phi$ be a mollifier function, a non-negative $C^{\infty}$ function on $\bR$ such that $\phi(x)=0$ for $|x|>1$ and $\int\phi(x)dx=1$, define $\phi_n(x)=n\phi(nx)$ and let $f_n(x)=\int f(x-y)\phi_n(y)dy$. Then $f_n, f'_n$ converge pointwise to $f,f'$ and $f''_n\to f''$ weakly in $\Ltwo$. From Proposition \ref{LsubQ} we know that $X\in\cQ$ and as $f_n\in C^2$ we have, using \eqref{ito} and \eqref{***},
\[ f_n(X_t)-f_n(X_0)=\int_0^tf'_n(X_s)\id X_s+\frac{1}{2}\int_{-\infty}^{\infty}L_t(u)f_n''(u)\id u.\]
As $n\to\infty$, the left-hand side converges to $f(X_t)-f(X_0)$ and the second term on the right converges to $\frac{1}{2}\int_{-\infty}^{\infty}L_t(u)f''(u)du$, and we can define
\[\int_0^tf'_n(X_s)\id X_s=\lim_{n\to\infty}\left\{f_n(X_t)-f_n(X_0)
-\frac{1}{2}\int_{-\infty}^{\infty}L_t(u)f_n''(u)\id u\right\}.\]
It now follows from Proposition \ref{LsubQ} that the It\^o formula \eqref{ito} holds for this extended integral.
\hfill$\square$

\medskip We need one further result. 
\begin{proposition}\label{YfX} Let $X\in\cL$ and let $f:\bR\to\bR$ be a monotone $C^2$ function. Then $Y=f(X)\in\cL$ and the pathwise local times are related by
\be L^Y_t(u)=|f'(f^{-1}(u))|\,L^X_t(f^{-1}(u)).\label{Ylt}\ee
\end{proposition}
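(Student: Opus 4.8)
The plan is to argue directly from Definition~\ref{L-def}, comparing the discrete pathwise local time of $Y=f(X)$ to that of $X$ and exploiting the monotonicity of $f$. Take $f$ strictly increasing for definiteness (the decreasing case only replaces $f'$ by $|f'|$ in a change of variables, and strict monotonicity is what makes $f^{-1}$ meaningful; it holds in the application $f=\ln$). Since $X$ is continuous it ranges over the compact interval $[X_*,X^*]$, so $f'$ and $f''$ are bounded there, say by $B$ and $C$, and $Y$ ranges over $f([X_*,X^*])=[Y_*,Y^*]$; by the remark following \eqref{Lpi} both discrete local times are supported in these compact intervals. The key identity is that monotonicity of $f$ gives $f([X_j^{\min},X_j^{\max}])=[Y_j^{\min},Y_j^{\max}]$, hence $\1_{[Y_j^{\min},Y_j^{\max}]}(f(w))=\1_{[X_j^{\min},X_j^{\max}]}(w)$, so substituting $u=f(w)$ in \eqref{Lpi} yields $L^{Y,\pi_n}_t(f(w))=2\sum_j\1_{[X_j^{\min},X_j^{\max}]}(w)\,|f(X_{t_{j+1}})-f(w)|$.

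Next I would Taylor-expand each term, $f(X_{t_{j+1}})-f(w)=f'(w)(X_{t_{j+1}}-w)+\tfrac12 f''(\eta_j)(X_{t_{j+1}}-w)^2$ with $\eta_j$ between $w$ and $X_{t_{j+1}}$. Whenever the indicator is nonzero, both $w$ and $X_{t_{j+1}}$ lie in $[X_j^{\min},X_j^{\max}]$, an interval of length $|X_{t_{j+1}}-X_{t_j}|\le\epsilon_n:=\max_j|X_{t_{j+1}}-X_{t_j}|$, and $\epsilon_n\to0$ by uniform continuity of $X$ on $[0,T]$. Applying $\big||a|-|b|\big|\le|a-b|$ term by term then gives the pointwise bound
\[ \big|L^{Y,\pi_n}_t(f(w))-|f'(w)|\,L^{X,\pi_n}_t(w)\big|\ \le\ \tfrac{C}{2}\,\epsilon_n\,L^{X,\pi_n}_t(w),\qquad w\in[X_*,X^*]. \]
Because $X\in\cL$, $L^{X,\pi_n}_t\to L^X_t$ weakly in $\Ltwo(\id w)$, hence $\sup_n\|L^{X,\pi_n}_t\|_{\Ltwo}<\infty$ by the uniform boundedness principle; so the right-hand side above has $\Ltwo(\id w)$-norm of order $\epsilon_n$, which vanishes. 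Since also $|f'|L^{X,\pi_n}_t\to|f'|L^X_t$ weakly in $\Ltwo(\id w)$ (the weak limit persists after multiplication by the bounded function $|f'|$), it follows that $L^{Y,\pi_n}_t\circ f\to|f'|\,L^X_t$ weakly in $\Ltwo(\id w)$.

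It then remains to transfer this statement to the $u=f(w)$ variable. For $g\in\Ltwo(\id u)$ the change of variables $u=f(w)$ gives $\int L^{Y,\pi_n}_t(u)g(u)\,\id u=\int L^{Y,\pi_n}_t(f(w))\,g(f(w))f'(w)\,\id w$; the function $w\mapsto g(f(w))f'(w)$ lies in $\Ltwo(\id w)$ since $f'$ is bounded on the relevant compact set, so the weak convergence just established lets me pass to the limit and obtain $\int|f'(w)|L^X_t(w)g(f(w))f'(w)\,\id w$, which upon changing variables back equals $\int|f'(f^{-1}(u))|\,L^X_t(f^{-1}(u))\,g(u)\,\id u$. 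This is exactly weak $\Ltwo(\id u)$-convergence of $L^{Y,\pi_n}_t$ to $|f'(f^{-1}(\cdot))|\,L^X_t(f^{-1}(\cdot))$, which proves both $Y\in\cL$ and \eqref{Ylt}.

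The steps I expect to require the most care are the uniform error estimate in the second paragraph (one must check that the bound is genuinely uniform in $w$ and in the partition, which is where the compactness of $[X_*,X^*]$ and the boundedness of $f''$ enter) and the legitimacy of the two changes of variables. The latter is immediate when $f'>0$, as for $f=\ln$; for a general strictly monotone $C^2$ $f$ one notes that $\{f'=0\}$ and its image are Lebesgue-null, so \eqref{Ylt} still holds almost everywhere, which is all that is needed for an $\Ltwo$ identity. Beyond this bookkeeping the argument is a routine localisation-plus-Taylor computation riding entirely on the hypothesis $X\in\cL$, and no deeper difficulty is anticipated; one could additionally sanity-check the result against the occupation density formula \eqref{od} together with $\langle Y\rangle_t=\int_0^t (f'(X_s))^2\,\id\langle X\rangle_s$.
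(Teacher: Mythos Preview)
Your proof is correct and follows essentially the same approach as the paper: both use monotonicity of $f$ to identify the indicators, Taylor-expand $f(X_{t_{j+1}})-f(w)$, and bound the quadratic remainder by $\epsilon_n$ times the discrete local time of $X$. Your write-up is in fact more careful than the paper's, which omits the uniform-boundedness step and the explicit change of variables from $\Ltwo(\id w)$ to $\Ltwo(\id u)$ that you spell out.
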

\proof We assume $f$ is increasing---the argument is the same if it is decreasing---and denote $v=f^{-1}(u)$, so $u=f(v)$. For the partition $\pi_n$ we have for fixed $t\in\cT$, from \eqref{Lpi},
\bea L^{Y,\pi_n}_t(u)&=&\sum_j\1_{\{f(X^{\mathrm{min}}_{n,j})\leq f(v)\leq f(X^{\mathrm{max}}_{n,j})\}}|f(X_{t^n_{j+1}})-f(v)|\nonumber\\
&=&\sum_j\1_{\{X^{\mathrm{min}}_{n,j})\leq v\leq X^{\mathrm{max}}_{n,j})\}}|f'(v)(X_{t^n_{j+1}}-v))
+\frac{1}{2}f''(\xi)(X_{t^n_{j+1}}-v)^2|\nonumber\\
&=&f'(v)\sum_j\1_{\{X^{\mathrm{min}}_{n,j})\leq v\leq X^{\mathrm{max}}_{n,j})\}}\left|X_{t^n_{j+1}}-v
+\frac{f''}{2f'}(X_{t^n_{j+1}}-v)^2\right|\label{****}
\eea
for $\xi$ between $v$ and $X_{t^n_{j+1}}$. Noting that $f''(\xi)/2f'(v)$ is bounded for $\xi,v\in[X_*,X^*]$ and that \\$\lim_{n\to\infty}\max_j|X(t^n_{j+1})-X(t^n_j)|=0$, we easily conclude that if $X\in\cL$ then the expression at \eqref{****} converges in $\Ltwo(\id v)$ to $f'(v)L^X_t(v)$, so that $Y\in\cL$ with local time given by \eqref{Ylt}.\hfill$\square$

\medskip We note that in the above result it suffices to assume that $f$ is defined on $[X_*,X^*]$. This allows us to apply the Proposition for $f=\log$ and and stock price trajectories in Section \ref{sec-wvs}. Indeed, from \eqref{Ylt} we obtain the elegant formula
\[ L^{\log X}_t(u)=e^{-u}L^X_t(e^u).\]

Finally, we build the connection between the pathwise calculus and the classical It\^o (stochastic) calculus.
\begin{theorem}\label{prob_conn} Let $(X_t, t\in[0,T])$ be a continuous semimartingale on some complete probability space $(\Omega,\cF,\bP)$. Then there is a set $N\in\cF$ such that $\bP N=0$ and for $\omega\notin N$ the path $t\mapsto X(t,\omega)$ belongs to $\cL$ (and hence to $\cQ$), and $L_t(u)$ defined in Definition \ref{L-def} coincides with the semimartingale local time of $X_t$ at $u$. 
\end{theorem}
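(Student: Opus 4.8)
The plan is to identify the discrete pathwise local time $L^{\pi_n}_t(\cdot)$ with the classical semimartingale local time of $X$ by means of an exact pathwise Tanaka formula, and to control the error in $\Ltwo(\id u)$ by an It\^o-isometry estimate. Decompose $X=X_0+M+A$ with $M$ a continuous local martingale and $A$ continuous of finite variation, and localise (recombining the exceptional null sets at the end) so that $M$ is bounded and $\langle M\rangle_T$ and the total variation $V_T(A)$ are bounded. Applying the identity \eqref{id} --- equivalently \eqref{**} after mollification --- to $f(y)=|y-u|$, for which $f\pp(\id y)=2\delta_u(\id y)$ and $f\p(y)=\mathrm{sgn}(y-u)$, yields for every partition point $t\in\pi_n$ the exact identity
\[ L^{\pi_n}_t(u)=|X_t-u|-|X_0-u|-\sum_{t_j\in\pi_n,\;t_j<t}\mathrm{sgn}(X_{t_j}-u)(X_{t_{j+1}}-X_{t_j}),\qquad u\in\bR \]
(for general $t$ one uses the convention $L^{\pi_n}_t:=L^{\pi_n}_{\tilde t_n}$ with $\tilde t_n\to t$). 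Subtracting the classical Tanaka formula for continuous semimartingales (see \citet{rogwil}), $\ell^u_t=|X_t-u|-|X_0-u|-\int_0^t\mathrm{sgn}(X_s-u)\,\id X_s$, where $\ell^u$ denotes the semimartingale local time of $X$ at level $u$, and observing that the Riemann sum above is the stochastic integral of the simple adapted integrand $s\mapsto\mathrm{sgn}(X_{\phi_n(s)}-u)$ with $\phi_n(s)$ the largest point of $\pi_n$ not exceeding $s$, one gets
\[ L^{\pi_n}_t(u)-\ell^u_t=\int_0^t\big[\mathrm{sgn}(X_s-u)-\mathrm{sgn}(X_{\phi_n(s)}-u)\big]\,\id X_s,\qquad u\in\bR. \]

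Next I would estimate $e_n:=L^{\pi_n}_t(\cdot)-\ell^\cdot_t$ in $\Ltwo(\id u)$; note $e_n\in\Ltwo(\id u)$ since both $L^{\pi_n}_t$ and $\ell^\cdot_t$ are bounded and supported in the (compact, random) range of $X$ over $[0,t]$. Split $\id X=\id M+\id A$. For the martingale part, It\^o's isometry and then the occupation-times formula give, using $|\mathrm{sgn}(X_s-u)-\mathrm{sgn}(X_{\phi_n(s)}-u)|^2=4\,\1\{u\text{ lies strictly between }X_s\text{ and }X_{\phi_n(s)}\}$ and $\int_\bR\1\{u\text{ between }X_s,X_{\phi_n(s)}\}\,\id u=|X_s-X_{\phi_n(s)}|$,
\[ \bE\int_\bR\Big(\int_0^t\big[\mathrm{sgn}(X_s-u)-\mathrm{sgn}(X_{\phi_n(s)}-u)\big]\id M_s\Big)^2\id u=4\,\bE\int_0^t\big|X_s-X_{\phi_n(s)}\big|\,\id\langle M\rangle_s. \]
The finite-variation part is bounded pathwise: $\big|\int_0^t[\cdots]\id A_s\big|\le2\int_0^t\1\{u\text{ between }X_s,X_{\phi_n(s)}\}\,|\id A_s|$, so by Cauchy--Schwarz in $s$ and the same integration in $u$, $\int_\bR\big(\int_0^t[\cdots]\id A_s\big)^2\id u\le4\,V_t(A)\int_0^t|X_s-X_{\phi_n(s)}|\,|\id A_s|$. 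Hence $\bE\|e_n\|^2_{\Ltwo(\id u)}\le C\,\bE\big[\omega_X(m(\pi_n))\,(\langle M\rangle_t+V_t(A))\big]$, where $\omega_X$ is the modulus of continuity of $X$ on $[0,T]$; by continuity of $X$ and the localisation this tends to $0$ as $n\to\infty$.

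To obtain the almost-sure statement I would use Borel--Cantelli. For the standard dyadic (or any geometrically shrinking) choice of $(\pi_n)$ the bound $\bE\|e_n\|^2_{\Ltwo(\id u)}$ is summable, hence $\|e_n\|_{\Ltwo(\id u)}\to0$ a.s., i.e.\ $L^{\pi_n}_t\to\ell^\cdot_t$ strongly --- a fortiori weakly --- in $\Ltwo(\id u)$, for each fixed $t$. Taking the union of the exceptional sets over a countable dense set of times $t$ (including $\bigcup_n\pi_n$) and extending to all $t\in[0,T]$ by continuity of $s\mapsto\ell^\cdot_s$ in $\Ltwo(\id u)$ together with $L^{\pi_n}_t=L^{\pi_n}_{\tilde t_n}$, $\tilde t_n\to t$, we obtain a single null set $N$ off which $X(\cdot,\omega)\in\cL$, with $L_t(\cdot)=\ell^\cdot_t$ for a.e.\ $u$. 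By Proposition \ref{LsubQ} this forces $X(\cdot,\omega)\in\cQ$, and comparing the occupation-density formula \eqref{od} with the classical occupation-times formula identifies the pathwise quadratic variation with the semimartingale bracket of $X$, as asserted.

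The step I expect to be the main obstacle is exactly the last one: the $\Ltwo(\bP)$-estimate yields an almost-sure statement only along a \emph{subsequence} of a general nested sequence $(\pi_n)$, and when $m(\pi_n)\to0$ without a geometric rate the bound $\bE\|e_n\|^2_{\Ltwo(\id u)}$ need not be summable. To reach the full sequence one must either combine the (full-sequence, a.s.) convergence of the smooth-tested integrals $\int_\bR g(u)L^{\pi_n}_t(u)\,\id u=2\big(f(X_{\tilde t_n})-f(X_0)-\sum_jf\p(X_{t_j})(X_{t_{j+1}}-X_{t_j})\big)$, $g=f\pp\in C_c^\infty$, whose limit is $\int_0^t g(X_s)\,\id\langle X\rangle_s=\int_\bR g(u)\ell^u_t\,\id u$ by the pathwise It\^o formula (Theorem \ref{fol}) and the occupation-times formula, with an a.s.\ uniform-in-$n$ bound on $\|L^{\pi_n}_t\|_{\Ltwo(\id u)}$, or else establish that bound directly from pathwise estimates on the crossing structure of $X$; this is the delicate point, carried out in \citet{dipl}.
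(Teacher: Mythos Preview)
Your proposal is correct and, in fact, considerably more detailed than the paper's own proof, which is only a sketch: the paper notes that $X\in\cQ$ a.s.\ (via convergence in probability of the discrete quadratic variation, hence a.s.\ along a subsequence, which is all Definition~\ref{def-qvp} requires), records that the pathwise and It\^o integrals coincide a.s.\ (F\"ollmer), and then simply \emph{defers} the crucial step --- a.s.\ convergence of $L^{\pi_n}_t(\cdot)$ to the semimartingale local time in $\Ltwo(\id u)$ along the full sequence --- to the unpublished diploma thesis \cite{dipl}, without any indication of method.

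Your route supplies exactly that missing argument. The discrete Tanaka identity for $L^{\pi_n}_t(u)$ is exact (it follows from \eqref{id} applied to $f(y)=|y-u|$, for which the derivation goes through by direct computation even though $f\notin C^2$), the subtraction of the classical Tanaka formula to write the error as a stochastic integral of a simple integrand is correct, and the $\Ltwo(\bP\times\id u)$ estimate via It\^o isometry and the occupation-times formula is the standard and right way to proceed. You also identify the genuine obstacle precisely: the $\Ltwo(\bP)$ bound gives a.s.\ convergence only along a subsequence of a general nested $(\pi_n)$, whereas Definition~\ref{L-def} demands full-sequence weak $\Ltwo(\id u)$ convergence. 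Your proposed reduction --- full-sequence a.s.\ convergence of the smooth-tested integrals (which holds by the pathwise It\^o formula for $C^2$ functions) together with an a.s.\ uniform-in-$n$ bound on $\|L^{\pi_n}_t\|_{\Ltwo(\id u)}$ --- is exactly the right one, and that uniform bound is what \cite{dipl} provides. So your argument and the paper's end at the same place, but yours shows the route.

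Two small cautions. First, the summability claim for ``any geometrically shrinking'' $(\pi_n)$ needs a quantitative modulus-of-continuity estimate for $X$; after localisation this follows from Kolmogorov's criterion, but it should be said. Second, the paper fixes an \emph{arbitrary} nested sequence $(\pi_n)$ at the start of Appendix~\ref{sec-pathwise}, so the dyadic case alone does not give the theorem as stated; this is not a defect of your proof relative to the paper's, since both ultimately defer the general case to \cite{dipl}.
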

\proof First, $X_t\in\cQ$ a.s. Indeed, Theorem IV.1.3 of \cite{revuz_yor} asserts that discrete approximations to the quadratic variation always converge in probability, so a sub-sequence converges almost surely, showing that $X_t\in\cQ$ in accordance with Definition \ref{def-qvp}. It is shown by \cite{Fol81} that the It\^o integral and the pathwise integral defined by \eqref{ito} coincide almost surely. Every semimartingale $S_t$ has an associated local time which satisfies the occupation density formula \eqref{od}. It remains to show that with probability 1 the discrete approximations $L^{\pi}_t$ defined by \eqref{Lpi} converge in $\Ltwo(\id u)$. This is proved in \cite{dipl}, by detailed estimates which we cannot include here.
\hfill$\square$

\bibliographystyle{elsart-harv}

\begin{thebibliography}{26}
\expandafter\ifx\csname natexlab\endcsname\relax\def\natexlab#1{#1}\fi
\expandafter\ifx\csname url\endcsname\relax
  \def\url#1{\texttt{#1}}\fi
\expandafter\ifx\csname urlprefix\endcsname\relax\def\urlprefix{URL }\fi


\bibitem[{Acciaio et~al.(2011)Acciaio, F\"ollmer, and
  Penner}]{AcciaioFollmerPenner:11}
Acciaio, B., H. F\"ollmer and I. Penner (2011): 'Risk assessment for uncertain
  cash flows: Model ambiguity, discounting ambiguity, and the role of bubbles,'
  \emph{Finance and Stochastics} 16(4): 669--709.

\bibitem[{Bick and Willinger(1994)}]{BicWil94}
Bick, A. and W. Willinger (1994): 'Dynamic spanning without probabilities,'
 \emph{Stochastic Proc. Appl.} 50(2), 349--374.

\bibitem[{Cont(2006)}]{Cont:06}
Cont, R. (2006): 'Model uncertainty and its impact on the pricing of derivative
  instruments,' \emph{Math. Finance} 16(3), 519 -- 547.

\bibitem[{Cox and Ob{\l}{\'o}j(2011{\natexlab{a}})}]{CoxObloj:11}
Cox, A. M.~G. and J. Ob{\l}{\'o}j (2011{\natexlab{a}}): 'Robust hedging of double
  touch barrier options,' \emph{SIAM Journal on Financial Mathematics} 2, 141--182.

\bibitem[{Cox and Ob{\l}{\'o}j(2011{\natexlab{b}})}]{co09dnt}
Cox, A. M.~G. and J. Ob{\l}{\'o}j (2011{\natexlab{b}}): 'Robust hedging of
  double no-touch barrier options,' \emph{Finance and Stochastics} 15(3), 573--605. 
  
\bibitem[{Crosby and Davis(2011)}]{CroDav11}
Crosby, J. and M. Davis (2012): 'Variance derivatives: pricing and convergence,'
Available at http://ssrn.com/abstract=2049278.

\bibitem[{Davis and Hobson(2007)}]{dh05}
Davis, M. and D. Hobson (2007): 'The range of traded option prices,' \emph{Mathematical
  Finance} 17, 1--14.

\bibitem[{Delbaen and Schachermayer(1994)}]{DelbaenSchachermayer:04}
Delbaen, F. and W. Schachermayer (1994): 'A general version of the fundamental
  theorem of asset pricing,' \emph{Math. Ann.} 300(3), 463--520.

\bibitem[{F{\"o}llmer(1981)}]{Fol81}
F{\"o}llmer, H. (1981): 'Calcul d'{I}t\^o sans probabilit\'es,' in: \emph{S\'eminaire de
  Probabilit\'es {XV} ({U}niv. {S}trasbourg, 1979/1980),} vol. 850 of Lecture
  Notes in Math., Springer, Berlin, 143--150, (Available at
  \texttt{www.numdam.org}. English translation in \cite{sondermann}).

\bibitem[{F\"ollmer et~al.(2009)F\"ollmer, Schied, and
  Weber}]{FollmerSchiedWeber:09}
F\"ollmer, H., A. Schied, and S. Weber (2009): 'Robust preferences and robust
  portfolio choice,' in: emph{Mathematical Modelling and Numerical Methods in Finance,}
 eds. P. Ciarlet, A. Bensoussan and Q. Zhang, vol.~15 of Handbook
  of Numerical Analysis. Elsevier, 29--87.

\bibitem[{Gatheral(2006)}]{gatheral}
Gatheral, J. (2006): \emph{The Volatility Surface: a Practitioner's Guide}. Wiley, New
  York.

\bibitem[{Glashoff(1979)}]{glashoff:SIP}
Glashoff, K. (1979): 'Duality theory of semi-infinite programming,' in:
  \emph{Semi-infinite programming ({P}roc. {W}orkshop, {B}ad {H}onnef, 1978),} vol.~15
  of Lecture Notes in Control and Information Sci. Springer, Berlin, 1--16.

\bibitem[{Hansen and Sargent(2010)}]{HansenSargent:10}
Hansen, L.~P. and T.~J. Sargent (2010): 'Wanting robustness in macroeconomics,' in:
  \emph{Handbook of Monetary Economics,} eds. B.~M. Friedman and M. Woodford, vol.~3
  of Handbook of Monetary Economics, Elsevier, 1097 -- 1157.

\bibitem[{Hobson and Klimmek(2011)}]{HobsonKlimmek:11}
Hobson, D. and M. Klimmek (2012): 'Model independent hedging strategies for
  variance swaps,' \emph{Finance and Stochastics} 16(4), 611--649.

\bibitem[{Hobson(1998)}]{hobson_lookback}
Hobson, D.~G. (1998): 'Robust hedging of the lookback option,' \emph{Finance and
  Stochastics} 2(4), 329--347.

\bibitem[{Isii(1960)}]{Isii_1960}
Isii, K. (1960): 'The extrema of probability determined by generalized moments.
  {I}. {B}ounded random variables,' \emph{Ann. Inst. Statist. Math.} 12, 119--134;
  errata, 280.

\bibitem[{Karlin and Studden(1966)}]{karlin_studden}
Karlin, S. and W. Studden (1966): 'Tchebycheff systems: {W}ith applications in
  analysis and statistics,' \emph{Pure and Applied Mathematics,} vol. XV. Wiley
  Interscience.

\bibitem[{Keller-Ressel and Muhle-Karbe(2011)}]{KRMK10}
Keller-Ressel, M. and J. Muhle-Karbe (2012): 'Asymptotic and exact pricing of
  options on variance,' to appear in \emph{Finance and Stochastics}, available at arXiv:1003.5514v3.

\bibitem[{Lyons(1995)}]{Lyons:95}
Lyons, T.~J. (1995): 'Uncertain volatility and the risk-free synthesis of
  derivatives,' \emph{Applied Math Finance} 2(2), 117--133.

\bibitem[{Neuberger(1994)}]{neuberger_94}
Neuberger, A. (1994): 'The log contract: A new instrument to hedge volatility,'
 \emph{Journal of Portfolio Management} 20(2), 74--80.

\bibitem[{Ob{\l}{\'o}j(2004)}]{genealogia}
Ob{\l}{\'o}j, J. (2004): 'The {S}korokhod embedding problem and its offspring,'
 \emph{Probability Surveys} 1, 321--392.

\bibitem[{Raval(2010)}]{Rav10}
Raval, V. (2010): \emph{Arbitrage bounds for prices of options on realized variance,}
 Ph.D. thesis, University of London (Imperial College).

\bibitem[{Revuz and Yor(1994)}]{revuz_yor}
Revuz, D. and M. Yor (1994): \emph{Continuous Martingales and Brownian Motion,}
 Springer-Verlag, Berlin.

\bibitem[{Rogers and Williams(2000)}]{rogwil}
Rogers, L. C.~G. and D. Williams (2000): \emph{Diffusions, {M}arkov processes, and
  martingales}, {V}ol. 2, Cambridge University Press.

\bibitem[{Sondermann(2006)}]{sondermann}
Sondermann, D. (2006): \emph{Introduction to stochastic calculus for finance: A new
  didactic approach,} vol. 579 of Lecture Notes in Economics and Mathematical
  Systemtems. Springer-Verlag, Berlin.

\bibitem[{Wuermli(1980)}]{dipl}
Wuermli, M. (1980): \emph{Lokalzeiten f\"ur {M}artingale}, unpublished diploma
 thesis supervised by Professor H. F{\"o}llmer, Universit\"at Bonn.

\bibitem[{Vovk(2011)}]{Vov11}
Vovk, V. (2012): 'Continuous-time trading and the emergence of probability,'
\emph{Finance and Stochastics}  16(4): 561--609.

\end{thebibliography}

\end{document}